\numberwithin{equation}{section}
\numberwithin{figure}{section}
\newtheorem{theorem}{Theorem}[section]
\newtheorem{definition}[theorem]{Definition}
\newtheorem{lemma}[theorem]{Lemma}
\newtheorem{corollary}[theorem]{Corollary}
\newtheorem{claim}[theorem]{Claim}
\newtheorem{proposition}[theorem]{Proposition}
\newtheorem{fact}[theorem]{Fact}
\newtheorem{property}[theorem]{Property}
\newtheorem{hypothesis}[theorem]{Hypothesis}
\crefname{obs}{Observation}{Observations}
\Crefname{obs}{Observation}{Observations}
\crefname{fact}{Fact}{Facts}
\Crefname{fact}{Fact}{Facts}
\crefname{problem}{Problem}{Problems}
\Crefname{problem}{Problem}{Problems}
\crefname{conjecture}{Conjecture}{Conjectures}
\Crefname{conjecture}{Conjecture}{Conjectures}
\crefname{claim}{Claim}{Claims}
\Crefname{claim}{Claim}{Claims}
\newcommand{\cint}[1]{\lfloor #1 \rceil}
\newcommand{\eps}{\varepsilon}
\newcommand{\Oh}{\mathcal{O}}
\newcommand{\Otilde}{\widetilde{\mathcal{O}}}
\newcommand{\tOh}{\Otilde}
\newcommandx{\unsure}[2][1=]{\todo[linecolor=green,backgroundcolor=green!25,bordercolor=green,#1]{\normalsize #2}}
\newcommandx{\improvement}[2][1=]{\todo[inline,linecolor=blue,backgroundcolor=blue!05,bordercolor=blue,#1]{\normalsize #2}}
\newcommandx{\info}[2][1=]{\todo[linecolor=yellow,backgroundcolor=yellow!25,bordercolor=yellow,#1]{#2}}
\newcommandx{\floatmodel}[2][1=]{\todo[inline,linecolor=red,backgroundcolor=yellow!25,bordercolor=yellow,#1]{#2}}
\newcommandx{\thiswillnotshow}[2][1=]{\todo[disable,#1]{#2}}
\DeclarePairedDelimiter{\abs}{\lvert}{\rvert}
\def\epsilon{\varepsilon}
\def\eps{\epsilon}
\newcommand{\R}{\ensuremath{\mathbb{R}}\xspace}
\newcommand{\etal}{et al.\xspace}
\newcommand{\Red}{R}
\newcommand{\Blue}{B}
\newcommand{\D}{\ensuremath{\mathcal{D}}\xspace}
\newcommand{\A}{\ensuremath{\mathcal{A}}\xspace}
\newcommand{\EMD}{\ensuremath{\mathrm{EMD}}\xspace}
\newcommand{\Cost}{\mathrm{Cost}}
\newcommand{\EMDuT}{\ensuremath{\mathrm{EMDuT}}\xspace}
\title{Fine-Grained Complexity of Earth Mover's Distance under
Translation\thanks{This work was initiated at the Workshop on New Directions in
Geometric Algorithms, May 14-19 2023, Utrecht, The Netherlands.}}
\author{
  Karl Bringmann\thanks{Saarland University and Max-Planck-Institute for Informatics, Saarland Informatics Campus,
        Saarbr\"ucken, Germany, \textsf{bringmann@cs.uni-saarland.de}. This work is part of the project TIPEA that has received funding from the European Research Council (ERC) under the European Unions Horizon 2020 research and innovation programme (grant agreement No.\ 850979).
  } \and
  Frank Staals\thanks{Department of Information and Computing
    Sciences, Utrecht University, The Netherlands, \textsf{f.staals@uu.nl}}
  \and
Karol W\k{e}grzycki\thanks{Max Planck Institute for Informatics, Saarland Informatics Campus,
      Saarbr\"ucken, Germany, \textsf{kwegrzyc@mpi-inf-mpg.de}. Supported by the Deutsche
Forschungsgemeinschaft (DFG, German Research Foundation) grant number
559177164.}
  \and
  Geert van Wordragen\thanks{Department of Computer Science, Aalto
    University, Espoo, Finland, \textsf{Geert.vanWordragen@aalto.fi}}
}
\date{}
\begin{document}

\maketitle

\begin{abstract}
The Earth Mover's Distance is a popular similarity measure in several branches
of computer science. It measures the minimum total edge length of a perfect
matching between two point sets.  The Earth Mover's Distance under Translation
($\mathrm{EMDuT}$) is a translation-invariant version thereof. It minimizes the Earth
Mover's Distance over all translations of one point set.

For $\mathrm{EMDuT}$ in $\mathbb{R}^1$, we present an $\widetilde{\mathcal{O}}(n^2)$-time
algorithm. We also show that this algorithm is nearly optimal by presenting a
matching conditional lower bound based on the Orthogonal Vectors Hypothesis. For
$\mathrm{EMDuT}$ in $\mathbb{R}^d$, we present an $\widetilde{\mathcal{O}}(n^{2d+2})$-time
algorithm for the $L_1$ and $L_\infty$ metric. We show that this dependence on
$d$ is asymptotically tight, as an $n^{o(d)}$-time algorithm for $L_1$ or
$L_\infty$ would contradict the Exponential Time Hypothesis (ETH). Prior to our
work, only approximation algorithms were known for these problems.

\end{abstract}

\thispagestyle{empty}

 \begin{picture}(0,0)
 \put(452,-345)
 {\hbox{\includegraphics[width=40px]{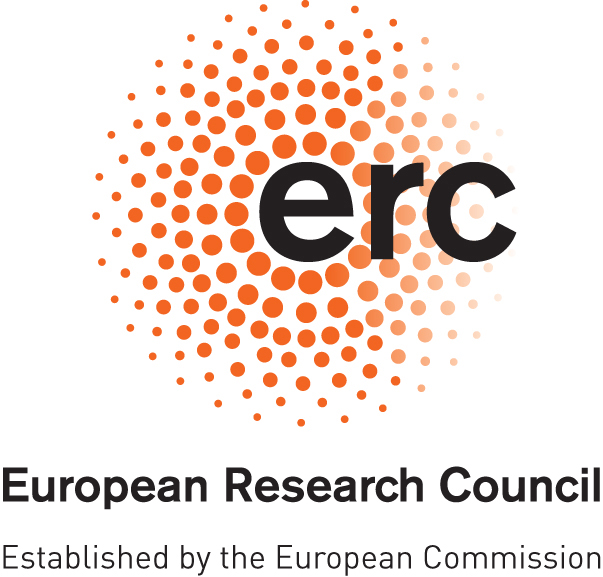}}}
 \put(442,-405)
 {\hbox{\includegraphics[width=60px]{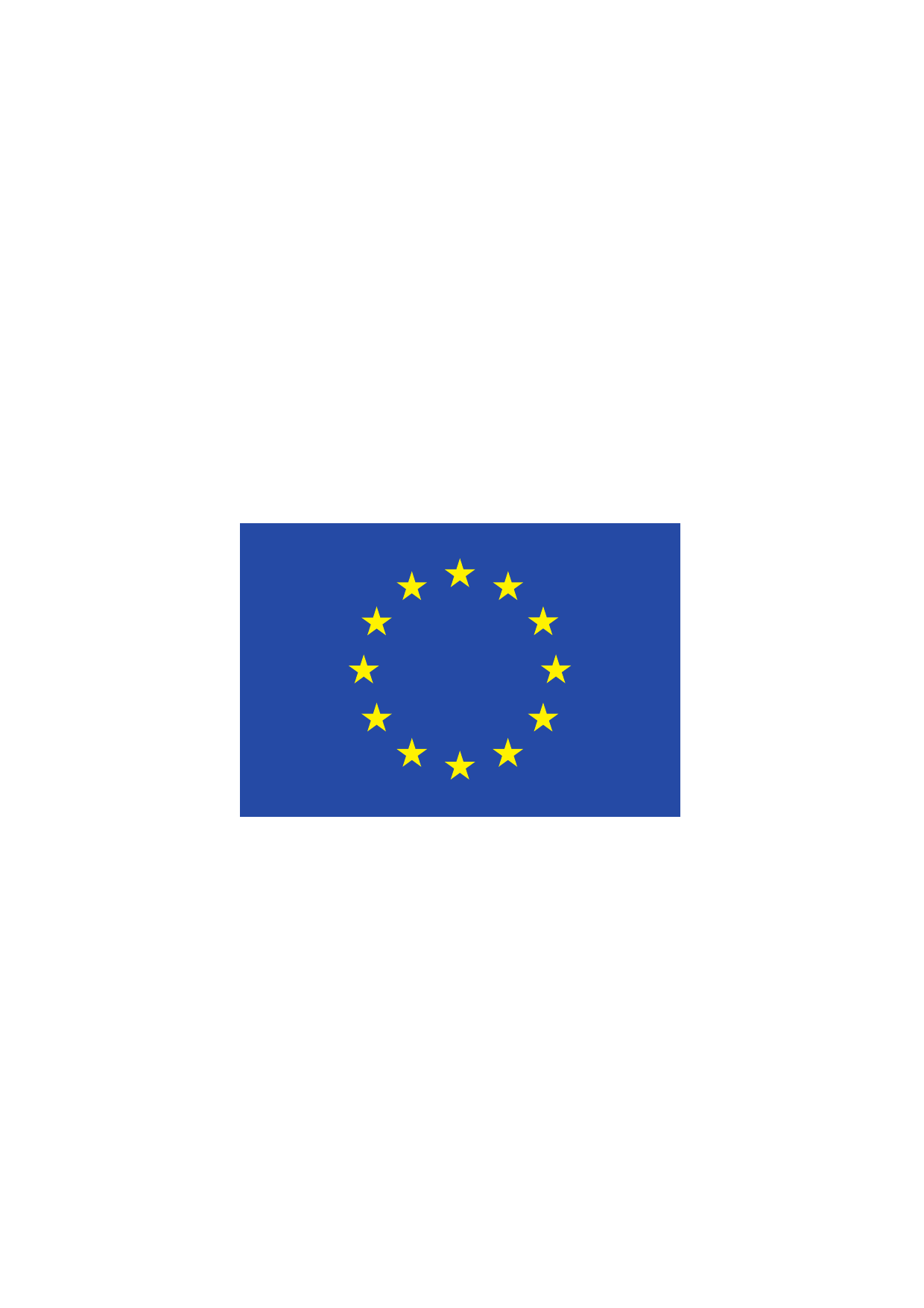}}}
 \end{picture}

\clearpage
\setcounter{page}{1}

\section{Introduction}

\paragraph{Earth Mover's Distance (EMD).}
\EMD, also known as geometric transportation or geometric bipartite
matching, is a widely studied distance measure (see,
e.g.,~\cite{Indyk07, AndoniIK08, AndoniBIW09, AndoniNOY14, Rohatgi19,
  KhesinNP20, FoxL22, FoxL23, AgarwalCRX22, AgarwalRSS22}) that
has received significant interest in computer vision, starting with
the work of~\cite{RubnerTG00}.  Depending on the precise formulation,
\EMD is a distance measure on point sets, distributions, or
functions. In this paper, we study the following formulation of \EMD
as measuring the distance from a set of blue points $B$ to a set of
red points $R$:
\begin{displaymath}
    \EMD_p(B,R) = \min_{\text{injective } \phi\colon B \to R} \sum_{b \in B} \|b - \phi(b)\|_p.
\end{displaymath}
Here, the minimization goes over all injective functions from $B$ to $R$, i.e., $\phi$ encodes a perfect matching of the points in $B$ to points in $R$, and the cost of a matching is the total length of all matching edges, with respect to the $L_p$ metric, $1 \le p \le \infty$. When the value of $p$ is irrelevant, we may drop the subscript~$p$.

The $\EMD_p$ problem is to compute the value $\EMD_p(B,R)$ for given sets $B,R \subseteq \R^d$ of sizes $|B| \le |R| = n$. This general problem is sometimes called the \emph{asymmetric} \EMD. The \emph{symmetric} \EMD is the special case with the additional restriction $|B| = |R|$. Intuitively, the asymmetric \EMD asks whether $B$ is similar to some subset of $R$, while the symmetric variant compares the full sets $B$ and~$R$. In this paper, we assume the dimension $d$ to be constant.

We briefly discuss algorithms for \EMD.
Note that \EMD can be formulated as a mincost matching problem on a bipartite
graph with vertices $R \cup B$, where edge lengths are equal to the
point-to-point distances. This graph has $|R|\cdot |B| = \Oh(n^2)$ edges and
solving bipartite mincost matching by the Hungarian method yields an exact
algorithm for \EMD with running time $\Oh(n^3)$.
Alternatively, by combining geometric spanners with recent advancements in
(approximate) mincost flow solvers, one can obtain fast approximation algorithms
for \EMD. For instance, symmetric \EMD in $L_2$ metric can be solved in time $n
(\log(n)/\eps)^{\Oh(d)}$~\cite{KhesinNP20}. See also~\cite{Indyk07, AndoniNOY14,
FoxL22, FoxL23, AgarwalCRX22, AgarwalRSS22} for more approximation
algorithms. Conditional lower bounds are also known, but
only when the dimension is super-constant~\cite{Rohatgi19}.

\paragraph{Earth Mover's Distance under Translation (EMDuT).}
We study a variant of \EMD that is invariant under translations, and thus compares shapes of point sets, ignoring their absolute positions:
\begin{displaymath}
 \EMDuT_p(B,R) = \min_{\tau \in \R^d} \EMD_p(B+\tau,R).
\end{displaymath}
Here, $B + \tau = \{b + \tau \mid b \in B\}$ is the translated point set. See Figure~\ref{fig:problem_definition} for an illustration of this distance measure.
Again, we call asymmetric $\EMDuT_p$ the problem of computing $\EMDuT_p(B,R)$ for given sets $B,R$ of size $|B| \le |R|=n$, and the symmetric variant comes with the additional restriction $|B| = |R|$.
This measure was introduced by Cohen and Guibas~\cite{CohenG99}, who
presented heuristics as well as an exact algorithm with respect to the squared Euclidean distance.
Later, Klein and Veltkamp~\cite{KleinV05} designed a 2-approximation
algorithm for symmetric $\EMDuT_p$ running in asymptotically the same
time as any \EMD algorithm. Cabello, Giannopoulos, Knauer, and
Rote~\cite{CabelloGKR08} designed $(1+\eps)$-approximation algorithms
for $\EMDuT_2$ in the plane, running in time $\tOh(n^4/\eps^4)$ for
the asymmetric variant and $\tOh(n^{3/2}/\eps^{7/2})$ for the
symmetric variant.\footnote{Here and throughout the paper we use $\tOh$ notation
to ignore logarithmic factors, i.e., $\tOh(T) = \bigcup_{c \ge 0} \Oh(T (\log T)^c)$.}
Eppstein~\etal~\cite{eppstein15improv_grid_map_layout_point_set_match}
proposed algorithms to
solve the symmetric $\EMDuT_1$ and symmetric $\EMDuT_\infty$ problems in the
plane, that run in $\Oh(n^6\log^3
n)$ time.
We remark that most of these works also study variants of \EMDuT under more general transformations than translations, but in this paper we focus on translations.

We are not aware of any other research on \EMDuT, which is surprising, since translation-invariant distance measures are well motivated, and the analogous Hausdorff distance under translation~\cite{HuttenlocherK90,Rote91,
HuttenlocherRK92,AgarwalHSW10,KnauerS11,KnauerKS11,
BringmannN21,Chan23} and Fr\'echet distance under translation~\cite{AltKW01,MosigC05,JiangXZ08,AvrahamKS15,
BringmannKN20,FiltserK20,BringmannKN21} have received considerably more attention.

\begin{figure}[tb]
  \centering
  \includegraphics{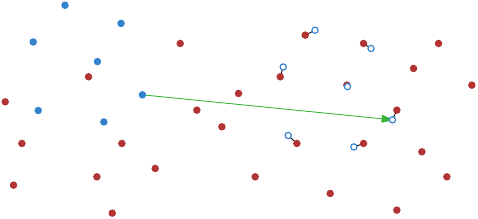}
  \caption{Given a set of (solid) blue points $B$ and a set of red points $R$, our goal is to find a translation~$\tau$ (shown in green) and a
    perfect matching from $B+\tau$ to $R$ (shown in black) that minimizes the total distance of matched pairs.
  }
  \label{fig:problem_definition}
\end{figure}

\subsection{Our Results}

We study \EMDuT from the perspective of fine-grained complexity. We design new algorithms and prove conditional lower bounds over $\R^1$, as well as for $L_1$ and $L_\infty$ over $\R^d$.

\paragraph{EMDuT in 1D.} Over $\R^1$ all $L_p$ metrics are equal. We present the following new algorithms.

\begin{theorem}[1D Algorithms] \label{thm:1Dalgo}
  (Symmetric:) Given sets $B,R \subseteq \R$ of size $n = |B| = |R|$,
  $\EMDuT(B,R)$ can be computed in time $\Oh(n \log n)$.
  (Asymmetric:) Given sets $B,R \subseteq \R$ of size $m = |B| \le n = |R|$,
  $\EMDuT(B,R)$ can be computed in time $\Oh(m n (\log n + \log^2 m))$.
\end{theorem}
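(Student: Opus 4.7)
After sorting both sets so that $b_1\le\cdots\le b_n$ and $r_1\le\cdots\le r_n$, a standard exchange argument shows that in one dimension the order-preserving matching $b_j\leftrightarrow r_j$ is optimal for every translation $\tau$ simultaneously. Hence $\EMD(B+\tau,R)=\sum_{j=1}^n|\tau-(r_j-b_j)|$, which is minimised over $\tau$ at the median of $\{r_j-b_j\}_j$. The algorithm sorts both sets, computes the differences $d_j=r_j-b_j$, finds their median in $\Oh(n)$ time via linear-time selection, and sums the absolute deviations. The total running time is $\Oh(n\log n)$, dominated by sorting.

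\textbf{Asymmetric case ($m<n$).} I again sort both sets. The optimal injection $\phi$ in one dimension is still order-preserving: it selects indices $1\le i_1<\cdots<i_m\le n$ and pairs $b_j\leftrightarrow r_{i_j}$. For any such matching, with $y_j=r_{i_j}-b_j$, the optimal translation is the median of $\{y_j\}$, which equals one of the $y_j$'s. Hence the global optimum $\tau^*$ lies in $\mathcal{T}=\{r_i-b_j:(j,i)\in[m]\times[n]\}$, a set of at most $mn$ candidate values, and my task reduces to computing $\min_{\tau\in\mathcal{T}}\EMD(B+\tau,R)$.

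The plan is to sort $\mathcal{T}$ and sweep $\tau$ from $-\infty$ to $+\infty$, maintaining in a data structure the current optimal order-preserving matching together with its cost. Conceptually this tracks the parametric DP $F[j][i](\tau)=|b_j+\tau-r_i|+P[j-1][i-1](\tau)$ with prefix minima $P[j][i](\tau)=\min_{i'\le i}F[j][i'](\tau)$, so that $\EMD(B+\tau,R)=P[m][n](\tau)$. At each event $\tau=r_{i_0}-b_{j_0}\in\mathcal{T}$ only the single contribution $|b_{j_0}+\tau-r_{i_0}|$ changes slope, so any change in the optimum is local to nearby DP cells. I plan to support each event in $\Oh(\log n+\log^2 m)$ time by combining (a) a segment tree over $R$ that locates the event and returns the relevant prefix minimum in $\Oh(\log n)$, with (b) a secondary structure over the $m$ currently active matched pairs---a balanced BST or segment tree of BSTs---that supports insertions, deletions, and cost updates in $\Oh(\log^2 m)$ per operation.

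The main obstacle is proving that the optimal matching evolves only incrementally as $\tau$ sweeps through $\mathcal{T}$, i.e.\ that the lower envelope $g(\tau)=\EMD(B+\tau,R)$ has only $\Oh(mn)$ breakpoints and that at each breakpoint the change in the optimum can be processed by $\Oh(1)$ amortised operations on the data structures above. This amounts to a parametric analysis of 1D order-preserving bipartite matching, essentially asking how the optimal support $\{i_1,\ldots,i_m\}\subseteq[n]$ ``slides'' as the common shift $\tau$ varies monotonically. Once this structural claim is established, summing the initial sort of $\mathcal{T}$ ($\Oh(mn\log(mn))=\Oh(mn\log n)$) with the per-event work yields the advertised total running time of $\Oh(mn(\log n+\log^2 m))$.
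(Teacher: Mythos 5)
Your symmetric case is correct and is the same argument as the paper's: the order-preserving matching is optimal for every $\tau$ simultaneously, so the problem reduces to minimizing $\sum_j|\tau-(r_j-b_j)|$ at the median of the differences.

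For the asymmetric case you have the paper's high-level plan (restrict to order-preserving matchings, sweep $\tau$, maintain the current optimal matching and its cost), and your observation that $\tau^*$ lies in $\mathcal{T}=\{r_i-b_j\}$ is correct. But the proof has a genuine gap: everything you label ``the main obstacle'' is precisely the content of the theorem, and it is left unproven. Concretely, two things are missing. First, the bound of $\Oh(mn)$ on the number of combinatorial changes of the optimal matching rests on a nontrivial monotonicity lemma: if $\phi$ is optimal at $\tau$ and $\phi'$ is optimal at some $\tau'>\tau$, then $\phi'(b)\ge\phi(b)$ for every $b\in B$, so each blue point only ever advances through $R$ and changes its partner at most $n$ times. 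The paper proves this by analyzing the symmetric difference of the two matchings (its components are non-crossing paths with monotone red endpoints) together with an exchange inequality on sums of absolute values; nothing in your sketch substitutes for this, and without it you have no bound on the number of events the sweep must process. Second, your event model is wrong in a way that matters: the optimal matching does not change only at alignment translations $\tau=r_i-b_j$. It also changes at breakpoints where two distinct matchings tie in cost; these are roots of certain linear functions and in general do not lie in $\mathcal{T}$, so a sweep that only stops at $\mathcal{T}$ will carry a stale (suboptimal) matching into the next candidate $\tau$. Moreover, at such a breakpoint the change is not ``local to nearby DP cells'': an entire suffix of a run of consecutively matched blue points shifts to the next red points simultaneously. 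Detecting when the next such event occurs is exactly what forces the paper's data structure, a modified Overmars--van Leeuwen lower-envelope tree over the suffix cost functions $\Delta_j(\tau)=\sum_{k\ge j}\Delta'_k(\tau)$, stored implicitly as partial sums along root-to-leaf paths because a single update to one $\Delta'_k$ changes many $\Delta_j$. Your ``segment tree over $R$ plus balanced BST over the $m$ matched pairs'' does not address either the detection of these events or the amortization of the suffix re-routing, so as written the asymmetric bound is not established.
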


Note that for $m = \Omega(n)$, for the asymmetric variant we obtain
near-quadratic time $\tOh(n^2)$, while for the symmetric variant we
obtain near-linear time $\tOh(n)$. We fully explain this gap, by
proving a matching conditional lower bound showing that no algorithm
solves the asymmetric variant in strongly subquadratic time
$\Oh(n^{2-\delta})$ for any $\delta > 0$, for $m = \Omega(n)$. 
In fact, we present a stronger lower bound that even rules out fast approximation algorithms, not only fast exact algorithms.  Our lower bound assumes the Orthogonal Vectors Hypothesis (OVH), a widely-accepted conjecture from fine-grained complexity theory; for a definition see Section~\ref{sec:1Dlowerbound}.

\begin{theorem}[1D Lower Bound]\label{thm:ov-lb}
  Assuming OVH, for any constant $\delta > 0$ there is no algorithm that,
  given $\eps \in (0,1)$ and sets $B,R \subseteq \R$ of size $n = |R| \ge |B| = \Omega(n)$, computes a $(1+\eps)$-approximation of $\EMDuT(B,R)$ in time $\Oh(n^{2-\delta} / \eps^{o(1)})$.
\end{theorem}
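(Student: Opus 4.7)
The plan is a reduction from the Orthogonal Vectors problem (OV). Given an OV instance with sets $U = \{u_1, \dots, u_n\}$ and $V = \{v_1, \dots, v_n\}$ in $\{0,1\}^d$ with $d = \Theta(\log^2 n)$, I will construct point sets $B, R \subseteq \R$ with $|R| = \tOh(n)$ and $|B| = \Theta(|R|)$ such that $\EMDuT(B, R)$ exhibits a multiplicative gap of at least $1 + \Omega(1/n)$ between the case where some pair $(u_i, v_j)$ is orthogonal and the case where none is. Choosing $\eps = 1/(10 n)$, a hypothetical $(1+\eps)$-approximation running in time $\Oh(|R|^{2-\delta}/\eps^{o(1)}) = \Oh(n^{2-\delta-o(1)})$ would then decide OV and contradict OVH for any fixed $\delta > 0$.

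The construction has two scales. At the micro scale, every vector $w \in \{0,1\}^d$ is encoded as a gadget $G(w)$ of $\Oh(d)$ points living in an interval of length $\Oh(d)$. A suitable design (for instance, placing a red point at each coordinate $k$ with $u[k]=0$ and a blue point at each coordinate $k$ with $v[k]=1$, together with balancing padding that keeps the blue and red cardinalities consistent) guarantees that when two gadgets are perfectly overlaid, the cheapest matching between them has cost $0$ iff $\langle u, v \rangle = 0$ and cost at least a fixed constant $X > 0$ otherwise, because a shared 1-bit forces at least one blue point to travel to a farther red slot. At the macro scale, the red gadgets $G(u_1), \dots, G(u_n)$ are placed at centres $\Delta, 2\Delta, \dots, n\Delta$, and the blue gadgets $G(v_1), \dots, G(v_m)$ with $m = \lfloor n/2 \rfloor$ at centres $\Delta, 2\Delta, \dots, m\Delta$, where $\Delta$ is chosen much larger than the gadget diameter so that clusters are well separated.

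Since sorted-order matching is optimal in one dimension, for any translation $\tau$ the $i$-th blue cluster matches with the $(s+i-1)$-st red cluster for some window-start $s$, and by a locking argument $\tau$ is forced to lie at (essentially) an integer multiple of $\Delta$: any off-grid $\tau = k\Delta + \delta$ introduces additional within-cluster displacement cost $\Omega(m d \delta)$ that dominates every saving one could get from changing the window. Consequently $\EMDuT(B, R)$ equals $\min_{k \in \{0, \dots, n-m\}} X \cdot |\{i : \langle v_i, u_{i+k} \rangle \ne 0\}|$, which equals $Xm$ exactly when no orthogonal pair $(u_j, v_i)$ with $j \ge i$ exists, and drops to at most $X(m-1)$ as soon as one such pair appears. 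Running the reduction a second time on reflected inputs catches the OV pairs with $j < i$, and the desired multiplicative gap $1 + 1/(m-1)$ follows.

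The main technical obstacle is the locking step, i.e., ruling out that a non-integer translation combined with a non-trivial window choice could save more than $X$ by reassigning blue points across cluster boundaries. I plan to resolve this by adding \emph{anchor points}: a few heavy pairs of points at each gadget's boundary whose aggregate displacement cost for $\delta \ne 0$ is large enough to outweigh any possible orthogonality savings, and which simultaneously pin each blue cluster to exactly one red cluster. Once this is in place, the discrete counting argument above yields the claimed conditional lower bound.
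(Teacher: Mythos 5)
Your macro-level construction has a genuine flaw that the paper's construction is specifically designed to avoid. Because you give the blue and red clusters the \emph{same} spacing $\Delta$, each candidate translation $k\Delta$ aligns \emph{every} blue cluster with a red cluster simultaneously, so the cost at shift $k$ is the aggregate $\sum_i c(v_i,u_{i+k})$, where $c(v,u)$ denotes the cost of the two overlaid gadgets. Your gadget only guarantees $c(v,u)=0$ for orthogonal pairs and $c(v,u)\ge X$ otherwise; in fact $c$ naturally grows with the number of shared $1$-bits, so it is not constant on non-orthogonal pairs. Consequently, in a yes-instance the diagonal containing the orthogonal pair can still have aggregate cost exceeding $Xm$ (if its other $m-1$ pairs are ``heavily'' non-orthogonal), while in a no-instance some diagonal can have cost exactly $Xm$; the minimum over shifts then fails to separate the two cases. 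Making $c$ exactly equal to $X\cdot[\langle u,v\rangle\neq 0]$ is a substantial missing ingredient, not a routine detail. The paper sidesteps this entirely by giving the blue cells period $n\Delta$ and the red cells period $(n-1)\Delta$, a Vernier-scale arrangement: each of the $\Theta(n^2)$ candidate translations nearly aligns exactly \emph{one} blue cell with a red cell, and every other blue cell is matched entirely to the $8d$ heavy padding points at the boundary of its nearest red cell, contributing a cost that is exactly linear in its offset and \emph{independent of the vector contents} (the last part of \cref{lem:orth-gadgets}). The total is then a fixed baseline $\Lambda$ plus $0$ or at least $1$, depending only on the single tested pair.

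Two further problems. First, coverage: with $m=\lfloor n/2\rfloor$ blue clusters you only ever test pairs $(v_i,u_j)$ with $i\le n/2$ and $0\le j-i\le n-m$; half of the vectors of $V$ are simply absent from the instance, and a single reflected run does not restore the missing pairs, so even granting exact costs the reduction does not decide OV. Second, your claimed identity $\EMDuT(B,R)=\min_k X\cdot\lvert\{i:\langle v_i,u_{i+k}\rangle\neq 0\}\rvert$ presupposes the exact-cost property above, which your anchor points do not provide (they only handle the locking of $\tau$ to the grid, the analogue of the paper's bound on $|\eps|$). The combinatorial skeleton of your proposal---per-vector gadgets, widely separated cells, locking to a discrete set of translations, and a $1+\Theta(1/n)$ gap yielding approximation hardness via $\eps=\Theta(1/n)$---does match the paper's, but the crucial idea of mismatched periods combined with content-independent boundary padding, so that each translation isolates one pair against a uniform baseline, is missing, and without it the reduction does not work.
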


As a corollary, the same conditional lower bound holds for $\EMDuT_p$ over $\R^d$, for any $d \ge 1$ and $1 \le p \le \infty$, since subsets of $\R$ can be embedded into $\R^d$ for any dimension $d$ and any $L_p$ metric.

\medskip
Let us give a brief overview of these results.
In the symmetric setting, we establish that $f(\tau) \coloneq \EMD(B+\tau,R)$ is a
unimodal function in $\tau$, i.e., it is first monotone decreasing and then monotone increasing, and thus its minimum can be found easily.
In contrast, in the asymmetric setting the function $f(\tau)$ can have up to
$\Theta(n^2)$ disconnected global minima. Intuitively, our lower bound
shows that any algorithm needs to consider each one of these 
global near-minima, and therefore the running time must be quadratic in order to determine which near-minimum is the actual global minimum. To obtain our algorithm in the asymmetric setting, we use a sweep algorithm with an intricate event handling data structure.

\paragraph{EMDuT for \boldmath$L_1$ and \boldmath$L_\infty$ metric
  in higher dimensions.} We extend the work of
Eppstein~\etal~\cite{eppstein15improv_grid_map_layout_point_set_match}
for point sets in $\R^d$, leading to the following algorithms.

\begin{theorem}[Algorithms for $L_1$ and $L_\infty$ metric, Asymmetric]
  \label{thm:algo_higherdim}
  Given sets $B,R \subseteq \R^d$ of size $m = |B| \le n = |R|$,
  $\EMDuT_1(B,R)$ and $\EMDuT_\infty(B,R)$ can be computed in 
  $\Oh(m^dn^{d+2}\log^{d+2}n)$ time.
\end{theorem}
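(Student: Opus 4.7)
The plan is to generalize the two-dimensional strategy of Eppstein \etal\ to arbitrary constant $d$. The approach has two ingredients: (i) a structural argument showing that the optimal translation lies in a finite, explicit candidate set, and (ii) a fast subroutine for $L_1$ (respectively $L_\infty$) geometric min-cost bipartite matching, invoked at each candidate translation.

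I treat the $L_1$ case first. Let $\mathcal{H}$ be the arrangement of the $dmn$ axis-parallel hyperplanes $H_{i,j,k} = \{\tau \in \R^d : \tau_i = r_{k,i} - b_{j,i}\}$ for $i \in [d]$, $b_j \in B$, $r_k \in R$. Inside any open cell $C$ of $\mathcal{H}$, each sign $\Sgn{b_{j,i}+\tau_i-r_{k,i}}$ is constant, so for any fixed injection $\phi: B\to R$ the cost $\sum_{b\in B}\|b+\tau-\phi(b)\|_1$ is an affine function of $\tau$. Therefore $f(\tau)\coloneqq \EMD_1(B+\tau,R)$ equals, on $C$, a pointwise minimum of finitely many affine functions and is thus concave piecewise linear, with infimum over $\overline{C}$ attained at a vertex. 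Since $f(\tau)\to\infty$ as $\|\tau\|_\infty\to\infty$, the global infimum is in fact a minimum attained at a vertex of $\mathcal{H}$ inside a sufficiently large bounded region. Since $\mathcal{H}$ is the union of $d$ families of $mn$ parallel hyperplanes (one family per coordinate axis), it has exactly $m^d n^d$ vertices, each the simple intersection of one hyperplane from each family. The algorithm enumerates these vertices, computes $\EMD_1$ at each using the best known exact subroutine for asymmetric $L_1$ geometric min-cost bipartite matching in $\R^d$ --- running in $\Oh(n^2 \log^{d+2} n)$ time via augmenting paths combined with $d$-dimensional orthogonal range reporting --- and returns the minimum. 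The total running time is $\Oh(m^d n^d)\cdot\Oh(n^2 \log^{d+2} n) = \Oh(m^d n^{d+2} \log^{d+2} n)$.

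For $L_\infty$, I refine $\mathcal{H}$ by adding, for every pair $(b_j,r_k) \in B\times R$ and every pair of coordinates $i<i'$, the two hyperplanes $b_{j,i}+\tau_i-r_{k,i} = \pm(b_{j,i'}+\tau_{i'}-r_{k,i'})$. These ensure that within every refined cell the index $\arg\max_i |b_i+\tau_i-r_i|$ is constant for each pair $(b,r)$, so $\|b+\tau-r\|_\infty$ becomes an affine function of $\tau$ and the concavity-in-a-cell argument carries over verbatim. The refined arrangement has $\Oh(d^2 mn)$ hyperplanes and hence $\Oh((d^2 mn)^d) = \Oh(m^d n^d)$ vertices for constant $d$, preserving the same running-time bound. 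The main technical hurdles I anticipate are (i) verifying the $\Oh(n^2 \log^{d+2} n)$ time bound for asymmetric geometric $L_1$/$L_\infty$ min-cost bipartite matching in $\R^d$ --- a careful combination of the Hungarian method with $d$-dimensional range trees --- and (ii) resolving arrangement degeneracies by a standard symbolic perturbation of the input coordinates so that every vertex is a simple intersection of exactly $d$ hyperplanes.
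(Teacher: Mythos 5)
Your proposal follows essentially the same route as the paper: the same arrangement of $\Oh(dmn)$ axis-aligned hyperplanes for $L_1$ (augmented by the $\Oh(d^2mn)$ hyperplanes $b_i+\tau_i-r_i=\pm(b_{i'}+\tau_{i'}-r_{i'})$ for $L_\infty$), the same per-cell linearity argument placing an optimal translation at one of the $\Oh(m^dn^d)$ vertices, and the same $\Oh(n^2\log^{d+2}n)$ matching subroutine at each vertex. The one piece you leave as a black box --- the fixed-translation matching bound --- is exactly what the paper proves in detail (via Vaidya's implementation of the Hungarian method, a range-tree-based dynamic weighted nearest-neighbor structure, and Chan's reduction to dynamic bichromatic closest pair), and your anticipated strategy for it matches theirs.
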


We explain that such a dependence on the dimension is unavoidable, by establishing
a more coarse-grained lower bound compared to our 1D results: We show that no
algorithm can solve the problem in time $n^{o(d)}$. In fact, we present a
stronger lower bound that even rules out fast approximation algorithms. Our
lower bound assumes the Exponential Time Hypothesis (ETH)~\cite{ImpagliazzoP01},
which is a well-established conjecture from fine-grained complexity theory.

\begin{theorem}[Lower Bound for $L_1$ and $L_\infty$ metric, Symmetric]
    Assuming ETH, there is no algorithm that,
    given $\eps \in (0,1)$ and sets $B,R \subseteq \R^d$ of size $n = |B| = |R|$, computes a $(1+\eps)$-approximation of $\EMDuT_1(B,R)$ in time $(\frac{n}{\eps})^{o(d)}$. The same holds for $\EMDuT_\infty(B,R)$.
\end{theorem}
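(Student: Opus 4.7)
The plan is to reduce from the $k$-\textsc{Clique} problem: given an $n$-vertex graph $G$, decide whether $G$ contains a clique of size $k$. By the theorem of Chen, Huang, Kanj, and Xia, under ETH $k$-\textsc{Clique} admits no $f(k)\cdot n^{o(k)}$-time algorithm. I will construct, in polynomial time, an instance $(B,R)$ of the symmetric $\EMDuT_1$ (respectively $\EMDuT_\infty$) problem in dimension $d = \Theta(k)$ on $N = \poly(n)$ points, together with a threshold $T$ and a gap $\eta = 1 + 1/\poly(n)$, such that $G$ has a $k$-clique iff $\EMDuT_1(B,R) \le T$, and $\EMDuT_1(B,R) \ge \eta T$ whenever $G$ has no $k$-clique. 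A hypothetical $(N/\eps)^{o(d)}$-time $(1+\eps)$-approximation algorithm, applied with $\eps < \eta - 1$, would then distinguish the two cases in time $n^{o(d)} = n^{o(k)}$, contradicting ETH.

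The reduction assigns one coordinate of $\R^d$ per vertex slot of the putative clique, so that a translation $\tau \in \R^d$ is intended to encode a $k$-tuple of vertex labels via $\tau_i \approx x_{v_i}$, where $x_1,\ldots,x_n \in \R$ are fixed values associated with the vertices of $G$. Two families of gadgets are introduced. \emph{Selection gadgets:} for each slot $i \in [k]$ we add matched pairs whose optimal within-gadget matching cost is minimized precisely when $\tau_i$ lies near some $x_v$; well-separated ``marker'' offsets in the other coordinates ensure that near-optimal matchings of the whole instance pair points only inside their own gadget. \emph{Edge-penalty gadgets:} for each pair $(i,j)$ of slots and each non-edge $\{u,v\} \notin E(G)$, we add a gadget that contributes an additive surcharge of order $1/k^2$ whenever $\tau_i \approx x_u$ and $\tau_j \approx x_v$ hold simultaneously. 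A genuine $k$-clique admits a translation that evades every edge penalty; any non-clique configuration must trigger at least one surcharge.

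The principal difficulty is designing the edge-penalty gadget so its cost has the desired ``double-coincidence'' behavior. For $L_1$ this is natural: a single pair of points placed so that their matched displacement is supported on coordinates $i$ and $j$ contributes exactly $|\tau_i - x_u| + |\tau_j - x_v|$ plus a constant, which is below threshold only near the bad pair. For $L_\infty$ the cost is a max rather than a sum, so one must instead combine several auxiliary points in coordinates $i$ and $j$ so that their aggregate near-optimal matching cost penalizes only the specific coincidence $(x_u, x_v)$. Scaling then amplifies the $1 + \Omega(1/k^2)$ gap into the stated $1+\eps$ gap for any $\eps = 1/\poly(n)$, and the proof is completed by verifying that any cross-gadget matching costs strictly more than $T$, so that the chosen gadget decomposition indeed governs the optimal value of $\EMDuT$.
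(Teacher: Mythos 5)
Your high-level frame --- reduce from $k$-Clique, let the $d=\Theta(k)$ coordinates of the translation $\tau$ encode the $k$ chosen vertices, decompose the instance into well-separated gadgets (the paper's Gadget Combination Lemma), and convert an additive gap of $1$ over a polynomial total cost $\Lambda$ into a $(1+\eps)$ gap with $1/\eps = \poly(n)$ --- is exactly the paper's. The gap is in the one gadget that does the real work. You propose a \emph{penalty} gadget per \emph{non-edge} $\{u,v\}$ whose cost carries a surcharge precisely when $(\tau_i,\tau_j)\approx(x_u,x_v)$, and for $L_1$ you instantiate it as a single matched blue--red pair contributing $|\tau_i-x_u|+|\tau_j-x_v|$ plus a constant. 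That function is \emph{minimized} at the bad configuration and grows away from it: it gives a rebate at the non-edge, not a surcharge, so NO-instances would become cheaper, which is backwards. Moreover a single forced pair admits no choice of matching, so its cost is a convex function of $\tau$, and a convex function cannot be ``at baseline everywhere except for a bump at one point''; in general every gadget cost is a minimum over matchings of convex functions, so EMD gadgets naturally \emph{reward} designated translations (they have local minima at red-point locations) and cannot readily isolate a single translation for punishment. The paper uses the opposite encoding: for each slot pair $(i,j)$ it places one red point $p_{i,u,j,v}$ per \emph{edge} $(u,v)\in E$ and a single blue point that chooses which one to match, giving cost $\min_{(u,v)\in E}\bigl(|\tau_i-u|+|\tau_j-v|\bigr)$ plus controlled linear terms, which equals the baseline iff $(\cint{\tau_i},\cint{\tau_j})\in E$ and exceeds it by at least $1$ otherwise. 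This disjunction-over-edges is the missing idea, and it also makes your separate ``selection gadgets'' unnecessary.

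Two further points your sketch glosses over but that are where most of the paper's effort goes. First, the ``plus a constant'': the untouched coordinates of a matched pair contribute $\sum_{\ell\ne i,j}|\tau_\ell - c_\ell|$, which is not constant in $\tau$; the paper pairs each gadget (red coordinates $0$) with a mirror gadget (red coordinates $N$) so these terms sum to exactly $(d-2)N$ on $[N]^d$ and are bounded below by it everywhere. Second, symmetry: the edge gadget has $|E|$ red points against one blue point, so to achieve $|B|=|R|$ the paper adds $|E|-1$ dummy blue points and doubles the dimension to $2k$ so that the linear terms the dummies introduce cancel between mirrored gadgets; for $L_\infty$, where the per-point cost is a max rather than a sum, it additionally needs a family of $4(k-1)k$ auxiliary gadgets to cancel residual $|\tau_i-\tau_{i+k}|$ terms. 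You correctly flag that $L_\infty$ is harder, but none of this machinery appears in the proposal, and without it the cost accounting that underlies the YES/NO gap does not hold.
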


Note that our lower bound pertains to the symmetric setting, while our algorithm
addresses the more general asymmetric setting. Hence, these results together
cover both the symmetric and the asymmetric setting.

\medskip
Let us give a brief overview of these results. For the algorithm, we establish
an arrangement of complexity $\Oh(m^d n^d)$ such that the optimal translation
$\tau$ is attained at one of the vertices within this arrangement. Our algorithm
is obtained by computing the \EMD at each vertex.
The lower bound is proven via a reduction from the $k$-Clique problem. In our
construction, each coordinate of the translation $\tau$ chooses one vertex from
a given $k$-Clique instance. We design gadgets that verify that every pair of
selected nodes indeed forms an edge.

\subsection{Open Problems}

\paragraph{EMDuT in 1D.}
Over $\R^1$, we leave open whether there are fast approximation algorithms: Can
a constant-factor approximation be computed in time $\Oh(n^{2-\delta})$ for some constant $\delta > 0$? Or even in time $\tOh(n)$? Can a $(1+\eps)$-approximation be computed in time $\tOh(n^{2-\delta} / \textup{poly}(\eps))$ for some constant $\delta > 0$ (independent of $n$ and $\eps$)? Or even in time $\tOh(n / \textup{poly}(\eps))$?

\paragraph{EMDuT for \boldmath$L_1$ and \boldmath$L_\infty$ metric in higher dimensions.}
For the $L_1$ and $L_\infty$ metric in dimension $d \ge 2$ we leave open to
determine the optimal constant $c>0$ such that the problem can be solved in time
$n^{c\cdot d + o(d)}$.

\paragraph{EMDuT for \boldmath$L_2$ metric in higher dimensions.}
The $L_2$ metric is the most natural measure in geometric settings, making
$\EMDuT_2$ a well motivated problem. The most pressing open problem is to determine the complexity of the $\EMDuT_2$ problem in any dimension $d \ge 2$.

We first observe that the $\EMDuT_2$ problem cannot be solved exactly (on the Real RAM model of computation). 
Indeed, on the Real RAM supporting only the usual arithmetic operations ($+,-,\cdot,/$), if the input numbers are rational then all output numbers are rational. If the machine further supports square roots (or other integral roots), if the input numbers are rational then the output numbers are algebraic. For the Geometric Median problem there are instances with rational input coordinates where the coordinates of the geometric median are not algebraic; this even holds in two dimensions~\cite{Bajaj88}. Therefore, Geometric Median cannot be solved exactly on the Real RAM. 
Finally, note that Geometric Median is a special case of $\EMDuT_2$, as for any
point set $R \subset \R^d$ of size $n$, if $B$ consists of $n$ copies of the
point $(0,\ldots,0)$, then $\EMDuT_2(B,R)$ is the (cost of the) Geometric Median
of $R$. 
Therefore, also $\EMDuT_2$ cannot be solved exactly on the Real RAM.

We therefore need to relax the goal and ask for an approximation algorithm.
Geometric Median has a very fast $(1+\eps)$-approximation algorithm running in
time $\Oh(nd\log^3(1/\eps))$~\cite{CohenLMPS16}, so the reduction from Geometric Median to $\EMDuT_2$ does not rule out very fast approximation algorithms for $\EMDuT_2$.

This is in stark contrast to what we know about the $\EMDuT_2$
problem, as almost all of our techniques in this paper completely fail
for this problem. We neither obtain an algorithm running in time
$n^{\Oh(d)}$, nor can we prove a lower bound ruling out time
$n^{o(d)}$. On the lower bound side, all we know is the lower bound from 1D, ruling out $(1+\eps)$-approximation algorithms running in time $\Oh(n^{2-\delta} / \eps^{o(1)})$ for any constant $\delta > 0$. On the algorithms side, one can observe that after fixing the matching from $B$ to $R$, the problem of finding the optimal translation $\tau$ for this matching is the Geometric Median problem and thus has a $(1+\eps)$-approximation algorithm running in time $\Oh(nd\log^3(1/\eps))$. By trying out all $n^{\Oh(n)}$ possible matchings, one can obtain a $(1+\eps)$-approximation algorithm for $\EMDuT_2$ running in time $n^{\Oh(n)} \log^3(1/\eps)$ for any constant $d$. We pose as an open problem to close this huge gap between the quadratic lower and exponential upper bound (for $(1+\eps)$-approximation algorithms with a $1/\eps^{o(1)}$ dependency on $\eps$ in the running time).


\section{Preliminaries}
We use $[n]$ to denote $\{1,\ldots,n\}$. All logarithms are base $2$. For every
$x \in \mathbb{R}$ we let $\cint{x} \in \mathbb{Z}$ be the unique integer such
that $x-\cint{x} \in (-1/2,1/2]$. Consider a set of blue points $B \subseteq \R^d$ and a set of red points $R \subseteq \R^d$. Fix an $L_p$ norm, for any $1 \le p \le \infty$. Denote by $\Phi$ the set of all injective functions $\phi \colon B \to R$, i.e., $\Phi$ is the set of all perfect matchings from $B$ to $R$. For any matching $\phi \in \Phi$ and any translation $\tau \in \R^d$ we define the cost
\[
  \D_{B,R,p}(\phi,\tau) = \sum_{b \in B} \|b+\tau-\phi(b)\|_p.
\]
We will ignore the subscript $p$ when it is clear from the context.
Note that we can express \EMD and \EMDuT in terms of this cost function as
\begin{align*}
    \EMD_p(B,R) = \min_{\phi \in \Phi} \D_{B,R,p}(\phi,(0,\ldots,0)) \,\text{
    and }\,
  \EMDuT_p(B,R) = \min_{\phi \in \Phi} \min_{\tau \in \R^d} \D_{B,R,p}(\phi,\tau).
\end{align*}

\section{Algorithm in One Dimension}
\label{sec:1D_Algorithm}

We first consider computing $\EMDuT_p(B,R)$ for two point sets $B,R$
in $\R^1$. For ease of presentation, assume that $R$ and $B$ are
indeed sets, and thus there are no duplicate points. We can handle the
case of duplicate points by symbolic perturbation. Observe, that the
distance between a pair of points $b,r$ in any $L_p$ metric is simply
$\|b-r\|_p = \|b-r\|_1 = |b-r|$. In Section~\ref{sub:1d_symmetric}, we
describe a very simple $\Oh(n\log n)$ time algorithm to compute
$\EMDuT_p(B,R)$ (as well as an optimal matching $\phi^*$ and
translation $\tau^*$ that realize this distance) when $B$ and $R$ both
contain exactly $n$ points. In Section~\ref{sub:Asymmetric_1d}, we
consider the much more challenging case where $|B|=m$ and $|R|=n$
differ. For this case we develop an $\Oh(nm(\log n + \log^2 m))$ time
algorithm to compute $\EMDuT_p(B,R)$.

A matching $\phi$ is said to be \emph{monotonically increasing} if and
only if for every pair of blue points $b' < b$ we also have
$\phi(b') < \phi(b)$. We show the
following crucial property.

\begin{restatable}{lemma}{monotonicallyIncreasing}
	\label{lem:1d_order}
	For any $B,R \subset \R$ there is an optimal matching $\phi$ that is
    monotonically increasing.
\end{restatable}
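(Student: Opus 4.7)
The plan is a standard exchange argument based on the Monge property of the absolute-value cost on the line. First, I would reduce to a fixed-translation statement: pick any pair $(\phi^\star, \tau^\star)$ realising $\EMDuT(B,R)$, and let $\widetilde B = B + \tau^\star$. Since translation preserves the order of $B$, a matching $\phi \colon B \to R$ is monotonically increasing iff the corresponding injection $\widetilde B \to R$ is. Hence it suffices to prove the following claim: for any finite $\widetilde B, R \subset \R$ with $|\widetilde B| \le |R|$, among all injections $\phi \colon \widetilde B \to R$ minimising $\sum_{b \in \widetilde B} |b - \phi(b)|$ there is one that is monotonically increasing.

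The key ingredient is a four-point rearrangement inequality: for any reals $a_1 \le a_2$ and $r_2 \le r_1$,
\[
  |a_1 - r_2| + |a_2 - r_1| \;\le\; |a_1 - r_1| + |a_2 - r_2|.
\]
I would derive this either by a short case analysis on the relative order of $a_1, a_2, r_1, r_2$, or one-line from the identity $|x-y| = \max(x,y) - \min(x,y)$; it is simply the Monge (quadrangle) inequality for the cost $|\cdot|$ on $\R$. In matching language, it says that swapping the images of any inverted pair of edges cannot increase the total cost.

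To conclude, among all cost-minimising injections I would pick one $\phi$ that also minimises the number of inversions, i.e., pairs $b < b'$ in $\widetilde B$ with $\phi(b) > \phi(b')$. Suppose for contradiction that $\phi$ has an inversion. Then it has one between two $\widetilde B$-consecutive points $b < b'$: starting from an arbitrary inversion $(b_i, b_j)$ with $j > i+1$, comparing $\phi(b_{i+1})$ against $\phi(b_i)$ and $\phi(b_j)$ shows that one of $(b_i, b_{i+1})$ or $(b_{i+1}, b_j)$ is again an inversion, so by induction on $j-i$ we reach an adjacent one. Now swap the images of $b$ and $b'$ to obtain an injection $\phi'$ whose cost differs from $\phi$'s by exactly $(|b - \phi(b')| + |b' - \phi(b)|) - (|b - \phi(b)| + |b' - \phi(b')|) \le 0$ by the four-point inequality, so $\phi'$ is still optimal. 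But transposing the images of two $\widetilde B$-consecutive points decreases the inversion count by exactly one, contradicting the choice of $\phi$.

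The only subtle point -- the ``main obstacle'', insofar as there is one -- is ensuring that the exchange step strictly decreases the inversion count, which is why I would insist on swapping $\widetilde B$-consecutive inversions; a swap between two far-apart blue points could in principle leave the total inversion count unchanged (or raise it), breaking the termination argument. Everything else is routine, and the asymmetric case $|\widetilde B| < |R|$ requires no modification since the swap only touches the two edges involved and preserves injectivity.
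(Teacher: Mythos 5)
Your proof is correct and follows essentially the same uncrossing/exchange argument as the paper. It is in fact more careful than the paper's version: the paper asserts a \emph{strict} cost decrease when uncrossing an inverted pair (which fails, e.g., when both blue points lie on the same side of both red points, where the two costs are equal), whereas your non-strict Monge inequality combined with the minimal-inversion-count and adjacent-transposition argument correctly handles ties and guarantees termination.
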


\begin{proof}
  We say that $(b,b') \in B \times B$ forms a
  \emph{crossing} in a matching $\phi$ if $b > b'$ and $\phi(b) < \phi(b')$.
  Let $\phi$ be an optimal matching of $B,R \subseteq \R$ with the minimal
  number of crossings.  If $\phi$ does not have any crossing, it is
  monotonically increasing. Hence, for the sake of contradiction assume that
  $(b,b')$ is a crossing in $\phi$.  Let $r = \phi(b')$ and $r' = \phi(b)$ and
  consider a matching $\phi'$ that has $\phi'(b) = r$ and
  $\phi'(b') = r'$, and $\phi'(x) = \phi(x)$ for every $x \in B \setminus
  \{b,b'\}$. 

  We will show that $\D_{B,R}(\phi,0) \ge \D_{B,R}(\phi',0)$. Combined
  with the fact that $\phi'$ has less crossings than $\phi$, this yields a
  contradiction to the choice of $\phi$ as the optimal matching with the minimal
  number of crossings.

  Note that $\D_{B,R}(\phi,0) \ge \D_{B,R}(\phi',0)$ is equivalent to
  \begin{align}\label{eq:1d_order}
    |b - r'| + |b'-r| \ge |b-r| + |b'-r'|.
  \end{align}
  Since $b' < b$ and $r' < r$, inequality~(\ref{eq:1d_order}) follows from the fact below (by setting $x = b'-r$, $\alpha = b-b'$, and $\beta = r-r'$).
\end{proof}
\begin{fact}\label{fact:abs-ineq}
    For every $x \in \R$ and $\alpha,\beta >0$ it holds that
    $|x| + |x+\alpha+\beta| \ge |x+\alpha| + |x+\beta|$.
\end{fact}

\subsection{Symmetric Case}
\label{sub:1d_symmetric}

In the symmetric case ($|R| = |B|$), Lemma~\ref{lem:1d_order} uniquely
defines an optimal matching. Let $B=\{b_1,\ldots,b_n\}$ and
$R=\{r_1,\ldots,r_n\}$ be the points in increasing order. Now, the optimal
translation $\tau^*$ is the value for $\tau$ that minimizes
$\D_{B,R}(\phi,\tau) = \sum_{i=1}^n |b_i - r_i + \tau|$.  Thus, it
corresponds to the median of $b_1 - r_1, \ldots, b_n - r_n$, which we
can compute in $\Oh(n\log n)$ time.

\begin{theorem}
	\label{thm:1d_algo_sym}
	We can compute $\EMDuT(R,B)$ in 1D in $\Oh(n \log n)$ time when $|R| = |B|$.
\end{theorem}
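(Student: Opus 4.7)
The plan is to combine \Cref{lem:1d_order} with the well-known fact that the minimizer of a sum of absolute deviations is the median. Since $B$ and $R$ contain no duplicates (we may assume this by symbolic perturbation, as remarked at the start of \Cref{sec:1D_Algorithm}) and have the same size $n$, the monotonicity condition of \Cref{lem:1d_order} singles out exactly one matching: sort $B=\{b_1<\ldots<b_n\}$ and $R=\{r_1<\ldots<r_n\}$ in increasing order, and set $\phi^*(b_i)=r_i$ for every $i\in[n]$. Crucially, because this canonical matching is independent of the translation $\tau$, we can swap the two minimizations in the definition of $\EMDuT$ and first fix $\phi=\phi^*$, then optimize over $\tau$.

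Having fixed $\phi^*$, the cost function becomes
\[
  \D_{B,R}(\phi^*,\tau) \;=\; \sum_{i=1}^{n} \bigl|b_i+\tau-r_i\bigr| \;=\; \sum_{i=1}^{n} \bigl|\tau-c_i\bigr|, \qquad c_i \deff r_i - b_i.
\]
This is a one-dimensional piecewise-linear convex function of $\tau$, and it is standard that its minimum is attained at the median of $c_1,\ldots,c_n$. Therefore the algorithm is: sort $B$ and $R$, compute the $n$ differences $c_i$, find their median $\tau^*$ in linear time (e.g.\ by a deterministic selection algorithm, or simply by another sort at the same asymptotic cost), and output $\sum_{i=1}^{n} |\tau^*-c_i|$.

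The correctness argument is essentially already contained in \Cref{lem:1d_order} together with the convexity argument above; no step requires more than a line of justification. For the running time, sorting the two point sets dominates at $\Oh(n\log n)$, while computing the differences, the median, and the final sum all take linear time. The only minor subtlety is handling ties among points when $B$ and $R$ are multisets, which is resolved by the symbolic perturbation already mentioned. There is no real obstacle here; the lemma does all the heavy lifting, reducing the problem to a one-dimensional $L_1$-median computation that admits a trivial $\Oh(n\log n)$ algorithm.
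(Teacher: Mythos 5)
Your proposal is correct and follows essentially the same route as the paper: apply \cref{lem:1d_order} to see that the order-preserving matching $b_i \mapsto r_i$ is optimal for every translation, and then observe that $\sum_i |b_i - r_i + \tau|$ is minimized at the median of the differences, computable in $\Oh(n\log n)$ time after sorting. Your explicit remark that the canonical matching is independent of $\tau$ (so the two minimizations can be swapped) is a small justification the paper leaves implicit, but the argument is the same.
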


\subsection{Asymmetric Case} 
\label{sub:Asymmetric_1d}

In this section, we present an $\Oh(mn(\log n + \log^2 m))$ time
algorithm to compute $\EMDuT(B,R)$, for the case that $m \leq
n$. Consider the cost
$f(\tau) = \min_{\phi \in \Phi} \D_{B,R}(\phi,\tau)$ as a function of
$\tau$. The minimum of this function is $\EMDuT(B,R)$. The main idea
is then to sweep over the domain of $f$, increasing $\tau$ from
$-\infty$ to $\infty$, while maintaining (a representation of) $f$ and
a matching $\phi$ that realizes cost $f(\tau) =
\D_{B,R}(\phi,\tau)$. We also maintain the best translation
$\tau^* \leq \tau$ (i.e. with minimal cost) among the translations
considered so far (and if there are multiple such translations, the
smallest one), so at the end of our sweep, $\tau^*$ is thus an optimal
translation.

\paragraph{Properties of $f$.} By Lemma~\ref{lem:1d_order}, for any
$\tau$, there exists an optimal monotonically increasing matching
between $B+\tau$ and $R$. So, we restrict our attention to such
monotonically increasing matchings. Observe that any such matching
$\phi$ corresponds to a partition of $B$ into \emph{runs},
i.e. maximal subsequences of consecutive points, $B_1,\ldots,B_z$, so
that the points $b_{t-k},\ldots,b_t$ in a run $B_i$ are matched to
consecutive red points $r_{u-k},\ldots,r_u$, for some
$r_u=\phi(b_t)$. Moreover, for any such a matching $\phi$, the
function $\D_{B,R}(\phi,\tau)$ is piecewise linear in $\tau$, and each
breakpoint is a translation $\tau$ for which there is a pair
$(b,r) \in B\times R$ with $b+\tau = r$. It then follows that
$f(\tau)$ is also piecewise linear in $\tau$. Furthermore, the
breakpoints of $f$ are of two types. A type (i) breakpoint is a
translation such that there is a pair $(b,r) \in B\times R$ with
$b+\tau = r$, and a type (ii) breakpoint if there are two different
matchings $\phi,\phi'$ that both realize the same minimum cost
$\D_{B,R}(\phi,\tau) = \D_{B,R}(\phi',\tau)$. We show the
following key lemma, which lets us characterize the breakpoints of
type (ii) more precisely.

\begin{figure}[tb]
  \centering
  \includegraphics[width=0.6\textwidth]{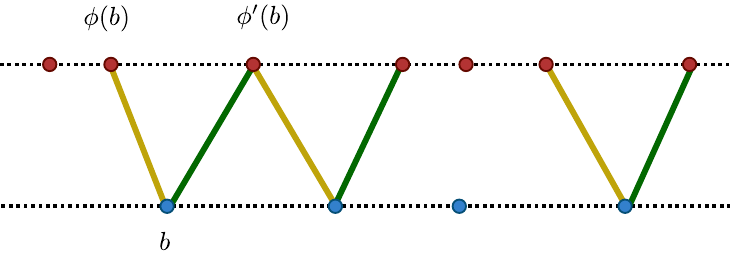}
  \caption{Schematic representation of the graph
    $G = \phi \oplus \phi'$ used in the proof
    of~\cref{lem:move_forward}. Each edge exists if and only if
    exactly one edge from either $\phi$ or $\phi'$ is present. Green
    edges arise from the matching $\phi'$, while yellow edges arise
    from the matching $\phi$.}
  \label{fig:xor-graph}
\end{figure}
\begin{restatable}{lemma}{moveForward}
  \label{lem:move_forward}
  Let $\phi$ be an optimal monotone matching of $\EMD_p(B + \tau,R)$, and let $\phi'$
  be an optimal monotone matching of $\EMD_p(B+\tau',R)$ for some $\tau' > \tau$.
  Then, $\phi'(b) \ge \phi(b)$ for all $b \in B$.
\end{restatable}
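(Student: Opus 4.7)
The plan is to argue by an exchange on the bipartite symmetric-difference graph $G=\phi\otimes\phi'$ sketched in Figure~\ref{fig:xor-graph}. Every $b\in B$ has degree $0$ in $G$ (precisely when $\phi(b)=\phi'(b)$) or degree $2$, and every $r\in R$ has degree at most $2$, so each component of $G$ is an alternating cycle or path. I first rule out cycles using \cref{lem:1d_order}: on a cycle $C$, both $\phi$ and $\phi'$ restrict to monotone bijections between the same ordered blues and the same ordered reds of $C$, so by uniqueness of the monotone bijection $\phi=\phi'$ on $C$, contradicting that $C$ contains XOR edges. Hence every non-trivial component is an alternating path $P$ with endpoints in $R$, one in $\phi(B)\setminus\phi'(B)$ and one in $\phi'(B)\setminus\phi(B)$. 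Enumerating its blues in order as $b_{i_1}<\cdots<b_{i_{k-1}}$ and its reds as $r_{j_1}<\cdots<r_{j_k}$, monotonicity leaves exactly two configurations: either $\phi(b_{i_l})=r_{j_l}$ and $\phi'(b_{i_l})=r_{j_{l+1}}$ for all $l$ (a ``good'' path, with $\phi'>\phi$ throughout $P$), or the opposite assignment (a ``bad'' path, with $\phi'<\phi$ throughout $P$); it suffices to rule out bad paths.

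Assume for contradiction that a bad path exists and define the coordinate-wise exchange
\[
\hat\phi(b_i)\deff\min\bigl(\phi(b_i),\phi'(b_i)\bigr),\qquad \hat\phi'(b_i)\deff\max\bigl(\phi(b_i),\phi'(b_i)\bigr),
\]
indexed by the sorted blues $b_1<\cdots<b_m$. I verify that $\hat\phi$ and $\hat\phi'$ are again monotone injective matchings: the coordinate-wise $\min$ and $\max$ of two strictly increasing sequences are strictly increasing, and a short case analysis on which matching attains the minimum rules out collisions. Setting $x_i\deff b_i+\tau$ and $y_i\deff b_i+\tau'$, the standard one-dimensional ``uncrossing'' inequality applied at every index with $\phi(b_i)>\phi'(b_i)$ gives
\[
[\D(\phi,\tau)+\D(\phi',\tau')]-[\D(\hat\phi,\tau)+\D(\hat\phi',\tau')]\;=\;\sum_{i\,:\,\phi(b_i)>\phi'(b_i)}\Delta_i\;\ge\;0,
\]
and the inequality is strict as soon as the intervals $[\phi'(b_i),\phi(b_i)]$ and $[x_i,y_i]$ intersect for some such $i$.

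Optimality of $\phi$ at $\tau$ and of $\phi'$ at $\tau'$ taken individually yields $\D(\hat\phi,\tau)\ge\D(\phi,\tau)$ and $\D(\hat\phi',\tau')\ge\D(\phi',\tau')$, so the difference above is simultaneously $\le 0$ and $\ge 0$: equality holds, and every bad index $i$ must sit in one of the two degenerate regimes case (a) with $\phi'(b_i)<\phi(b_i)\le x_i$ or case (f) with $y_i\le\phi'(b_i)<\phi(b_i)$. Expanding the two individual inequalities separately and using that (a) and (f) partition the bad indices reduces both to the single identity
\[
\sum_{i\in(a)}\bigl(\phi(b_i)-\phi'(b_i)\bigr)\;=\;\sum_{i\in(f)}\bigl(\phi(b_i)-\phi'(b_i)\bigr),
\]
a polynomial relation on the coordinates of $R$ that fails generically under the symbolic-perturbation convention in force throughout Section~\ref{sec:1D_Algorithm}, so both index sets must be empty, contradicting the assumed bad path. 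The main technical hurdle is precisely this final step: showing that the algebraic identity forced by joint optimality cannot hold non-trivially under symbolic perturbation, so that the degenerate regimes (a) and (f) cannot actually sustain a bad path at the optimum.
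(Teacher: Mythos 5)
Your opening moves coincide with the paper's: the same XOR graph $G=\phi\otimes\phi'$, the same observation that components are paths with red endpoints, and the same two-configuration description of a path (your ``shift by one'' structure is exactly what the paper's \cref{claim:increasing-seq} establishes via a crossing argument). Your cycle elimination via uniqueness of the monotone bijection between the cycle's blues and reds is arguably cleaner than the paper's picture-based argument. The divergence is in the endgame, and that is where the gap sits. The paper does not perform your pointwise $\min/\max$ exchange with two separate \emph{non-strict} inequalities; it adds the two \emph{strict} optimality inequalities $\D_{B,R}(\phi,\tau) < \D_{B,R}(\phi',\tau)$ and $\D_{B,R}(\phi',\tau') < \D_{B,R}(\phi,\tau')$ (justified, implicitly, by the fact that a bad index forces $\phi\neq\phi'$ together with the perturbation making optima unique) and then applies the quadrangle inequality $\abs{x+c_1}+\abs{x+c_2}-\abs{x}-\abs{x+c_1+c_2}\le 0$ for $c_1,c_2>0$ to every summand, obtaining $0<0$ at once. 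No analysis of when the quadrangle inequality is tight is ever needed.

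Your route forces you into exactly those tight cases --- your regimes (a) and (f) are precisely where the quadrangle inequality degenerates to equality --- and the step you yourself flag as the main hurdle is a genuine gap. The identity $\sum_{i\in(a)}(\phi(b_i)-\phi'(b_i))=\sum_{i\in(f)}(\phi(b_i)-\phi'(b_i))$ is a linear relation among red coordinates, and nothing in the setup licenses the claim that it ``fails generically'': the symbolic perturbation invoked in Section~\ref{sec:1D_Algorithm} only resolves coincident points, it does not make the coordinates algebraically independent, and the lemma must hold for arbitrary real inputs (for instance integer grids, where such relations hold routinely). You would also need to verify that the relation cannot balance trivially, i.e.\ that its coefficients do not all cancel. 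And even under a full perturbation by independent infinitesimals, you would still owe an argument that the conclusion --- a combinatorial statement about which matchings are optimal --- transfers back to the unperturbed instance. The clean fix is to abandon the genericity appeal and adopt the paper's finish: invoke strictness of the optimality inequalities (one strict inequality suffices), so that the summed exchange inequality is strict and the tight cases never need to be examined.
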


\begin{proof}
  We assume that all the points are distinct, as otherwise we can
  perturb them infinitesimally to resolve ties. Consider a bipartite
  graph $G$ defined as follows: the vertices of $G$ are $B \uplus R$,
  and we add an edge $(b,r) \in B \times R$ to $G$ if exactly one of
  the following conditions holds: (i) $\phi(b) = r$ and
  $\phi'(b) \neq r$, or (ii) $\phi'(b) = r$ and $\phi(b) \neq r$. The
  graph $G$ can be thought of as the exclusive-or of the matchings
  $\phi$ and $\phi'$, see \cref{fig:xor-graph}. We will now
  demonstrate that the connected components of this graph are
  paths. Then, considering that the matchings are monotone, it follows
  that the edges of these paths are non-crossing. This implies that
  consecutive red vertices on these paths are monotone. The lemma statement then
  easily follows.

  Let $C$ be any connected component of $G$ that consists of more than
  one vertex.
    \begin{claim}
        $C$ is a path.
    \end{claim}
    \begin{proof}
        Observe that the maximum degree of graph $G$ is $2$, so the connected
        components of $G$ consist of cycles and paths.
        Assume that $C$ is a cycle. In that case, however, there exists a
		pair of edges from either $\phi$ or $\phi'$ that intersect (see~\cref{fig:cycle}). This contradicts
        the assumption about the monotonicity of both $\phi$ and $\phi'$.
        Hence, $C$ is not a cycle, and the proof of the claim follows.
    \end{proof}
	\begin{figure}[tb]
	  \centering
	  \includegraphics[width=0.4\textwidth]{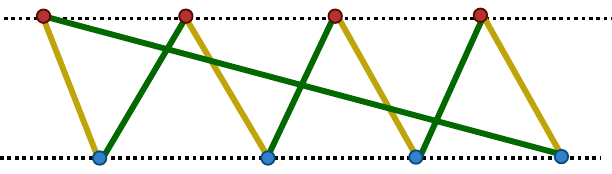}
	  \caption{Case when connected component of $G$ is a cycle.}
	  \label{fig:cycle}
	\end{figure}
    Now we know that $C$ is a path. Let $\{p_1,\ldots,p_\ell\} = V(C)$ be the
    consecutive vertices on the path~$C$. We have the freedom to
    select the order of endpoints; hence, without loss of generality,
    assume that $p_1$ is on the left of $p_\ell$, i.e., $p_1 < p_\ell$.
    \begin{claim}
        Vertices $p_1$ and $p_\ell$ are in $R$.
    \end{claim}
    \begin{proof}
        Every vertex $b \in B$ has degree $2$ in $G$ unless $b$ is an isolated vertex.
        However, both vertices $p_1$ and $p_\ell$ are endpoints of a path, which
        means that $p_1$ and $p_\ell$ have degree $1$ in $G$. This means that $p_1,p_\ell \notin B$.
    \end{proof}

    Since $G$ is a bipartite graph, this means that $\ell$ is odd.
    We assumed that $|V(C)| > 1$, so $\ell \ge 3$. Observe that if $i$ is odd, then $p_i \in R$, and if $i$ is even, then $p_i \in B$. Now,
    we show that red and blue vertices in $C$ are monotone:
    \begin{claim}\label{claim:increasing-seq}
        For every $i \in [\ell-2]$, it holds that $p_i < p_{i+2}$.
    \end{claim}
	\begin{proof}
		When $\ell = 3$, the claim holds because we have assumed $p_1 <
		p_3$. Hence, we can assume that $\ell \ge 5$ (since $\ell$ is odd).
		For the sake of contradiction, assume that $p_i > p_{i+2}$ for some $i$.
		Let $t_i \coloneqq p_{i+2} - p_i$ for $i \in [\ell-2]$. This means that
		there exists $i \in [\ell-2]$ with $t_i < 0$. Moreover, we have $p_\ell > p_1$,
		which means that $\sum_{\text{odd } i} t_i > 0$. Therefore, there also exists $j \in
		[\ell-2]$ with $t_j > 0$.

		In particular, there exists an index $k \in [\ell-2]$ such that $t_k \cdot
		t_{k+1} < 0$. This means that either (a) $p_k >
		p_{k+2}$ and $p_{k+1} < p_{k+3}$, or (b) $p_k < p_{k+2}$
		and $p_{k+1} > p_{k+3}$. In both of these cases, the intervals
        $(\min\{p_k,p_{k+1}\},\max\{p_k,p_{k+1}\})$ and
        $(\min\{p_{k+2},p_{k+3}\},\max\{p_{k+2},p_{k+3}\})$ intersect (see~\cref{fig:crossing}), which
		contradicts the assumption about the monotonicity of $\phi$
        and $\phi'$.
	\end{proof}
	\begin{figure}[tb]
	  \centering
	  \includegraphics[width=0.6\textwidth]{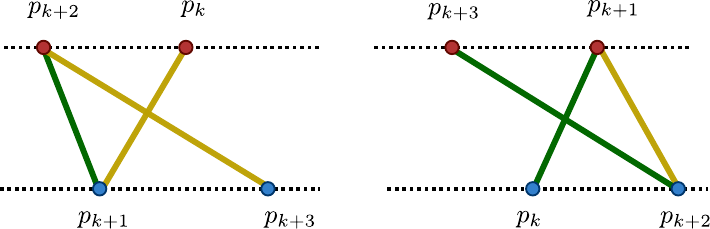}
	  \caption{Two cases of~\cref{claim:increasing-seq} in which a crossing occurs.}
	  \label{fig:crossing}
	\end{figure}

	Now, we continue the proof of~\cref{lem:move_forward}.
	For the sake of contradiction, assume that $\phi'(b) < \phi(b)$ for some $b
	\in B$. By~\cref{claim:increasing-seq}, there exists a connected component
    $C$ of $G$ such
    that $\phi'(b) <
    \phi(b)$ for every $b \in C \cap B$. Fix any such component $C$ and for every $b \in B$ let:
    \begin{align*}
        \psi(b) \coloneq \begin{cases}
            \phi'(b) & \text{ if } b \in C \cap B,\\
            \phi(b) & \text{ otherwise.}
        \end{cases}
        && \text{ and } && 
        \psi'(b) \coloneq \begin{cases}
            \phi(b) & \text{ if } b \in C \cap B,\\
            \phi'(b) & \text{ otherwise.}
        \end{cases}
    \end{align*}
    Notice that $\psi$ and $\psi'$ are both matchings of 
    $\EMD_p(B + \tau,R)$ and $\EMD_p(B + \tau',R)$ respectively. Now, recall that the cost of each matching is:
	\begin{align*}
        \D_{B,R}(\phi,\tau)   =& \sum_{b \in C\cap B} \abs{b+\tau-\phi(b)} + \sum_{b \in B\setminus C} \abs{b+\tau-\phi(b)},\\
        \D_{B,R}(\phi',\tau') =& \sum_{b \in C\cap B} \abs{b+\tau'-\phi'(b)} + \sum_{b \in B\setminus C} \abs{b+\tau'-\phi'(b)},\\
        \D_{B,R}(\psi,\tau) =& \sum_{b \in C\cap B} \abs{b+\tau-\phi'(b)} + \sum_{b \in B\setminus C} \abs{b+\tau-\phi(b)},  \text{ and }\\
        \D_{B,R}(\psi',\tau')  =& \sum_{b \in C\cap B} \abs{b+\tau'-\phi(b)} + \sum_{b \in B\setminus C} \abs{b+\tau'-\phi'(b)}.
	\end{align*}
	Since $\phi$ and $\phi'$ are optimal matchings for $\tau$ and $\tau'$
	respectively, we have:
    \begin{align}\label{eq:cost-matching}
		\D_{B,R}(\phi,\tau) \le \D_{B,R}(\psi,\tau)  && \text{ and } && \D_{B,R}(\phi',\tau') \le \D_{B,R}(\psi',\tau').
	\end{align}
        \newcommand{\RedR}{{R}}
        \newcommand{\BluB}{{B}}
        \newcommand{\typeone}  {\BluB\BluB\RedR\RedR\xspace}
        \newcommand{\typetwo}  {\RedR\RedR\BluB\BluB\xspace}
        \newcommand{\typethree}{\RedR\BluB\BluB\RedR\xspace}
        \newcommand{\typefour} {\BluB\RedR\BluB\RedR\xspace}
        \newcommand{\typefive} {\RedR\BluB\RedR\BluB\xspace}
        \newcommand{\typesix}  {\BluB\RedR\RedR\BluB\xspace}

        We say that $b \in B$ is a \emph{crossing} if $\phi'(b) < \phi(b)$.
        Notice that every $b \in C \cap B$ is a crossing.
        Next, we classify crossings into types based on the order of the points:
        \begin{itemize}
            \item\textbf{Type \typeone}: $b +\tau < b+\tau' \le \phi'(b) < \phi(b)$,
            \item\textbf{Type \typetwo}: $\phi'(b) < \phi(b) \le b+\tau < b+\tau'$,
            \item\textbf{Type \typethree}: $\phi'(b) \le b+\tau < b+\tau' \le \phi(b)$,
            \item\textbf{Type \typefour}: $b +\tau \le \phi'(b) < b+\tau' \le \phi(b)$,
            \item\textbf{Type \typefive}: $\phi'(b) \le b+\tau  < \phi(b) \le b+\tau'$,
            \item\textbf{Type \typesix}: $b +\tau \le \phi'(b) < \phi(b) \le b+\tau'$.
        \end{itemize}

\begin{figure}[tb]
  \centering
  \includegraphics[width=0.4\textwidth]{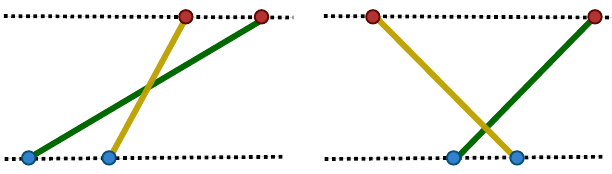}
  \caption{Illustration of crossing types. The left figure shows a \typeone crossing
  and the right figure shows a \typethree crossing.}
  \label{fig:suffix_functions}
\end{figure}
        Note that for any crossing $b$ of type \typeone or \typetwo it holds that:
        \begin{displaymath}
            |\phi(b) - b-\tau| + |\phi'(b) - b-\tau'| = |\psi(b) - b-\tau| + |\psi'(b) - b-\tau'|.
        \end{displaymath}
        Moreover, for any crossing $b$ of type \typethree, \typefour, \typefive or \typesix, denoting by $x_1 \le x_2 < x_3 \le x_4$ the numbers
        $b+\tau,b+\tau',\phi(b)$ and $\phi'(b)$ in sorted order, we have:
        \begin{align*}
            |\phi(b) - b-\tau| + |\phi'(b) - b-\tau'| & = x_3 + x_4 - x_1 - x_2,\\
            |\psi(b)-b-\tau| + |\psi'(b) - b - \tau'| & = x_2 + x_4 - x_1-x_3.\\
        \end{align*}
        Since $x_3 > x_2$, it follows that for any crossing $b$ of type \typethree, \typefour, \typefive or \typesix we have:
        \begin{displaymath}
            |\phi(b) - b-\tau| + |\phi'(b) - b-\tau'| > |\psi(b) - b - \tau| +
            |\psi'(b) - b-\tau'|.
        \end{displaymath}
        
        By summing up these inequalities over all $b \in C \cap B$, if at least one crossing in $C$ has type \typethree, \typefour, \typefive or \typesix we obtain
        \begin{align*}
            \D_{B,R}(\phi,\tau) + \D_{B,R}(\phi',\tau') > \D_{B,R}(\psi,\tau) + \D_{B,R}(\psi',\tau'),
        \end{align*}
        contradicting inequalities~(\ref{eq:cost-matching}).
        
        It remains to consider the case that all crossings in $C$ have type \typeone or \typetwo. 
        Note that if the leftmost crossing $b = \min (B \cap C)$ has type \typeone, then we can improve the matching $\phi$ by changing $\phi(b)$ to $\phi'(b)$; this contradicts the assumption that $\phi$ is an optimal monotone matching of $\EMD_p(B+\tau',R)$. Symmetrically, we obtain a contradiction if the rightmost crossing $b = \max (B \cap C)$ has type \typetwo. Finally, note that since $C$ is a connected component and both $\phi$ and $\phi'$ are
        monotone, if $C$ has a leftmost crossing of type \typetwo and a rightmost crossing of type \typeone, then in between these two there must be at least one crossing of type \typethree, \typefour, \typefive or \typesix. As this case was handled in the previous paragraph, we finished the proof.
\end{proof}

We can now come back to breakpoints and runs of $\phi$.
\begin{corollary}
  \label{cor:suffix_changes}
  A breakpoint $\tau$ of type (ii) corresponds to a pair of
  optimal monotonically increasing matchings $\phi, \phi'$ for which for all
  points $b \in B$ we have $\phi(b) \leq \phi'(b)$. Furthermore,
  consider a run $b_s,\ldots,b_{t}$ of $\phi$ and a point $b_i$ with
  $i \in \{s,\ldots,t\}$. If $\phi(b_i) < \phi'(b_i)$, then
  $\phi(b_j) < \phi'(b_j)$ for all $j \in \{i,\ldots,t\}$.
\end{corollary}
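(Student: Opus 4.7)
The plan is to derive both claims from Lemma~\ref{lem:move_forward} combined with the combinatorics of runs.

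For the first claim, at a type (ii) breakpoint $\tau$ two distinct monotone matchings $\phi,\phi'$ both minimize $\D_{B,R}(\cdot,\tau)$. Since $f$ is piecewise linear with only finitely many breakpoints, for all sufficiently small $\varepsilon>0$ one of the two matchings (call it $\phi$) remains optimal at $\tau-\varepsilon$ and the other (call it $\phi'$) remains optimal at $\tau+\varepsilon$. Applying Lemma~\ref{lem:move_forward} to $\tau-\varepsilon<\tau+\varepsilon$ then yields $\phi(b)\le\phi'(b)$ for every $b\in B$.

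For the run-preservation claim, I would enumerate $R=\{r_1<\cdots<r_n\}$ and use that $b_s,\ldots,b_t$ is a run of $\phi$: there is an index $u$ with $\phi(b_{s+k})=r_{u+k}$ for $k=0,\ldots,t-s$, so $\phi(b_j)=r_{u+j-s}$ throughout the run. Monotonicity of $\phi'$ lets me write $\phi'(b_j)=r_{\sigma(j)}$ with $\sigma$ strictly increasing on $B$, and the first claim gives $\sigma(j)\ge u+j-s$ on the run. The hypothesis $\phi(b_i)<\phi'(b_i)$ sharpens this to $\sigma(i)\ge u+i-s+1$. For any $j\in\{i,\ldots,t\}$, strict monotonicity of $\sigma$ then propagates the gap forward: $\sigma(j)\ge\sigma(i)+(j-i)\ge u+j-s+1>u+j-s$, which is exactly $\phi(b_j)<\phi'(b_j)$.

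I expect the main subtlety to be in the first step, since Lemma~\ref{lem:move_forward} is stated for matchings optimal at two \emph{different} translations, whereas the definition of a type (ii) breakpoint gives two matchings optimal at the \emph{same} $\tau$. The perturbation argument above avoids this by peeling off the breakpoint on either side, using only the fact that the breakpoints of $f$ are isolated. Once this is in hand, the rest is straightforward index bookkeeping on the consecutive red points forming the run, which is why the statement naturally splits into these two rather different-looking parts.
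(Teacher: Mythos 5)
Your proposal is correct and matches the paper's intent: the paper states this as an immediate consequence of Lemma~\ref{lem:move_forward} and omits the proof entirely, and your two steps (perturbing off the breakpoint to either side so that the lemma's hypothesis of two \emph{distinct} translations applies, then the index bookkeeping on consecutive red points within a run using strict monotonicity of $\phi'$) are exactly the details being elided. No gaps.
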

\begin{proof}
	Recall that $\tau$ is a breakpoint of type (ii) if there are two distinct matchings $\phi$ and $\phi'$ that both realize the optimal cost at $\tau$. Assume that $\phi$ is also optimal for a translation $\tau' < \tau$ or that $\phi'$ is also optimal for a translation $\tau'' > \tau$. Now by Lemma~\ref{lem:move_forward} we have $\phi(b) \leq \phi'(b)$ for every $b \in B$.
	Next, recall that $b_s, \dots, b_t$ form a run if $\phi$ matches them to consecutive red points. Since $\phi(b_{i+1})$ is matched to the red point right after $\phi(b_i)$, now $\phi'(b_i) > \phi(b_i)$ implies in particular that $\phi'(b_i) \geq \phi(b_{i+1})$. With monotonicity this gives us that  $\phi'(b_{i+1}) > \phi'(b_i) \geq \phi(b_{i+1})$. By induction, the same holds for the remaining $j \in \{i+2, \dots, t\}$.
\end{proof}

\begin{lemma}
  \label{lem:piecewise_linear_function_with_events}
  The function $f(\tau)$ is piecewise linear, and consists of
  $\Oh(nm)$ pieces.
\end{lemma}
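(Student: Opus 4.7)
The plan is to bound the two types of breakpoints introduced just before the lemma statement, and then combine them. Between any two consecutive breakpoints, $f$ coincides with $\D_{B,R}(\phi,\tau)=\sum_{b\in B}|b+\tau-\phi(b)|$ for a single optimal monotone matching $\phi$, and no individual term $|b+\tau-\phi(b)|$ changes sign on that subinterval (otherwise we would be sitting at a type (i) breakpoint). Each such maximal interval is therefore linear in $\tau$, so it suffices to upper-bound the total number of breakpoints.

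Type (i) breakpoints are exactly the values $\tau=r-b$ for $(b,r)\in B\times R$, giving at most $|B|\cdot|R|=nm$ such breakpoints. The main step is bounding the type (ii) breakpoints, for which I would invoke Corollary~\ref{cor:suffix_changes}: at such a breakpoint the optimal monotone matching changes from some $\phi$ to some $\phi'$ with $\phi(b)\le \phi'(b)$ for every $b\in B$, and $\phi(b)<\phi'(b)$ for at least one $b$. Consequently, for each fixed $b\in B$, the red index $i$ such that $\phi(b)=r_i$ is a nondecreasing function of $\tau$ taking values in $\{1,\ldots,n\}$, so it can change at most $n-1$ times over the entire sweep $\tau\in\R$. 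Summing over the $m$ blue points, the total number of image changes is at most $m(n-1)$; since every type (ii) breakpoint triggers at least one such change, there are $\Oh(nm)$ type (ii) breakpoints, and adding the two counts yields the claimed $\Oh(nm)$ pieces.

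The main obstacle is that several blue points might update simultaneously at the same type (ii) breakpoint, which at first glance could inflate the count; however, since we only need an upper bound, charging each breakpoint to any single one of its contributing index increases is enough. A minor technical point is that the sweep is effectively finite: for $|\tau|$ sufficiently large, the optimal monotone matching stabilizes to the unique one assigning $B$ to the $m$ leftmost (respectively rightmost) points of $R$, so no type (ii) breakpoints occur outside a bounded interval, and the charging argument above is therefore well-defined.
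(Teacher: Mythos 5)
Your proposal is correct and follows essentially the same approach as the paper: type (i) breakpoints are counted directly as one per pair in $B\times R$, and type (ii) breakpoints are charged to increases of the matched red index of some blue point, which by the monotonicity from Lemma~\ref{lem:move_forward}/Corollary~\ref{cor:suffix_changes} can happen at most $n$ times per blue point. The paper phrases this as each pair $(b_i,r_j)$ being "left behind" at most once, which is the same charging argument.
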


\begin{proof}
  As argued above, $f$ is piecewise linear. What remains is to
  argue that there are $\Oh(nm)$ breakpoints. For every pair of points
  $(b_i,r_j) \in B \times R$ there is only one translation $\tau$ such
  that $b+\tau=r$, so clearly there are at most $\Oh(nm)$ breakpoints
  of type (i). At every breakpoint of type (ii), there is at least one
  blue point $b_i$ that was matched to $r_j$ and gets matched to some
  $r_k$ with $k > j$. This also happens at most once for every pair
  $b_i,r_j$. Hence, the number of breakpoints of type (ii) is also
  $\Oh(nm)$.
\end{proof}

In our sweep line algorithm, we will maintain a current optimal
matching $\phi$. At each breakpoint of type (i) we will have an event
to update the cost function of the matching. Furthermore, it follows
from Corollary~\ref{cor:suffix_changes} that when we sweep over a
breakpoint of type (ii), we can decompose the changes to the matching
using a series of \emph{atomic} events. In each such atomic event
there is some suffix $b_j,\ldots,b_t$ of a run $b_s,\ldots,b_t$ that $\phi$
currently matches to $r_{u-t+j},\dots,r_u$ that will become matched to
$r_{u-t+j+1},\ldots,r_{u+1}$. As we argued in the proof of
Lemma~\ref{lem:piecewise_linear_function_with_events}, the total
number of such events is only $\Oh(nm)$. Next, we express how we can
efficiently compute the next such atomic event, and handle it.

Consider a run $B_i=b_s,\ldots,b_t$ induced by $\phi$ at time $\tau$. Our aim
is to find the smallest $\tau' \geq \tau$ at which there is
an atomic type (ii) event involving a suffix $b_j,\ldots,b_t$ of
$B_i$. Hence, for a given suffix $b_j,\ldots,b_t$, we wish to
maintain when it starts being beneficial to match $b_j,\ldots,b_t$ to
$r_{u-t+j+1},\ldots,r_{u+1}$ rather than to $r_{u-t+j},\ldots,r_u$.

Let $\Delta'_j$ represent the change in cost
when we match $b=b_j$ to $r'=r_{v+1}$ rather than to $r=r_v$, ignoring that
$r_{v+1}$ may already be matched to some other blue point. We have that
\[
  \Delta'_j(\tau) = |b - r' + \tau| - |b - r + \tau| =
  \begin{cases}
    r' - r & \text{if } b + \tau \leq r, \\
    r + r' - 2b - 2\tau & \text{if } r < b + \tau < r', \\
    r - r' & \text{if } b + \tau \geq r'.
  \end{cases}
\]

\begin{figure}[tb]
  \centering
  \includegraphics[width=\textwidth]{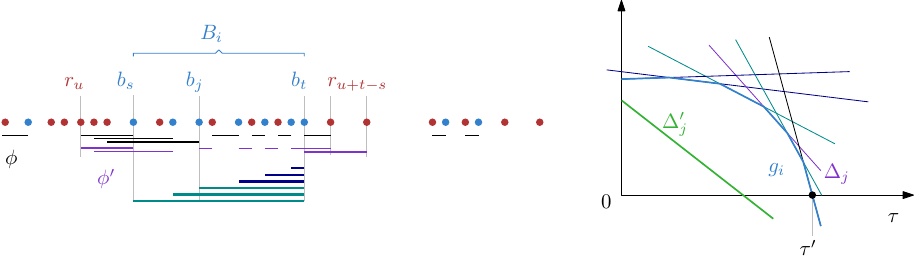}
  \caption{Each point $b_j$ in a run $B_i=b_s,\ldots,b_t$
    defines a (piecewise)-linear function $\Delta'_j$. Each suffix
    $b_j,\ldots,b_t$ then defines a linear function $\Delta_j$, expressing
    the cost of switching from matching $\phi$ to $\phi'$. The lower
    envelope $g_i$ of these functions then defines the first type (ii)
    event $\tau'$ of run $B_i$.}
  \label{fig:suffix_functions}
\end{figure}

Observe that this function is piecewise linear, and non-increasing. Moreover, the breakpoints coincide with type (i)
breakpoints of $f$ at which $b+\tau$ coincides with a red
point. Hence, in between any two consecutive events, we can consider
$\Delta'_j$ as a linear function. See Figure~\ref{fig:suffix_functions}
for an illustration.

We can then express the cost of changing the matching for the entire
suffix $b_j,\ldots,b_t$ as $\Delta_j(\tau) = \sum_{k=j}^t
\Delta'_k(\tau)$. This function is again decreasing, piecewise
linear, and has breakpoints that coincide with type (i) breakpoints of
$f$. When $\Delta_j(\tau)$ becomes non-positive it becomes beneficial
to match the suffix $b_j,\ldots,b_t$ to $r_{u-j+1},\ldots,r_{u+1}$. Hence,
the first such translation is given by a root of
$\Delta_j(\tau)$. Note that there is at most one such root since
$\Delta_j$ is decreasing.

It now follows that (if it exists) the root $\tau'$ of the function
$g_i(\tau) = \min_{j \in \{s,\ldots,t\}} \Delta_j(\tau)$ expresses the earliest
time that there is a suffix $b_j,\ldots,b_t$ for which it is beneficial to
update the matching. As before, this function is decreasing and
piecewise linear. Hence, we obtain:

\begin{lemma}
  \label{lem:run_event}
  Let $[\tau_1,\tau'] \ni \tau$ be a maximal interval on which
  $f(\tau)$ is linear, let $\tau'$ be a type (ii) breakpoint,
  and let $\phi$ be an optimal matching for $\tau$. Then there is a
  run $B_i$ induced by $\phi$, and $\tau'$ is a root of the
  function $g_i(\tau)$.
\end{lemma}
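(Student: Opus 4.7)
The plan is to combine \cref{cor:suffix_changes} with the fact that each $\Delta_j$ is non-increasing, and derive $g_i(\tau')=0$ from a two-sided bound on the first atomic shift. Because $[\tau_1,\tau']$ is a maximal linear interval of $f$, the matching $\phi$ remains an optimal matching throughout the entire closed interval $[\tau_1,\tau']$ (no other matching achieves lower cost before $\tau'$). Consequently, for every run $B_i=b_s,\ldots,b_t$ induced by $\phi$ and every index $j\in\{s,\ldots,t\}$, the suffix shift reassigning $b_j,\ldots,b_t$ from $r_{u-j},\ldots,r_u$ to $r_{u-j+1},\ldots,r_{u+1}$ cannot strictly improve the cost on this interval, i.e.\ $\Delta_j(\tau)\ge 0$ for every $\tau\in[\tau_1,\tau']$. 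In particular, $g_i(\tau')\ge 0$ for every run $B_i$.

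Next, since $\tau'$ is a type~(ii) breakpoint, there is a second optimal monotonically increasing matching $\phi'\neq\phi$ at $\tau'$ with $\D_{B,R}(\phi,\tau')=\D_{B,R}(\phi',\tau')$. \Cref{cor:suffix_changes} tells us that $\phi(b)\le\phi'(b)$ for all $b\in B$, and that on each run of $\phi$ the set of blue points where the two matchings differ is a suffix. I would then apply the decomposition argument sketched just before the statement of the lemma: write the transition $\phi\mapsto\phi'$ as a sequence of atomic suffix-shift events, and examine the first one. That event lives inside some run $B_i=b_s,\ldots,b_t$ and acts on some suffix $b_j,\ldots,b_t$ currently matched to $r_{u-j},\ldots,r_u$, producing an intermediate monotonically increasing matching $\phi''$ whose cost is exactly $\D_{B,R}(\phi,\tau')+\Delta_j(\tau')$. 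Since the remaining atomic shifts are suffix shifts on pairwise disjoint runs and, by iterating the same argument applied to $\phi''$ in place of $\phi$, can only weakly decrease the cost further down to $\D_{B,R}(\phi',\tau')=\D_{B,R}(\phi,\tau')$, we must have $\Delta_j(\tau')\le 0$.

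Combining the two bounds gives $\Delta_j(\tau')=0$, so $g_i(\tau')=\min_{k\in\{s,\ldots,t\}}\Delta_k(\tau')\le 0$; together with $g_i(\tau')\ge 0$ from the first paragraph we conclude $g_i(\tau')=0$, i.e.\ $\tau'$ is a root of $g_i$. The main obstacle I anticipate is the decomposition of $\phi\mapsto\phi'$ into atomic, cost-monotone suffix shifts: one has to verify that each intermediate matching is again monotonically increasing and that each individual atomic step weakly decreases the total cost at $\tau'$, so that the very first step already satisfies $\Delta_j(\tau')\le 0$ rather than only the full telescoping sum. This is exactly the content of \cref{cor:suffix_changes} applied inductively, and once that is in place the rest of the argument is a short calculation.
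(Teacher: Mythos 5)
The first half of your argument is sound: on the maximal linear piece $[\tau_1,\tau']$ the matching $\phi$ stays optimal, the shifted matching obtained from any suffix $b_j,\ldots,b_t$ of a run is a valid monotone matching because run-maximality guarantees $r_{u+1}$ is unmatched, and hence $\Delta_j(\tau)\ge 0$ on the whole piece, so $g_i(\tau')\ge 0$ for every run. The gap is in the second half, which is where the actual content of the lemma lies. You write that the atomic shifts after the first one ``can only weakly decrease the cost further down to $\D_{B,R}(\phi',\tau')=\D_{B,R}(\phi,\tau')$'' and conclude $\Delta_j(\tau')\le 0$. This does not follow: if the cost is non-increasing from $\D_{B,R}(\phi'',\tau')=\D_{B,R}(\phi,\tau')+\Delta_j(\tau')$ down to $\D_{B,R}(\phi',\tau')=\D_{B,R}(\phi,\tau')$, then telescoping gives $\Delta_j(\tau')\ge 0$ --- the inequality you already had. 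The reading that would rescue the conclusion (``every atomic step is $\le 0$ and the total is $0$, hence every step, in particular the first, equals $0$'') requires first proving that every atomic step is non-positive, which is essentially the statement being proved; and your justification, ``iterating the same argument applied to $\phi''$'', in fact yields the opposite sign, since the argument of your first paragraph shows that shifts away from an \emph{optimal} matching weakly \emph{increase} cost (and $\phi''$ is not even known to be optimal). \Cref{cor:suffix_changes} is purely structural --- it constrains \emph{where} $\phi$ and $\phi'$ differ but says nothing about the costs of intermediate matchings --- so it cannot supply the missing cost-monotonicity.

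What the $\le 0$ direction really needs is an exchange argument: take $\phi'$ optimal for translations slightly beyond $\tau'$ (so $\phi'\ne\phi$, and by \cref{lem:move_forward} and continuity $\phi'$ is also optimal at $\tau'$ with $\phi'\ge\phi$ pointwise), pick a run on which they differ, and show that the \emph{one-step} shift of the differing suffix is at $\tau'$ at least as cheap as whatever $\phi'$ does on it --- e.g.\ via the Monge-type inequality $|b+\tau-r|+|b'+\tau-r'|\le|b+\tau-r'|+|b'+\tau-r|$ for $b<b'$ and $r<r'$, or by observing that $\D_{B,R}(\phi',\cdot)<\D_{B,R}(\phi,\cdot)$ just to the right of $\tau'$ forces some $\Delta_j(\tau'')<0$ there, whence $\Delta_j(\tau')=0$ by continuity and the $\ge 0$ bound. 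To be fair, the paper states the lemma as an immediate consequence of the preceding informal discussion and does not spell this step out either; your proposal follows the same route, but the one step you set out to make rigorous is exactly the one that remains unjustified.
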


\paragraph{Representing the lower envelope $g_i$.} At any moment of
our sweep, we maintain a single piece of $g_i$. Hence, this piece is
the lower envelope of a set of linear functions
$\Delta_s, \ldots, \Delta_t$. We will maintain this lower envelope
using an adapted version of the data structure by Overmars and van
Leeuwen~\cite{overmars1981}. Ideally, we would maintain the lower
envelope of $\Delta_s, \ldots, \Delta_t$ directly. However,
reassigning a single blue point $b_j$ in the matching $\phi$, may
cause many functions $\Delta_k$ to change. So, we implicitly represent
each function $\Delta_j$ as a sum of $\Delta'_k$ functions.

\begin{restatable}{lemma}{lowerEnvelopeDataStructure}
  \label{lem:type_2_datastructure}
  Let $B_i$ be a run of size $k$. Using $\Oh(k\log k)$ space, we can represent
  the current piece of the lower envelope $g_i$ such that we can find
  the root of (this piece of) $g_i$ in $\Oh(\log k)$ time, and insert
  or remove any point in $B_i$ in $\Oh(\log^2 k)$ time.
\end{restatable}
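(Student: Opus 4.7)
The plan is to adapt the Overmars--van Leeuwen dynamic convex-hull structure~\cite{overmars1981} so that it maintains the lower envelope $g_i$ of the suffix-sum lines $\Delta_j$. Within the current piece of linearity of $f$, each $\Delta'_k(\tau) = \alpha_k \tau + \gamma_k$ is a fixed line with slope $\alpha_k \in \{0,-2\}$, hence $\Delta_j(\tau)$ is a line with slope $A_j = \sum_{k \ge j}\alpha_k \le 0$. In particular every $\Delta_j$ is non-increasing, so $g_i$ is non-increasing and its root (if any) is unique, which will let me answer the query by a single top-down descent provided the envelope is accessed through a sufficiently flexible representation.

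I would keep the indices $s,s+1,\ldots,t$ at the leaves of a balanced BST in index order. At each internal node $v$ whose subtree spans $[j_1,j_2]$ I would store the aggregate line $S_v = \sum_{k=j_1}^{j_2}\Delta'_k$ together with a succinct representation of the lower envelope of the \emph{relative} lines
\[
  \tilde{\Delta}^{(v)}_j(\tau) := \sum_{k=j}^{j_2}\Delta'_k(\tau), \qquad j_1 \le j \le j_2.
\]
Working with relative rather than absolute suffix sums is the key design choice: the $\tilde{\Delta}^{(v)}_j$ depend only on $\Delta'_k$ for $k$ in the subtree of $v$, so an insertion or deletion outside that subtree leaves them untouched, and only $\Oh(\log k)$ nodes ever need to recompute.

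For a node $v$ with children $v_L$ on $[j_1,j_m]$ and $v_R$ on $[j_m+1,j_2]$ the merge rule is
\begin{align*}
  \tilde{\Delta}^{(v)}_j &= \tilde{\Delta}^{(v_R)}_j && \text{for } j \in [j_m+1,j_2], \\
  \tilde{\Delta}^{(v)}_j &= \tilde{\Delta}^{(v_L)}_j + S_{v_R} && \text{for } j \in [j_1,j_m],
\end{align*}
so the envelope at $v$ is the pointwise minimum of the envelope at $v_R$ and the envelope at $v_L$ translated uniformly by the single line $S_{v_R}$. This is exactly the step that Overmars and van Leeuwen treat for dynamic hulls: since both child envelopes are already sorted by slope, a parallel binary search locates in $\Oh(\log k)$ time the single bridge $\tau^{\text{bridge}}$ at which the merged envelope switches from one child's contribution to the other's, and $v$ keeps only concatenable-queue descriptors of the portions of each child's envelope that become hidden. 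An insertion or deletion in $B_i$ is then a standard BST update touching $\Oh(\log k)$ ancestors; for each we detach the stashed hidden lines (returning them to the corresponding child), recompute $S_v$, and redo the bridge computation in $\Oh(\log k)$ time, giving the claimed $\Oh(\log^2 k)$ total. For the root query we exploit that each envelope is non-increasing: starting at the BST root, at every internal node we evaluate $S_{v_R}$ and the right child's stored envelope to decide on which side of $\tau^{\text{bridge}}$ the sign of $g_i$ changes, and recurse into that side; this locates the root in $\Oh(\log k)$ steps.

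The main delicate point is that the classical Overmars--van Leeuwen analysis assumes the line sets at the two children are static between bridge computations, whereas here the left child's envelope is effectively shifted by $S_{v_R}$. I would resolve this by carrying $S_{v_R}$ at $v$ as a lazy additive label on the left child and pushing it down only when that child is next touched; since a uniform translation by a single line preserves both the slope order of the lines and all pairwise crossings, the bridge search and the hidden-line bookkeeping go through unchanged.
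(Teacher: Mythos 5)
Your proposal is correct and follows essentially the same route as the paper's proof: both adapt the Overmars--van Leeuwen structure by keeping the leaves in index order (which coincides with slope order since each $\Delta'_k$ has non-positive slope) and by representing the suffix-sum lines $\Delta_j$ implicitly through subtree aggregates of the $\Delta'_k$, accumulated as additive offsets along root-to-leaf paths, so that an update to a single $\Delta'_k$ touches only $\Oh(\log k)$ nodes and each bridge recomputation costs $\Oh(\log k)$. Your explicit remark that shifting a child's line set by the single line $S_{v_R}$ preserves slope order and all pairwise crossings is a nice justification of a step the paper leaves implicit, but it is the same construction.
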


\newcommand{\E}{\ensuremath{\mathcal{E}}\xspace}
\begin{proof}
  We will maintain this lower envelope using (a slightly adapted
  version of) the data structure by Overmars and van
  Leeuwen~\cite{overmars1981}. They present a data structure to store
  a lower envelope of $k$ lines while allowing queries such as line
  intersections in $\Oh(\log k)$ time, as well as insertions and
  deletions in $\Oh(\log^2 k)$ time. We could insert the lines
  (representing) $\Delta_s, \ldots, \Delta_t$ into this data structure
  so we can efficiently find the root of $g_i$. However, an update to
  $\phi$ causes an update to a function $\Delta'_k$, which in turn could cause changes
  in many functions $\Delta_j$, and could require significant
  changes to $g_i$. We therefore extend the data structure of Overmars
  and van Leeuwen to support such updates efficiently.

  \paragraph{The Overmars and van Leeuwen data structure.} The data
  structure by Overmars and van Leeuwen is a balanced binary tree,
  whose leaves store the lines $\Delta_s,\ldots,\Delta_t$ in the
  decreasing slope order. For each node $v$, let $\Delta^{v}$ denote the
  (ordered) set of lines stored in the leaves below $v$, and let $E^v$
  denote the ordered set of lines that appear in left-to-right order
  on the lower envelope $\E^v$ of $\Delta^v$. Overmars and van Leeuwen
  observe that (i) this set of lines $E^v$ defining the lower envelope
  $\E^v$ actually forms an ordered subset of the lines in $\Delta^v$,
  i.e. $E^v\subseteq \Delta^v$, and (ii) that $E^v$ consists of a
  prefix of $E^\ell$ concatenated with a suffix of $E^r$, where $\ell$
  and $r$ are the left and right child of $v$, respectively. See
  Figure~\ref{fig:overmars_van_leeuwen}.

  \begin{figure}[tb]
    \centering
    \includegraphics{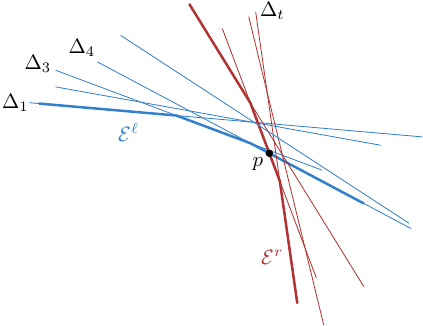}
    \caption{The set of lines $E^v$ contributing to the lower envelope
      $\E^v$ are ordered by decreasing slope. Furthermore, the lower
      envelopes $E^\ell$ and $E^r$ of the left and right child of $v$
      intersect in a single point $p$. Therefore, $E^v$ consists of a
      prefix of $E^\ell$ concatenated with a suffix of $E^r$.
    }
    \label{fig:overmars_van_leeuwen}
  \end{figure}

  The main idea is then to annotate each node $v$ with this ordered
  set $E^v$. Line intersection queries can then be answered in
  $\Oh(\log k)$ time by using the $E^v$ set of the root. To insert or
  delete a line $\Delta_j$, we follow a root-to-leaf path. 
  While walking back up the search path, we recompute $E^v$ from
  $E^\ell$ and $E^r$. In particular, by computing the intersection
  point of $\E^\ell$ and $\E^r$ and a constant number of split and
  concatenate operations on the involved ordered sets $E^v$. These
  operations can be implemented in $\Oh(\log k)$ time, thus leading to an
  $\Oh(\log^2 k)$ update time.

  Observe that the total size of all these $E^v$ sets is $\Oh(k\log
  k)$. Overmars and van Leeuwen reduce the space usage to $\Oh(k)$ by
  not explicitly storing the $E^v$ sets. Instead, they make sure that
  every node only stores the lines from $E^v$ that were not stored by
  the parent of $v$, and the relevant sets $E^v$ are reconstructed
  when performing updates. Furthermore, concatenating a prefix of
  $E^\ell$ with a suffix of $E^r$ to form $E^v$ typically destroys
  $E^\ell$ and $E^r$. They carefully describe how and which
  information to maintain to restore the sets appropriately.

  \paragraph{Our data structure.} We make two small, but important,
  changes to the above data structure.

  \begin{itemize}
  \item We observe that in our case the (line representing)
    $\Delta_j(\tau)$ never has greater slope than $\Delta_{j+1}$,
    because $\Delta_j(\tau) = \Delta'_j(\tau) + \Delta_{j+1}(\tau)$
    and $\Delta'_j$ has a non-positive slope. This means that, at any
    time, the lines $\Delta_s,\ldots,\Delta_t$ are already ordered by
    decreasing slope. Hence, we can use the indices of the functions
    to do the routing; i.e. each node $v$ will store the index $s^v$
    of the rightmost leaf in its left subtree rather than its slope.

  \item We extend the data structure so that given a linear function
    $f$ with non-positive slope and a value $j \in \{s,\ldots,t\}$, we can efficiently add $f$ to all
    linear functions $\Delta_j,\ldots,\Delta_t$.

    Observe that we are adding $f$ to the lines that already have
    the smallest slope, so this does not change the overall ordering,
    and adding $f$ to all functions in an ordered set
    $\Delta^v$ does not change the ordering of the (slopes of the)
    lines inside $\Delta^v$ either. Moreover, the combinatorial structure of its
    lower envelope $\E^v$ remains unchanged: a function $\Delta_a$ is
    the $h^\mathrm{th}$ function in $E^v$ if and only if $\Delta_a+f$
    is the $h^\mathrm{th}$ function in the lower envelope of
    $\{\Delta_c+f \mid \Delta_c \in E^v\}$.

    Hence, each node $v$ in our tree will store some additional linear
    function $f^v$ that we still have to add to all functions in the
    subtree rooted at $v$. As just argued, this allows us to represent
    the lower envelope $\E^v$ corresponding to node $v$ using $E^v$
    and $f^v$.

    We will use the same augmentation in the binary search trees that
    represent $E^v$ themselves; each node $\nu$ will store some linear
    function $f^\nu$ that should still be added to all functions
    stored in its subtree. We explicitly store these sets $E^v$ in
    persistent red black trees using path copying~\cite{driscoll89makin}, so that
    we can still have access to the original sets $E^\ell$ and $E^r$
    after ``combining'' them into $E^v$. This does not affect the
    running times: we can split, concatenate, and search using these
    structures in $\Oh(\log k)$ time, but it increases the total size of
    our structure to $\Oh(k\log k)$.

    Using the above representation, we can still compute the
    intersection of a (query) line with $\E^v$ in $\Oh(\log k)$ time:
    such a search follows some root-to-leaf path in the tree representing $E^v$,
    so at every step we can locally apply the offset $f^\nu$
    corresponding to the visited node $\nu$.

    Similarly, inserting or deleting a function still takes
    $\Oh(\log^2 k)$ time. When we visit node $v$, we ``push'' its offset
    towards its children. When moving back up the path, we compute the
    intersection point between two envelopes $\E^\ell$ and
    $\E^r$. This involves some simultaneous root-to-leaf path
    traversal of traversals of the trees representing $E^\ell$ and
    $E^r$, so we can again locally apply the offsets involved. Hence,
    this still takes $\Oh(\log k)$ time. We then split and concatenate
    $E^\ell$ and $E^r$ into $E^v$ in additional $\Oh(\log k)$
    time. (Note that this is where we use that every node $\nu$ in the
    tree representing $E^v$ can store its own offset $f^\nu$). Hence,
    the total time required is still $\Oh(\log^2 k)$.

    To add $f$ to all functions $\Delta_j,\ldots,\Delta_t$, we simply add
    $f$ to $f^r$, for all $\Oh(\log k)$ nodes $r$ hanging from the
    search path to $\Delta_j$ (in particular if the search path visits
    a node $v$ and its left child, we add $f$ to the right child $r$
    of $v$). Note that we can determine which child of node $v$ to
    visit by comparing $j$ against the index $s^v$. When we walk back
    up the path, we recompute $E^v$ from $E^\ell$ and $E^r$ as
    before. Hence, the total time required is $\Oh(\log^2 k)$.
\end{itemize}

\paragraph{Using the data structure.} We maintain the functions
$\Delta_s,\ldots,\Delta_t$ in the above data structure. The lower envelope
$\E^v$ represented by the root $v$ is then exactly the function
$g_i$. Hence, we can compute the root of $g_i$ by a line-intersection
query on $\E^v$. To insert a new point $b_j$, we insert the linear
function $\Delta_j$ into the structure, and increment the existing
functions $\Delta_{j+1},\ldots,\Delta_t$ by $\Delta'_j$. This takes
$\Oh(\log^2 k)$ time. Deleting a point is analogous.
\end{proof}

\paragraph{The main algorithm.} Our main algorithm sweeps the space
of all possible translations, while maintaining an optimal matching
$\phi$ for the current translation $\tau$, a representation of the
current piece of the function $f$ (i.e., the linear function $f'$ for
which $f(\tau)=f'(\tau)$), and the best translation $\tau^* \leq \tau$
found so far. To support the sweep, we also maintain a
Lemma~\ref{lem:type_2_datastructure} data structure for each run $B_i$
induced by $\phi$, and a global priority queue. The
Lemma~\ref{lem:type_2_datastructure} data structure allows us to
efficiently obtain the next type (ii) event of a run $B_i$. The global
priority queue stores all type (i) events, as well as the first type (ii)
event of each run.

\begin{figure}[tb]
  \centering
  \includegraphics{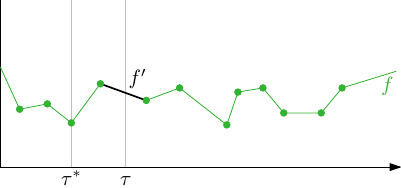}
  \caption{We sweep the domain of $f$, while maintaining a
    representation of the current piece $f'$ of $f$, and the best
    translation $\tau^* \leq \tau$ found so far. Breakpoints
    correspond to type (i) or type (ii) events.}
  \label{fig:sweep_functions}
\end{figure}

We initialize the priority queue by inserting all translations for
which a pair $(b,r) \in B \times R$ coincide as type (i) events. Let
$\tau_0$ be the first such event. For a translation $\tau < \tau_0$,
the matching $\phi$ that assigns $b_i$ to $r_i$ is optimal (by
Lemma~\ref{lem:1d_order}). Hence, we use $\phi$ as the initial
matching. We compute the corresponding function $f'$ expressing the
cost of $\phi$, construct the data structure of
Lemma~\ref{lem:type_2_datastructure} on the single run induced by
$\phi$, and query it for its first type (ii) event. We add this event
to the priority queue. All of this can be done in $\Oh(mn)$ time.

To handle a type (i) event involving point $b_j$, we remove it from
the data structure for its run and add it back in the same
place with its updated linear function $\Delta_j'(\tau)$. We query the
data structure to find the next type (ii) event of the run $B_i$
containing $b_j$, and update the event of $B_i$ in the global priority
queue if needed. Finally, if $b_j$ is aligned with $\phi(b_j)$ in the event,
we update $f'$ by adding the function $2(b_j+\tau-\phi(b_j))$
and evaluate it. Handling an event of type (i) takes $\Oh(\log n + \log^2 m)$ time, as it involves
a constant number of operations in the global priority queue, each
taking $\Oh(\log(nm)) = \Oh(\log n)$ time, and a constant number of
operations involving the Lemma~\ref{lem:type_2_datastructure} data
structures, each taking $\Oh(\log^2 m)$ time.

To handle a type (ii) event where the matching changes for points
$b_j, \ldots, b_t \in B_i$, we remove each point from the data
structure for $B_i$ and then add them to the run they are now
a part of (which can be either the existing run $B_{i+1}$ or a new run
in between $B_i$ and $B_{i+1}$). This takes
$\Oh(\log^2 m)$ time per point, but as argued in
Lemma~\ref{lem:piecewise_linear_function_with_events} each point can
only be involved in $\Oh(n)$ events of this type, so over all events,
this takes $\Oh(nm \log^2 m)$ time. We then
recompute the type (ii) events corresponding to the at most two
affected runs in $\Oh(\log m)$ time, and update them in the
global priority queue in $\Oh(\log n)$ time.
Here, we update $f'$ by adding the (linear) cost function $\Delta_i(\tau)$
associated with the event.

Thus, we handle a total of $\Oh(nm)$ events of type (i), each taking
$\Oh(\log n + \log^2 m)$ time, and $\Oh(nm)$ events of type (ii),
which take a total of $\Oh(nm(\log n + \log^2 m))$ time as well.

Once we have processed all events, the algorithm has found an optimal
translation $\tau^*$. We run the sweep once more from the start, and
stop at translation $\tau^*$, then report the current matching $\phi$
as an optimal matching. Together
with Theorem~\ref{thm:1d_algo_sym}, this thus establishes
Theorem~\ref{thm:1Dalgo}.


\section{Lower Bound in One Dimension}
\label{sec:1Dlowerbound}

\newcommand{\NearestRed}{\textup{NearestCell}}
\newcommand{\LHS}{\textup{LHS}}

In the Orthogonal Vectors problem (OV), we are given two sets of vectors $X,Y
\subseteq \{0,1\}^d$ with $n = |X| = |Y|$ and the task is to decide whether there exist $x \in
X$ and $y \in Y$ with $x \cdot y = 0$, where $x \cdot y = \sum_{i=1}^d x[i]
\cdot y[i]$. A naive algorithm solves this problem in time $\Oh(|X|^2 d)$.

\begin{hypothesis}[Orthogonal Vectors Hypothesis (OVH)~\cite{ov-seth,icm-survey}] \label{hyp:ovh}
    No algorithm solves the Orthogonal Vectors problem in time $\Oh(n^{2-\delta} d^c)$ for any constants $\delta,c > 0$.
\end{hypothesis}

In this section, we prove the following theorem.
\begin{theorem}\label{thm:1d-lb}
	Assuming OVH, for any constant $\delta > 0$ there is no algorithm that,
  given sets $B,R \subseteq \R$ of size $|R| \ge |B| = \Omega(n)$, computes $\EMDuT(B,R)$ in time $\Oh(n^{2-\delta})$. This even holds with the additional restriction $B,R \subseteq \{0,1,\ldots,\Oh(n^4)\}$.
\end{theorem}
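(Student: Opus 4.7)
The plan is to prove this 1D lower bound via a fine-grained reduction from Orthogonal Vectors. Given an OV instance $X, Y \subseteq \{0,1\}^d$ of size $n_0$ with $d = \Theta(\log n_0)$ (so that OVH applies), the goal is to build in linear time point sets $B, R \subseteq \{0, 1, \ldots, \Oh(n_0^4)\}$ with $n \coloneqq |R| = \tOh(n_0)$ and $|B| = \Omega(n)$, together with an explicit threshold $\theta$, such that $\EMDuT(B,R) \le \theta$ iff there exist $i, j$ with $x_i \cdot y_j = 0$. Any $\Oh(n^{2-\delta})$-time algorithm for $\EMDuT$ would then decide OV in time $\Oh(n_0^{2-\delta}\polylog n_0)$, contradicting OVH.

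The construction will place, for each $x_i$, a blue ``selector block'' $B_i$ of $\Oh(d)$ points around a base coordinate $P_i$, and for each $y_j$ a red selector block $R_j$ around $Q_j$. I will choose $P_i = i$ and $Q_j = jN$ with $N = \poly(n_0)$ so that the $n_0^2$ candidate alignment translations $\tau_{i,j} \coloneqq Q_j - P_i$ are pairwise distinct and separated by at least $1$. Inside a block I use a standard bit-encoding of the inner product: place a blue point at $P_i + 2k$ whenever $x_i[k] = 1$, and a red point at $Q_j + 2k + y_j[k]$ for every $k \in [d]$. By the monotone-matching property (\cref{lem:1d_order}), when the matching aligns $B_i$ with $R_j$ at translation $\tau_{i,j}$, the contribution of this block pair to the EMD cost is exactly $x_i \cdot y_j$---zero iff the vectors are orthogonal. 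To keep $|B| = \Omega(|R|)$ regardless of the Hamming weights of the vectors, I will pad each block with matching blue-red pairs that trivially contribute zero to the cost.

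The main challenge, which drives the overall design, is accounting for the ``background cost'' contributed by the non-selected blocks $B_{i'}$ with $i' \ne i$ at translation $\tau_{i,j}$. My plan is to augment $R$ with a carefully chosen family of auxiliary red points so that the monotone matching is forced to pair every unaligned blue point with a prescribed auxiliary red, collecting an identical total cost $C$ at every $\tau_{i,j}$. Establishing that $C$ is truly invariant across the entire grid $\{\tau_{i,j}\}$ is the crux of the argument and is where I expect most of the technical bookkeeping to lie; the plan is to exploit the large separation $N$ between red blocks together with \cref{lem:1d_order} to uniquely determine the ``off-diagonal'' matching. A complementary step will argue that any translation $\tau \notin \{\tau_{i,j}\}$ misaligns every selector block by at least one unit, so the cost strictly exceeds $C$. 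Combining these ingredients gives $\EMDuT(B, R) = C + \min_{i,j} x_i \cdot y_j$, so setting $\theta \coloneqq C$ reduces OV to the decision version of computing $\EMDuT$. The coordinate bound $\Oh(n_0^4)$ and total point count $\tOh(n_0)$ follow by taking $N = \Oh(n_0^2)$ and gadget width $\Oh(d) = \Oh(\log n_0)$.
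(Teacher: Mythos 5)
Your high-level strategy matches the paper's: reduce from OV, encode orthogonality of a pair $(x_i,y_j)$ as a zero-vs-$\ge 1$ cost gap in a local gadget, arrange things so that the candidate translations form a grid $\{\tau_{i,j}\}$ each aligning exactly one pair, and argue the ``background cost'' of all unaligned gadgets is a constant $C$ independent of $(i,j)$. However, there is a genuine gap at exactly the step you defer as bookkeeping, and your proposed layout makes that step unsalvageable. With $P_i=i$ and $Q_j=jN$, at translation $\tau_{i,j}$ \emph{all} $n_0$ blue blocks land within distance $n_0$ of the single red block $R_j$, at offsets $\{i'-i : i'\ne i\}$. Two problems follow. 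First, to absorb the $\Omega(n_0 d)$ unaligned blue points near $R_j$ you would need $\Omega(n_0 d)$ auxiliary red points \emph{per} red block, i.e.\ $\Omega(n_0^2 d)$ red points in total, which blows the instance up to quadratic size and destroys the reduction; if instead the auxiliary reds sit elsewhere, the matching cost is $\Omega(N)$ per point and dominated by terms depending on the layout. Second, and more fundamentally, even if each unaligned blue block is absorbed at a cost that is an affine function $c_1|t|+c_2$ of its offset $t$, the total background cost is $c_1\sum_{i'\ne i}|i'-i| + (n_0-1)c_2$, and $\sum_{i'\ne i}|i'-i|$ genuinely depends on $i$ (it is minimized for $i$ in the middle and maximized at the ends), with variation $\Theta(n_0^2)$ --- vastly larger than the $0$-vs-$1$ orthogonality signal. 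So $C$ is \emph{not} invariant over the grid, and $\EMDuT(B,R)=C+\min_{i,j}x_i\cdot y_j$ fails.

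The missing idea is the paper's periodic ``vernier'' arrangement: blue cells are placed with period $n\Delta$ and red cells with period $(n-1)\Delta$, with the red pattern replicated several times (five copies) so that wrap-around is available, and each red cell carries $8d$ absorber points at each end so that an unaligned blue cell at offset $t$ is matched entirely into its \emph{nearest} red cell at cost exactly $c_1|t|+c_2$ (\cref{lem:orth-gadgets}). Because of the periodicity, the multiset of offsets of the $n-1$ unaligned blue cells is the same (up to the cyclic identity of \cref{prop:tautology}, which evaluates the sum to $(n^2-1)/4$) for \emph{every} aligned pair, which is what makes $C$ invariant; \cref{prop:tau} and \cref{prop:small-eps} then handle translations off the grid. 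Without some equivalent mechanism, your construction does not establish the claimed equivalence, so the proof is incomplete. (Two smaller issues: with $P_i=i$ and block width $\Theta(d)$ your blue blocks overlap; and your per-coordinate gadget must be checked against the constraint that each red point absorbs only one blue point --- the paper uses a four-slot complementary encoding per coordinate for this reason.)
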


Observe that this immediately implies~\cref{thm:ov-lb} because each
coordinate is bounded by a polynomial. Hence, from now on, we focus on
the proof of~\cref{thm:1d-lb}. 

We start by briefly sketching the reduction.
As a building block, we design vector gadgets that for two vectors 
$x \in X \subseteq \{0,1\}^d$ and $y \in Y \subseteq \{0,1\}^d$ are
sets of points $B(x)$ and $R(y)$ such that $\EMDuT(B(x),R(y)) = 0$ if $x$ and $y$ are
orthogonal and $\EMDuT(B(x),R(y)) \ge 1$ otherwise. This gadget is constructed
by encoding 
$x[i]$ and $y[i]$ coordinate-wise with the geometric patterns presented in~\cref{fig:vectors}.
Importantly, these gadgets share the same barycentre, therefore, when two
such patterns are translated far apart (i.e., their convex hulls do not
intersect), then their $\EMD$ does not depend on the coordinates of the vectors.
In~\cref{sub:gadgets} we formally construct these vector gadgets.

This allows us to construct a gadget such that for each
translation $\tau$ only a single pair of vector gadgets overlaps, and the
$\EMD$  between any other pair of vector gadgets does not depend on the
coordinates of the vectors.
In~\cref{sub:reduction} we show construct $B$ and $R$ as such collection of appropriately-spaced copies of
gadgets $B(x)$ and $R(y)$ for $x \in X$ and $y \in Y$ in such a way that there
exists a threshold $\Lambda$ such that
$\EMDuT(B, R) = \Lambda$ if there are two orthogonal vectors in $X \times Y$ and
$\EMDuT(B, R) \ge \Lambda + 1$ otherwise.
We formally prove these properties in~\cref{sub:equivalence,sub:properties}.

\subsection{Vector Gadgets}\label{sub:gadgets}

We construct two different types of gadgets depending on whether a vector belongs to set $X$ or $Y$ (see~\cref{fig:vectors} for illustration):
\begin{definition}[Red Vectors]
    For any vector $x \in \{0,1\}^d$, define a set of points $R(x)$ to consist of:
    \begin{itemize}
	\item $8d$ points at the coordinate $0$,
	\item $8d$ points at the coordinate $4d+1$, and
    \item for every $i \in \{1,\ldots,d\}$:
    \begin{itemize}
        \item if $x[i] = 0$, add points $\{4i-3,4i-2,4i-1,4i\}$,
        \item if $x[i] = 1$, add points $\{4i-2,4i-1\}$.
    \end{itemize}
    \end{itemize}
\end{definition}
\begin{definition}[Blue Vectors]
    For any vector $y \in \{0,1\}^d$, define a set of points $B(y)$ to consist of:
    \begin{itemize}
    \item one point at the coordinate $0$, 
    \item one point at the coordinate $4d+1$, and 
    \item for every $i \in \{1,\ldots,d\}$:
    \begin{itemize}
        \item if $y[i] = 0$, add points $\{4i-2,4i-1\}$,
        \item if $y[i] = 1$, add points $\{4i-3,4i\}$.
    \end{itemize}
    \end{itemize}
\end{definition}
\begin{figure}[tb]
  \centering
  \includegraphics[width=0.5\textwidth]{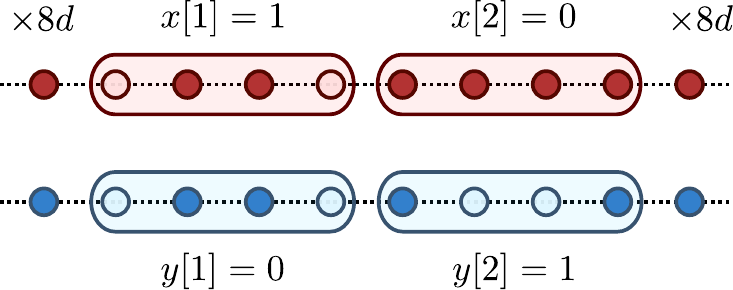}
  \caption{Gadgets for red and blue vectors in $d=2$. The top figure shows
  $R(x)$ for $x=(1,0)$, and the bottom figure shows $B(y)$ for $y=(0,1)$. Since $x$ and $y$ are orthogonal, each blue point corresponds to a red point with the same coordinate.}
  \label{fig:vectors}
\end{figure}

Next, we show that the above gadgets simulate the orthogonality. 

\begin{restatable}{lemma}{orthGadgets}\label{lem:orth-gadgets}
    Let $w = 4d+1$ be the width of the gadget.
    Let $x,y \in \{0,1\}^d$ be $d$-dimensional vectors. 
    \begin{enumerate}[noitemsep]
        \item\label{lem:orth-gadgets:it1} If $x$ and $y$ are orthogonal then
        $\EMD(B(y), R(x)) = 0$.
        \item\label{lem:orth-gadgets:it2} If $x$ and $y$ are not orthogonal then
            $\EMD(B(y)+\tau,R(x)) \ge \max\{1,|\tau|\}$ for all $\tau \in \mathbb{R}$.
    \end{enumerate}
    Moreover, if $|\tau| \ge w$, then we have $\EMD(B(y) + \tau,R(x)) = |\tau| \cdot
    c_1 - c_2$, where $c_1 = 2(d+1)$ and $c_2 = 4d^2+7d+1$.
\end{restatable}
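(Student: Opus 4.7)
The plan is to tackle the four assertions in order of increasing difficulty. For item~(1), I exhibit an explicit zero-cost injection coordinate by coordinate: when $y[i]=0$, the pair $\{4i-2,4i-1\}\subseteq B(y)$ lies in $R(x)$ regardless of $x[i]$; when $y[i]=1$, orthogonality forces $x[i]=0$, so $R(x)$ contains the full quadruple $\{4i-3,4i-2,4i-1,4i\}\supseteq\{4i-3,4i\}$. Matching each blue point to a red point at the same coordinate yields cost zero. For the exact formula when $|\tau|\ge 4d+1$, I observe that the shifted blue set lies entirely on one side of the red interval $[0,4d+1]$, so every matching has $\D(\phi,\tau)=|S_B+2d\tau-S_\phi|$, where $S_B=\sum_{i=1}^d(8i-3)=4d^2+d$ is independent of $y$ (both choices of $B(y)$ at index $i$ sum to $8i-3$). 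Placing all $2d$ matched red points among the $8d$ copies at the far endpoint maximises $S_\phi=2d(4d+1)$, giving $\EMD=2d|\tau|-4d^2-d$.

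For the uniform bound $\EMD\ge|\tau|$, the plan is a cumulative-distribution argument. By \cref{lem:1d_order} there is an optimal monotone matching $\phi$; letting $S\subseteq R(x)$ be its image and $F_\bullet$ denote the CDF, the 1D transport cost satisfies $\D(\phi,\tau)=\int |F_{B+\tau}(x)-F_S(x)|\,dx$. Integrating the signed difference gives $\int (F_S-F_{B+\tau})\,dx=S_B+2d\tau-S_S$, so $\D(\phi,\tau)\ge |S_B+2d\tau-S_S|$. I then exploit the rigidity of $R(x)$: the achievable integer values $S_S$ are restricted by which reds are available, and the $8d$-fold endpoint multiplicities force a discrete jump of at least $1$ whenever mass is shifted between an endpoint and an interior point. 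Combining this with the identity above allows me to show that the minimum of $|S_B+2d\tau-S_S|$ over admissible $S$ is at least $|\tau|$.

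Item~(2) then splits into two regimes. For $|\tau|\ge 1$, the bound $\EMD\ge 1$ follows immediately from $\EMD\ge|\tau|$. For $|\tau|<1$, non-orthogonality yields an index $i^*$ with $x[i^*]=y[i^*]=1$, so $\{4i^*-3,4i^*\}\subseteq B(y)$ while $R(x)$ at index $i^*$ contains only $\{4i^*-2,4i^*-1\}$; a local neighborhood analysis at index $i^*$, coupled with the CDF bound applied to a sub-interval, then shows that any matching must pay at least $1$ to absorb the two missing positions. The step I expect to be the main obstacle is the uniform $\EMD\ge|\tau|$ bound for small $|\tau|$: the naive signed-sum inequality can become trivial whenever some integer value of $S_S$ happens to lie close to $S_B+2d\tau$, so the argument must leverage the specific combinatorial structure of $R(x)$ to rule out admissible $S_S$-values within $|\tau|$ of $S_B+2d\tau$, rather than relying on the signed sum in isolation.
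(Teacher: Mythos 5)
Your item~(1) and your derivation of the exact formula for $|\tau|\ge 4d+1$ are correct and essentially identical to the paper's: once all translated blue points lie strictly on one side of $[0,4d+1]$, the cost of any matching is the signed sum $|S_B+2d\tau-S_\phi|$, and the $8d$ endpoint copies let you realize the extremal value $S_\phi=2d(4d+1)$ (resp.\ $S_\phi=0$), giving $2d|\tau|-4d^2-d$. The splitting of item~(2) into the regimes $|\tau|\ge 1$ (reduce to the uniform bound) and $|\tau|<1$ (use the index $i^*$ with $x[i^*]=y[i^*]=1$, at which $B(y)$ has a point at $4i^*$ whose distance to every red point is at least $1-|\tau|$) also mirrors the paper.

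The genuine gap is in your plan for the uniform bound $\EMD(B(y)+\tau,R(x))\ge|\tau|$, and the repair you propose for the obstacle you (correctly) identify cannot work. The inequality $\D(\phi,\tau)\ge|S_B+2d\tau-S_S|$ is valid, but your target claim --- that $|S_B+2d\tau-S_S|\ge|\tau|$ for every admissible image $S$ --- is false, so there is nothing to ``rule out.'' Concretely, take $d=1$, $x=y=(0)$, so $B(y)=\{2,3\}$, $S_B=5$, $R(x)=\{0^{(\times 8)},1,2,3,4,5^{(\times 8)}\}$, and $\tau=0.4$. The monotone matching $2.4\mapsto 2$, $3.4\mapsto 4$ has image $S=\{2,4\}$ with $S_S=6$, and $|S_B+2d\tau-S_S|=|5.8-6|=0.2<0.4=|\tau|$. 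Note that $S$ uses only interior red points, so the $8d$-fold endpoint multiplicities give no ``discrete jump'' here, and since $R(x)$ contains a block of consecutive integers, the achievable sums $S_S$ have no useful gaps. The lemma survives this example only because the \emph{unsigned} cost $\int|F_{B+\tau}-F_S|$ of that matching is $0.4+0.6=1.0$, far exceeding the signed integral; so any correct argument must lower-bound the unsigned transport cost directly rather than the net displacement. (Restricting ``admissible $S$'' to images of \emph{optimal} matchings does not rescue the plan either: then you would need to know a priori that the optimal image avoids such $S$, which is exactly what is to be proved.) The paper avoids this entirely with a short direct argument: for $\tau\ne 0$ the extremal blue points of the gadget cannot all be matched to red points at their own (untranslated) coordinates, forcing cost at least $|\tau|$; the same issue propagates into your $|\tau|<1$ case of item~(2) wherever you invoke ``the CDF bound applied to a sub-interval.''
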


\begin{proof}
    For the proof of Property~\ref{lem:orth-gadgets:it1}, assume that $x$ and
    $y$ are orthogonal. We construct a matching such that $\EMD(B(y),R(x)) = 0$.
    Fix an index $i \in \{1,\ldots,d\}$. If $y[i]=1$, then $x[i] = 0$; therefore, points at the coordinates $\{4i-3,4i\}$ exist in $R(x)$
    and we can match the points of $B(y)$ at these coordinates. If $y[i] = 0$, then
    note that the points at coordinates $\{4i-2,4i-1\}$ of $R(x)$ always exist. Hence, we can
    precisely match the points of $B(y)$ at these coordinates.

    Next, we prove Property~\ref{lem:orth-gadgets:it2}. Consider any $\tau \in
    \mathbb{R}$. At least one point of $B(y)$ (namely the leftmost or
    the rightmost point) must be matched by an edge of length $|\tau|$. This proves Property~\ref{lem:orth-gadgets:it2} if $|\tau| \ge 1$. So consider the case $|\tau| < 1$. If $x$ and $y$
    are not orthogonal, there exists an $i \in \{1,\ldots,d\}$ such that
    $x[i] = y[i] = 1$. By construction, $B(y)$ contains points at the
    coordinates $4i$ and $4i-3$, but $R(x)$ does not contain such points. Thus, the point $4i$ in $B(y)$ is matched to a point in distance at least $1 - |\tau|$. In addition to the cost of $|\tau|$ incurred by the leftmost or rightmost point of $B(y)$, we obtain
    $\EMD(B(y),R(x)) \ge 1$.

    Finally, we focus on the proof of the last property. Consider any $\tau \in
    \mathbb{R}$ with $|\tau|
    \ge w$. Observe that $B(y)+\tau$ contains $2(d+1)$ points in total. The closest
    point in $R(x)$ to each of them is at the coordinate $0$ or $w$. Because there
    are $8d$ such points, there are enough points to match each point in $B(y)$ to
    the closest point in $R(x)$. Hence, $\EMD(B(y)+\tau,R(x))$ matches each point of
    $B(y)$ to a point at the coordinate $0$ or $w$ of $R(x)$. If $\tau \ge w$, the cost of the matching is
\begin{align*}
    \EMD(B(y)+\tau, R(x)) & = 2\tau - w + \sum_{i=1}^{d} \tau - w + (4i -2 - y[i]) +
    \sum_{i=1}^{d} \tau - w + (4i-1+y[i]) \\
    & = 2 (d+1) \tau - 4d^2 - 7d-1.
\end{align*}
Consequently, we can take $c_1 = 2(d+1)$ and $c_2 = 4d^2+7d+1$. It remains to check that the above
choice is correct for $\tau \le -w$. In that case, the cost of the matching is
\begin{align*}
    \EMD(B(y)+\tau, R(x)) & = 2|\tau| - w + \sum_{i=1}^{d} |\tau| - (4i -2 - y[i]) + \sum_{i=1}^{d} |\tau| - (4i-1+y[i])\\
                          & = 2 (d+1) |\tau| - 4d^2-7d-1.
\end{align*}
Thus, the same choice of $c_1$ and $c_2$ works.
\end{proof}

\subsection{Reduction}\label{sub:reduction}

Now we use the vector gadgets from the previous section to reduce the
Orthogonal Vectors problem to \EMDuT. Specifically, given an OV instance $X, Y
\subseteq \{0,1\}^d$ such that $|X| = |Y| = n-1$, we construct sets $B,R \subseteq \R$ such that
from $\EMDuT(B,R)$ we can easily infer whether $X,Y$ contains an orthogonal pair
of vectors or not. Our reduction takes time $\Oh(nd)$ to construct the sets
$B,R$, in particular the constructed sets have size $\Oh(nd)$. Hence, if there
was an algorithm computing $\EMDuT(B,R)$ in time $\Oh(|R|^{2-\delta})$ for some constant $\delta > 0$, then our reduction would yield an algorithm for OV running in time $\Oh((nd)^{2-\delta})$, which contradicts OVH (Hypothesis~\ref{hyp:ovh}). That is, assuming OVH, $\EMDuT(B,R)$ cannot be computed in time $\Oh(|R|^{2-\delta})$ for any constant $\delta > 0$.

For the reduction, we can assume that $n$ is even, because otherwise we
can add a vector consisting exclusively of $1$s to both $X$ and $Y$.
We can also assume that $d \le n$, since otherwise the naive algorithm for OV already runs in time $\Oh(n^2d) = \Oh(n d^2)$.
Our reduction constructs the following point sets, for $\Delta \coloneqq 1000dn$:
\begin{itemize}
	\item \textbf{Red Points}: For the $i^{\mathrm{th}}$ vector $x_i \in X$, we create five red gadgets $R(x_i)^{(1)},\ldots,R(x_i)^{(5)}$. 
		For each $k \in [5]$, we translate $R(x_i)^{(k)}$ by
        $(i+kn)\cdot(n-1)\Delta$ and call it the $(i+kn)^\mathrm{th}$ \emph{red cell}.
    \item \textbf{Blue Points}: For the $j^{\mathrm{th}}$ vector $y_j \in Y$, we create a blue gadget $B(y_j)$ and translate it by $j \cdot n \Delta$. This set of points is called the $j^{\mathrm{th}}$ \emph{blue cell}.
\end{itemize}

We create five copies of red points for a technical reason that will become
clear later (just three copies would be enough, but then we would need to argue about
two types of optimal translations in the analysis).  We denote the set of all
red points by $\Red$, and the set of all blue points by $\Blue$. This concludes
the construction. Observe that $B,R$ can be constructed in time $\Oh(nd)$, as
claimed, and that their coordinates are in $\{0,\ldots,\Oh(dn^3)\} \subseteq
\{0,\ldots,\Oh(n^4)\}$.  Let $c_1$ and $c_2$ be the constants
(that depend on $d$) from~\cref{lem:orth-gadgets}. Let 
\begin{displaymath}
    \Lambda \coloneqq c_1 \Delta \cdot n(n-2)/4 + c_2\cdot (n-2).
\end{displaymath}
It remains to prove that the sets
$X, Y$ contain orthogonal vectors if and only if $\EMDuT(\Blue,\Red) \le
\Lambda$ (and thus from the value $\EMDuT(\Blue,\Red)$ we can easily infer
whether $X,Y$ contain orthogonal vectors).

\subsection{Properties of the Construction}
\label{sub:properties}

We now prove several useful properties of the construction
described above.
\begin{proposition}\label{prop:tau}
    There exist $i_\tau,j_\tau \in \{0,\ldots,n-1\}$ and $\eps \in (-1/2,1/2]$
    such that for
    \begin{equation} \label{eq:tau}
        \tau \coloneqq \left((n-1)(2n+i_\tau) - n j_\tau + \eps \right)\cdot \Delta.
    \end{equation}
    we have $\EMD(\Blue+\tau,\Red) = \EMDuT(\Blue,\Red)$.
\end{proposition}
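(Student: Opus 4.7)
The plan is to start from any optimal translation $\tau^*$ (so that $\EMD(\Blue+\tau^*,\Red) = \EMDuT(\Blue,\Red)$), and rewrite it in the desired form, possibly after a cost-preserving shift by an integer multiple of $n(n-1)\Delta$. The argument combines \cref{lem:1d_order} (monotone matching), \cref{lem:orth-gadgets} (gadget cost formula), and an upper bound on $\EMDuT(\Blue,\Red)$.

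First I would prove $\EMDuT(\Blue,\Red) \le \Lambda$ by exhibiting an explicit assignment that matches each blue cell to a nearby red cell in the middle copies, and summing the ``large displacement'' cost formula of \cref{lem:orth-gadgets}. Since $\Lambda \ll 2d\Delta$ and displacing a single blue cell by $\Delta$ already costs more than $\Lambda$, the optimal monotone matching at $\tau^*$ cannot split any blue cell across red cells: each blue cell $j$ must be matched in its entirety to a single red cell at some index $m_j$, giving a monotone injection $m_1 < m_2 < \dots < m_n$ with all $m_j \in \{n+1,\ldots,6n\}$. Writing $a_j \coloneqq m_j(n-1) - jn$ and $\tau_j \coloneqq \tau^* - a_j\Delta$ for the effective displacement of cell $j$, \cref{lem:orth-gadgets} gives cost $2d|\tau_j| + O(d^2)$ in the large-displacement regime, so (modulo lower-order corrections from cells still in the small-displacement regime) the total cost is $2d\sum_j |\tau^* - a_j\Delta|$. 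For $n$ odd this sum-of-absolute-values function has a unique minimizer at $\tau^* = a_{j^\star}\Delta$, where $j^\star$ indexes the median of $\{a_j\}_{j=1}^n$.

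Setting $j_\tau = j^\star$ (up to a $\pm 1$ indexing convention) and $i_\tau = m_{j^\star} - 2n$ then gives $a_{j^\star} = (n-1)(2n+i_\tau) - n j_\tau$, matching the formula in \eqref{eq:tau}; the perturbation $\eps \in (-1/2,1/2]$ absorbs the sub-$\Delta$ correction coming from the small-displacement cells. The main obstacle is showing that $i_\tau \in \{0, \ldots, n-1\}$, i.e.~that the median red cell $m_{j^\star}$ lies in $\{2n, \ldots, 3n-1\}$. This is precisely where the five copies of red cells are essential: the shift $m_j \mapsto m_j + kn$, together with the compensating shift $\tau^* \mapsto \tau^* + kn(n-1)\Delta$, preserves the cost for any integer $k$ (since the red cell at index $m+kn$ is a translated copy of the same red gadget as the one at index $m$, carrying the same OV vector $x$), provided all shifted indices remain in $\{n+1,\ldots,6n\}$. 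The five copies provide enough slack in $k$ so that we can always arrange for $m_{j^\star} + kn$ to land in $\{2n, \ldots, 3n-1\}$, concluding the proof.
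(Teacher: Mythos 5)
Your final step---exploiting that the red construction is periodic with period $n(n-1)\Delta$ (equivalently, that replacing the matched red cell indices $m_j$ by $m_j+kn$ and compensating $\tau^*$ by $kn(n-1)\Delta$ preserves cost, with the five copies providing the slack)---is exactly the paper's proof, and it is the only ingredient actually needed. The paper observes that the expression \eqref{eq:tau} sweeps out (essentially) the whole interval $[\lambda,3\lambda]$ with $\lambda=n(n-1)\Delta$ as $i_\tau,j_\tau,\eps$ vary, so one only has to place \emph{some} optimal translation in that window; periodicity does this directly, with no information about the structure of the optimal matching. Your detour through the matching structure and the median is therefore unnecessary for this proposition (it is really the content of the later properties and the equivalence lemmas).

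That detour also contains a concrete error. You write that $\Lambda \ll 2d\Delta$ and that displacing a single blue cell by $\Delta$ already costs more than $\Lambda$; the inequality is backwards. With $\Delta=1000dn$, $c_1=2d$ and $c_2=-4d^2-d$ one has $\Lambda = \Theta(d^2n^3)$ while $2d\Delta=\Theta(d^2 n)$, and indeed in the intended optimal solution most blue cells are displaced by $\Theta(|k-j_\tau|\Delta)$, far more than $\Delta$. So the claim that no blue cell is split across red cells does not follow from your budget argument (it would need a comparative argument: splitting costs at least as much as matching the whole cell to its nearer red cell), and the subsequent "cost $=2d\sum_j|\tau^*-a_j\Delta|$ up to lower-order corrections, hence $\tau^*$ is at the median up to $\eps\Delta$" is left as a handwave precisely where it would need care. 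As written, the argument has a gap; if you strip it down to the periodicity step and the observation that \eqref{eq:tau} covers the target window, you recover a correct (and much shorter) proof.
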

\begin{proof}
	Let $\lambda \coloneqq (n-1)n \Delta$ and let $\tau$ be the optimal translation.
	If $\tau < \lambda$, then we claim that $\EMD(B+\tau,R) \ge
	\EMD(B+\tau+\lambda,R)$. This holds because (i) the right endpoint of
	$B+\tau+\lambda$ is to the left of right endpoint of $R$, and (ii) because
	$R$ is periodic, with period $\lambda$, each edge of the matching
	$\EMD(B+\tau,R)$ can be shifted by $\lambda$ (or matched to a strictly
    closer point, if the shifted point is unavailable).

	An analogous argument ensures that when $\tau > 2\lambda$, then
    $\EMD(B+\tau,R) \ge \EMD(B+\tau-\lambda,R)$. Hence we can assume that $\tau \in
    [\lambda,2\lambda]$. Similar arguments show also show that for $\tau \in
        [\lambda,2\lambda]$ we have $\EMD(B+\tau,R) = \EMD(B+\tau+\lambda,R)$.

    Observe, that for $\tau \in [\lambda,2\lambda]$, there exist
    $k,j \in \{0,\ldots,n-1\}$ and $\eps \in (-1/2,1/2]$ such that:
    \begin{displaymath}
        \tau  = ((n-1) \cdot (n+k+1) - j + \eps) \Delta.
    \end{displaymath}
    Let $i_1 := k+j-n+1$ and $i_2 := k+j+1$ and observe that either $i_1$ or $i_2$
    is in $\{0,\ldots,n-1\}$ and is a valid choice of $i$.
\end{proof}

From now, we assume that our optimal translation is of the form~\eqref{eq:tau}
and the parameters $i_\tau,j_\tau$ and $\eps$ are known. The idea behind this form is to
have the property that the $i_\tau^{\mathrm{th}}$ red cell and $j_\tau^{\mathrm{th}}$ blue cell ``nearly-align''.
Before we explain this in more detail, our goal is to show that $|\eps|$ is
small. 

To this end, we will use the following equality:
\begin{proposition}\label{prop:tautology}
    For every even integer $n$, any $j_\tau \in \{0,\ldots,n-1\}$ and $\eps \in (-1/2,1/2]$ it holds that:
	\begin{displaymath}
        \sum_{k \in \{1,\ldots,n-1\}}
    \min\{\abs{k-j_\tau+\eps},\abs{k-j_\tau+(n-1)+\eps},\abs{k-j_\tau-(n-1)+\eps}\}
        = \abs{\eps} + n(n-2)/4.
	\end{displaymath}
\end{proposition}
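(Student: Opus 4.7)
The approach is to perform a change of variables that collapses the $j_\tau$-dependence and then reduces the problem to evaluating a one-parameter arithmetic series.

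I would first substitute $v_k := k - j_\tau + \eps$ for $k = 0, \ldots, n-1$. Under this substitution the three quantities inside the minimum become $|v_k|$, $|v_k + (n-1)|$, and $|(n-1) - v_k|$, which are precisely the distances from $v_k$ to the three points of the set $A := \{-(n-1),\, 0,\, n-1\}$. Hence the summand equals $d(v_k, A)$, and the sum becomes $\sum_{k=0}^{n-1} d(v_k, A)$. As $k$ ranges over $\{0, \ldots, n-1\}$, $v_k$ takes $n$ consecutive real values at spacing $1$, all lying in the interval $[-(n-1) - 1/2,\, n-1 + 1/2]$ (this uses the hypotheses $j_\tau \in \{0, \ldots, n-1\}$ and $\eps \in (-1/2, 1/2]$).

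The key structural observation is that as $k$ varies, the integer shifts $k - j_\tau$ sweep through a complete window of $n$ consecutive integers centered (up to $\eps$) on a cyclically shifted position. Reducing each $v_k$ into the fundamental domain between two consecutive points of $A$ yields a multiset $\{w_k\}$ that depends only on $\eps$ and $n$, not on $j_\tau$: the $n$ residues are simply the shifts $\{0 + \eps, 1 + \eps, \ldots, (n-1) + \eps\}$ (or its negative mirror if $\eps < 0$), possibly after relabeling. Since $|v_k|$ is bounded strictly within one fundamental domain of $A$, the distance $d(v_k, A)$ equals $\min(w_k,\, L - w_k)$, where $L$ is the gap between consecutive targets. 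The $j_\tau$ dependence therefore cancels, and the sum reduces to $\sum_{i} \min(i + \eps,\, L - i - \eps)$ over an appropriate range of $i$.

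It remains to evaluate this one-variable sum. Since $n$ is odd and $|\eps| \le 1/2$, the two arguments of $\min$ cross at a non-integer point strictly between the integers $(n-1)/2$ and $(n+1)/2$, so the sum splits cleanly into two arithmetic progressions. Evaluating each in closed form using $\sum_{i=1}^{(n-1)/2} i = \tfrac{(n-1)(n+1)}{8}$ and combining (handling the cases $\eps > 0$ and $\eps \le 0$ separately, exploiting symmetry), the linear-in-$\eps$ contributions combine to exactly $|\eps|$ while the constant contributions assemble into $\tfrac{n-1}{2} \cdot \tfrac{n+1}{2} = (n^2-1)/4$, yielding the claimed total.

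The main obstacle is the modular reduction step: one must check carefully that each $v_k$ falls within a single fundamental domain of $A$, so that $d(v_k, A)$ is correctly captured as the within-domain distance, and one must verify that the $n$ residues form the claimed complete set independently of $j_\tau$. Once these two bookkeeping points are in place the remainder is a direct, essentially routine, arithmetic calculation; the odd parity of $n$ together with $|\eps| \le 1/2$ is exactly what ensures the crossover between the two branches of $\min$ lands at an integer-plus-half, eliminating any ambiguity at the split.
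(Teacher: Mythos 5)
Your reduction of the summand to the distance from $v_k=k-j_\tau+\eps$ to the three-point set $A=\{-(n-1),0,n-1\}$ is correct, but the pivotal claim --- that after reducing into the fundamental domains of $A$ the resulting multiset of residues is independent of $j_\tau$ --- is false, and this is exactly where the argument breaks. The integers $k-j_\tau$ for $k=0,\dots,n-1$ are $n$ \emph{consecutive} integers, while the spacing of $A$ is $n-1$; reducing $n$ consecutive integers modulo $n-1$ covers every residue class once and one class \emph{twice}, and the doubled class is $-j_\tau \bmod (n-1)$, which depends on $j_\tau$. Carrying your computation through correctly yields
\[
\sum_{k=0}^{n-1}\min_{m\in\mathbb{Z}}\abs{k-j_\tau+\eps-m(n-1)} \;=\; \frac{(n-1)^2}{4}+\min\bigl\{\abs{j_\tau-\eps},\,\abs{n-1-j_\tau+\eps}\bigr\},
\]
which genuinely depends on $j_\tau$ and is at most $(n^2-1)/4-\abs{\eps}$. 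Concretely, for $n=3$ and $\eps=0$ the left-hand side of the proposition equals $1$ when $j_\tau=0$ but $2$ when $j_\tau=1$, while the claimed right-hand side is $2$ in both cases; so the identity does not hold uniformly in $j_\tau$, and no completion of your plan can establish it as stated. (The paper's own proof, a direct case-stratification with explicit sign resolution, does not survive the same numerical check either, so this is a defect of the statement rather than merely of your route.)

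A second, independent warning sign is that even granting your residue multiset $\{0+\eps,1+\eps,\dots,(n-1)+\eps\}$, the deferred ``routine arithmetic'' does not produce the stated answer: evaluating $\sum_{i}\min(i+\eps,\,n-1-i-\eps)$ with the wrap-around term handled correctly gives constant part $\tfrac{(n-1)^2}{4}$, not $\tfrac{(n-1)(n+1)}{4}$ --- a discrepancy of $(n-1)/2$. The two ``bookkeeping points'' you flag as the main obstacles are indeed the crux, and neither withstands scrutiny; before attempting a repair you would need to reformulate the proposition (e.g., as the exact $j_\tau$-dependent identity above, or as the inequalities in each direction that the surrounding lemmas actually require).
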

\begin{proof}
    We will consider the case $j_\tau < n/2$ as otherwise the reasoning is
    analogous. Then $\abs{k-j_\tau} \le
    \abs{k-j_\tau+n-1}$ and we need to show:
    \begin{align}\label{apx:eq-toproof}
		\sum_{k=1}^{n-1} \min\{\abs{k-j_\tau+\eps},\abs{k-j_\tau-(n-1)+\eps}\} =
        \abs{\eps} + n(n-2)/4.
	\end{align}
    The sum in~\eqref{apx:eq-toproof} can be stratified into:
    \begin{align*}
        |\eps| + 
        \sum_{k=1}^{j_\tau-1} \abs{k-j_\tau+\eps} +
        \sum_{k=j_\tau+1}^{n/2+j_\tau-1} \abs{k-j_\tau+\eps} +
        \sum_{k=n/2+j_\tau}^{n-1} \abs{k-j_\tau-(n-1)+\eps}
        .
	\end{align*}
    Now, we can deduce the sign of each absolute value, so we obtain:
    \begin{align*}
        |\eps| + 
        \sum_{k=1}^{j_\tau-1} (j_\tau-k-\eps) +
        \sum_{k=j_\tau+1}^{n/2+j_\tau-1} (k-j_\tau+\eps) +
        \sum_{k=n/2+j_\tau}^{n-1} (n-1-k+j_\tau-\eps)
        .
	\end{align*}
    Observe, that the number of $\eps$ with positive and negative signs is
    equal, hence these cancel out, which yields:
    \begin{align*}
        |\eps| + 
        \sum_{k=1}^{j_\tau-1} (j_\tau-k) + \sum_{k=j_\tau+1}^{n/2+j_\tau-1}
        (k-j_\tau) +
        \sum_{k=n/2+j_\tau}^{n-1} (n-1-k+j_\tau)
        .
	\end{align*}
    Finally, we change the summation index and conclude:
    \begin{align*}
        |\eps| + 
        \sum_{\ell=1}^{j_\tau-1} \ell + \sum_{\ell=1}^{n/2-1} \ell +
        \sum_{\ell=j_\tau}^{n/2-1} \ell
        = |\eps| + 2\sum_{\ell=1}^{n/2-1}\ell = |\eps| + n(n-2)/4.
	\end{align*}
    Here, we used the formula 
$\sum_{i=1}^{m} i =m(m+1)/2$ for $m = n/2-1$.
\end{proof}

\begin{property}\label{prop:small-eps}
    If $|\eps| > 0.1$, then $\EMD(\Blue+\tau,\Red) > \Lambda$.
\end{property}
\begin{proof}
    Let us bound the length of $\EMD(\Blue+\tau,\Red)$. For the $k^{\mathrm{th}}$ blue cell,
    the distance to any closest red point is at least:
    \begin{displaymath}
        \min_{\ell \in \mathbb{Z}} \left\{\abs{k\cdot n \Delta + \tau-\ell\cdot (n-1) \Delta}-2w\right\},
    \end{displaymath}
    because the width of every cell is $w = 4d+1$. Note that the number of points
    in each blue cell is exactly $2(d+1)$, hence: 
    \begin{align*}
        \LHS \coloneqq \EMD(\Blue+\tau, \Red) & \ge \sum_{k=1}^{n-1} 2(d+1) \cdot \left( \min_{\ell \in \mathbb{Z}}
            \left\{\abs{kn \cdot \Delta + \tau-\ell(n-1)\cdot \Delta}
            \right\}-2w  
        \right).
    \end{align*}
    Next, we plug in the definition of $\tau := ( (n-1)(2n+i_\tau)-n j_\tau +
    \eps) \Delta$:
    \begin{align*}
        \LHS & \ge - 4(d+1) w+ 
        2 (d+1) \Delta \cdot \sum_{k=1}^{n-1}  \min_{\ell \in \mathbb{Z}}
            \left\{\abs{(n-1)(2n+i_\tau-j_\tau+k-\ell) + k - j_\tau + \eps} \right \}.
\intertext{Observe that each summand is minimized when $|2n+i_\tau-j_\tau+k-\ell| \le 1$, hence:}
    \LHS & \ge - 4(d+1) w + 
    2 (d+1) \Delta \cdot \sum_{k=1}^{n-1}  \min \left\{
\abs{k-j_\tau+\eps},\abs{k-j_\tau+n-1+\eps}, \abs{k-j_\tau-(n-1)+\eps}\right\}.
\intertext{Next, we use~\cref{prop:tautology}:}
\LHS & \ge -4(d+1) w + 
    2 (d+1)  \Delta \cdot \left( |\eps| + n(n-2)/4 \right). \
\intertext{Finally, we use $0.1 \le |\eps|\le 0.5$ and the fact that $\Lambda =
    c_1 \Delta n(n-2)/4 + c_2 (n-2)$, where $c_1 = 2(d+1)$:}
\LHS & \ge  -4(d+1) w + 2(d+1) \Delta \abs{\eps} + \Lambda - (n-2)c_2 >
    \Lambda,
    \end{align*}
    where we used $|\eps| \ge 0.1$ and $\Delta \gg 4dn$.
\end{proof}
Consider a cell $k \in \{1,\ldots,n-1\}$ of $\Blue$. We define
$\NearestRed(k)$ as:
\begin{displaymath}
    \NearestRed(k) = \begin{cases}
        2n+i_\tau-j_\tau+k & \text{if } |k-j_\tau| < n/2,\\
        2n+i_\tau-j_\tau+k+1 & \text {if } k - j_\tau \ge n/2,\\
        2n+i_\tau-j_\tau+k-1 & \text {if } k - j_\tau \le -n/2.\\
    \end{cases}
\end{displaymath}
Recall that $n$ is even, and hence all the cases are covered. We show that $\NearestRed(k)$ is the index of the red cell that is closest to
the $k^{\mathrm{th}}$ cell of $\Blue$ after translation $\tau$, meaning that the optimal matching matches all points in the $k^{\mathrm{th}}$ blue cell to points in the $\NearestRed(k)^{\mathrm{th}}$ red cell:
\begin{property}\label{prop:match-close}
    If $|\eps| < 0.1$ then each point in the $k^{\mathrm{th}}$ blue cell is matched to some point in the $\NearestRed(k)^{\mathrm{th}}$ red cell.
\end{property}
\begin{proof}
    Observe that for any $\tau$ that satisfies~\cref{prop:tau}, it holds that
    every cell of $\Blue+\tau$ lies between two consecutive cells of $\Red$.
    Note that the consecutive red cells are at a shorter distance than
    the consecutive blue cells. Hence, in between two
    consecutive red cells lies at most one blue cell.

    Next, observe that each red cell consists of $\ge 16d$ points. Each blue cell
    consists of just $2(d+1)$ points. Therefore, each point cell of $\Blue+\tau$ is
    matched to a point in either the next left or the next right red cell.  The distance
    between the $\ell^{\mathrm{th}}$ red cell and the $k^{\mathrm{th}}$ blue cell is:
    \begin{displaymath}
        \abs{kn \Delta + \tau - \ell (n-1) \Delta} .
    \end{displaymath}
    After plugging in the value of $\tau$ this equals:
    \begin{displaymath}
        \abs{(2n+i_\tau-j_\tau+k - \ell)\cdot(n-1) - j_\tau + k+\eps} \cdot \Delta.
    \end{displaymath}
    Observe that when $\abs{k-j_\tau} < n/2$ this is minimized for $\ell =
    2n+i_\tau-j_\tau+k$ because $|\eps| < 0.1$. When $\abs{k-j_\tau} \ge n/2$ we have two cases based on the sign of $(k-j_\tau)$ which matches
    the definition of the $\NearestRed(k)$ cell because $|\eps|<0.1$.
\end{proof}

\subsection{Equivalence}\label{sub:equivalence}
We prove the equivalence by showing two implications separately.
\begin{lemma}
    If every pair of vectors $x \in X, y \in Y$ is not orthogonal, then
    $\EMD(\Blue+\tau,\Red) > \Lambda$.
\end{lemma}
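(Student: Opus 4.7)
The plan is to show $\EMD(\Blue+\tau,\Red) > \Lambda$ for every translation $\tau$ (thereby also for the minimizer). By~\cref{prop:tau}, it suffices to treat $\tau$ of the form~\eqref{eq:tau} with $i_\tau,j_\tau \in \{0,\ldots,n-1\}$ and $\eps \in (-1/2,1/2]$, and by~\cref{prop:small-eps} we may further assume $|\eps| \le 0.1$, since otherwise the bound holds already. Then~\cref{prop:match-close} yields a clean per-cell decomposition
\[
  \EMD(\Blue+\tau,\Red) \;=\; \sum_{k=0}^{n-1} \EMD\bigl(B(y_k) + \tau_k',\; R(x_{i_k})\bigr),
\]
where $\tau_k' = (\delta_k + \eps)\Delta$ is the effective local shift inside cell~$k$, $i_k$ is the $X$-index carried by the $\NearestRed(k)$-th red cell, and $\delta_k \in \mathbb{Z}$ is the integer for which $|\delta_k+\eps|$ realizes the minimum appearing in~\cref{prop:tautology}.

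I would bound each summand via~\cref{lem:orth-gadgets}. For any $k$ with $\delta_k \ne 0$, the shift satisfies $|\tau_k'| \ge (1-|\eps|)\Delta \ge 0.9\Delta \ge 4d+1$ (using $\Delta = 1000dn$ and $|\eps|\le 0.1$), so~\cref{lem:orth-gadgets} gives the \emph{exact} value $\EMD(B(y_k)+\tau_k', R(x_{i_k})) = c_1|\tau_k'|+c_2$. For the aligned index (where $\delta_k=0$, corresponding to $k = j_\tau$), the pair $(x_{i_k}, y_k)$ is non-orthogonal by assumption, so the non-orthogonality clause of~\cref{lem:orth-gadgets} yields the extra bound $\EMD(B(y_k)+\tau_k', R(x_{i_k})) \ge 1$, which is the crucial gain.

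Summing and invoking~\cref{prop:tautology}, the aligned cell contributes $|\delta_k+\eps| = |\eps|$ to the total $|\eps|+(n^2-1)/4$, so $\sum_{k:\delta_k\ne 0}|\delta_k+\eps| = (n^2-1)/4$. Plugging this into the per-cell sum gives
\[
  \EMD(\Blue+\tau,\Red) \;\ge\; 1 \;+\; c_1\Delta\cdot\tfrac{n^2-1}{4} \;+\; c_2(n-1) \;=\; \Lambda + 1 \;>\; \Lambda,
\]
which is the desired conclusion. The delicate point I expect is verifying that there is \emph{exactly one} aligned index; were two blue cells to hit $\delta_k = 0$ simultaneously, the $+1$ gain would be diluted and one would have to fight the residual negative term $c_2(1-n_a)$ that then appears. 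This is precisely the reason for replicating the red gadgets five times in the construction: the redundant copies shift the effective range of $j_\tau$ away from the boundary values so that only a single alignment occurs inside the relevant window, preserving the clean calculation above.
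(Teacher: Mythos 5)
Your argument is essentially the paper's own proof: reduce to the canonical form of $\tau$ via \cref{prop:tau}, discard $|\eps|>0.1$ via \cref{prop:small-eps}, decompose cell by cell via \cref{prop:match-close}, charge each non-aligned blue cell its exact linear cost $c_1|\tau_k'|+c_2$ from \cref{lem:orth-gadgets}, gain the crucial $+1$ on the aligned cell from non-orthogonality, and sum with \cref{prop:tautology} to reach $\Lambda+1$. The only divergence is your closing aside attributing the five red copies to uniqueness of the aligned index; the paper instead introduces them so that every blue cell has red cells on both sides within the constructed range (avoiding a case analysis over boundary translations), and otherwise treats the aligned-cell bookkeeping exactly as you do.
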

\begin{proof}
We assume that $\tau$ is of form defined in~\cref{prop:tau} as otherwise
$\EMD(\Blue + \tau,\Red)$ is large. Moreover, by~\cref{prop:small-eps} we
can assume that $|\eps| \le 0.1$.  The distance between the left endpoint
of the $k^{\mathrm{th}}$ cell of $\Blue$ and the leftmost endpoint of the nearest cell
of $\Red$ is given by:
\begin{displaymath}
        \abs{(n-1) \cdot \NearestRed(k) \cdot \Delta - n k \cdot \Delta-\tau}
        .
\end{displaymath}
Because each pair of vectors in the instance is not orthogonal, by~\cref{lem:orth-gadgets}, we know that the smallest distance between points in each cell is at least $1$. Hence, after plugging in the definition of $\NearestRed(k)$ and by~\cref{lem:orth-gadgets}, the contribution of points in the $k^{\mathrm{th}}$ cell of $\Blue$ to $\EMD(\Blue+\tau,\Red)$ is:
\begin{displaymath}
    \Cost(k) \ge \begin{cases}
        1 & \text{if } k=j_\tau, \\
        |j_\tau-k+\eps| \cdot c_1 \Delta + c_2 & \text{if } \abs{k-j_\tau} < n/2, \\
		(n-1-|j_\tau-k+\eps| )\cdot c_1 \Delta + c_2 & \text{if } \abs{k-j_\tau} \ge n/2.\\
    \end{cases}
\end{displaymath}
Observe that $\Cost(k)$ is at least $c_1 \Delta \cdot
\min\{\abs{j_\tau-k+\eps},\abs{k-j_\tau-(n-1)+\eps},\abs{j_\tau-k+(n-1)+\eps}\} + c_2$ when $k \neq j_\tau$. Hence, the total length of $\EMD(\Blue+\tau,\Red)$ is at least
$\sum_{k=1}^{n-1} \Cost(k)$ which is bounded by:
\begin{align*}
    1 + (n-2)c_2 + c_1 \Delta \Big(\sum_{k \in \{1,\ldots,n-1\}\setminus \{j_\tau\}}
    \min\{&\abs{j_\tau-k+\eps}, \abs{k-j_\tau-(n-1)+\eps},\\&\abs{n-1+j_\tau-k+\eps}\}\Big)
	.
\end{align*}
We use~\cref{prop:tautology} to conclude that:
\begin{displaymath}
    \sum_{k=1}^{n-1} \Cost(k) \ge 1 +  c_2 (n-2) + c_1 \Delta \cdot n(n-2)/4 = 1 +
    \Lambda.\qedhere
\end{displaymath}
\end{proof}

\begin{lemma}
    If $x_i \in X$ and $y_j \in Y$ are orthogonal, then $\EMDuT(\Blue,\Red) \le \Lambda$.
\end{lemma}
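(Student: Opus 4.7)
The plan is to exhibit a single translation $\tau$ together with a single matching whose cost is at most $\Lambda$, which immediately upper bounds $\EMDuT(\Blue,\Red)$. The natural choice is the translation that makes the $j$th blue cell align exactly with one of the five red copies of $R(x_i)$: concretely, I would take $\tau = (i+2n)(n-1)\Delta - jn\Delta$, so that $B(y_j) + jn\Delta$ lands on top of $R(x_i)^{(2)}$. This matches the form of~\cref{prop:tau} with $i_\tau = i$, $j_\tau = j$, and $\eps = 0$, so the structural results developed earlier apply directly.

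With this $\tau$, the $j$th blue cell coincides with a copy of $R(x_i)$, and since $x_i \cdot y_j = 0$, the first part of~\cref{lem:orth-gadgets} supplies a matching of cost $0$ inside these cells. For every other $k \neq j$, I would match the $k$th blue cell to the red cell indexed by $\NearestRed(k)$ as defined before~\cref{prop:match-close}. A direct computation shows that the effective offset $\tau_k$ of $B(y_k)$ with respect to its target red cell satisfies $|\tau_k| = \Delta \cdot \min\{|k-j|,\,|n-1-(k-j)|,\,|n-1+(k-j)|\}$. Since $\Delta = 1000dn$ is far larger than $4d+1$, the ``moreover'' part of~\cref{lem:orth-gadgets} applies and contributes exactly $|\tau_k|\,c_1 + c_2$ to the cost; the $2d$ blue points of the cell are matched to extreme points at coordinate $0$ or $4d+1$ of the target red cell, of which $8d$ are available on each side.

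To finish, I sum the contributions: the $k=j$ term is $0$, and for $k \neq j$ the effective offsets $|\tau_k|/\Delta$ coincide with the minima appearing in~\cref{prop:tautology} taken at $\eps=0$. That proposition evaluates the sum to $(n^2-1)/4$. Multiplying by $c_1\Delta$ and adding $(n-1)$ copies of $c_2$ yields precisely $\Lambda$, as desired.

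The bookkeeping I anticipate as the main obstacle is twofold. First, the indices $\NearestRed(k)$ must be pairwise distinct so that no red cell is assigned to two different blue cells, and they must all lie within the valid range of existing red cell positions; this is precisely why the construction uses five copies of each $R(x_i)$ rather than one, and both conditions should follow from a short case analysis on the three regimes $|k-j|<n/2$, $k-j>n/2$, and $k-j<-n/2$, using that $i,j \in \{1,\ldots,n\}$. Second, the $2d$ blue points per cell must fit among the $16d$ extreme points of the target red cell, which is immediate from the construction.
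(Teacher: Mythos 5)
Your proposal is correct and follows essentially the same route as the paper: the same translation $\tau^\ast = ((n-1)(2n+i)-nj)\Delta$, cost $0$ for the aligned pair via the first part of \cref{lem:orth-gadgets}, the linear cost $|\tau_k|c_1+c_2$ for every other cell via the ``moreover'' part, and the final summation via \cref{prop:tautology} at $\eps=0$. The distinctness and range concerns you flag are exactly what \cref{prop:match-close} (and the use of five red copies) settles in the paper.
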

\begin{proof}
	Take $\tau^\ast = ((n-1)(2n+i) - n j) \cdot \Delta$. Observe that the $(2n+i)^{\mathrm{th}}$ cell
    of $\Red$ and the $j^{\mathrm{th}}$ cell of $\Blue$ align. Because $x_i$ and $y_j$ are
    orthogonal, \cref{lem:orth-gadgets} guarantees that $\EMD(R(x_i),B(y_j)) =
    0$. By~\cref{prop:match-close}, all remaining cells are matched to their closest
	cell. Therefore, by~\cref{lem:orth-gadgets} the contribution of edges with endpoints in the $k^{\mathrm{th}}$ cell of
    $\Blue$ to $\EMD(\Blue+\tau^\ast,\Red)$ is
    \begin{align*}
        \Cost(k) &= \abs{(n-1) \NearestRed(k)  \cdot \Delta - n k \cdot \Delta- \tau^\ast} \cdot
        c_1  + c_2,
    \intertext{for any $k \in \{1,\ldots,n-1\} \setminus \{j\}$. By plugging in the
	exact values of $\NearestRed(k)$, with similar arguments as in the previous proof, we conclude that:}
        \Cost(k) &= \begin{cases}
            0 & \text{if } k = j, \\
            \abs{j - k} \cdot c_1 \Delta + c_2 & \text{if } \abs{k - j} < n/2, \\
            (n-1 - \abs{j - k}) \cdot c_1 \Delta + c_2 & \text{if } \abs{k - j} \ge n/2.
        \end{cases}
    \intertext{Observe that equivalently, we can write $\Cost(k) = c_1 \Delta
        \cdot \min\{\abs{j-k},\abs{n-1-j+k},\abs{n-1+j-k}\} + c_2$ when $k \neq j$.  Therefore, the total length
	of the matching is:}
	\sum_{k=1}^{n-1} \Cost(k) & = \sum_{k\in \{1,\ldots,n-1\}\setminus\{j\}} c_2 + \Delta c_1 \cdot \min\{\abs{j-k},\abs{n-1-j+k},\abs{n-1+j-k}\} \\
		& = c_2 (n-2) + \Delta c_1 \cdot n(n-2)/4 = \Lambda.
    \end{align*}
	where in the last inequality we used~\cref{prop:tautology}.
\end{proof}


\section{Lower Bounds in Higher Dimension} \label{sec:Lowerbound_Symmetric}

In this section, we prove conditional lower bounds for approximating
$\EMDuT$ with the $L_1$ or $L_\infty$ norm. Our lower bounds assume the popular Exponential Time Hypothesis (ETH), which postulates that the 3-SAT problem on $N$ variables cannot be solved in time $2^{o(N)}$~\cite{ImpagliazzoP01}.

\begin{theorem}
  \label{thm:lowerbound_eth}
  Assuming ETH, there is no algorithm that, given $\eps \in (0,1)$ and $B,R \subseteq \R^d$ of size $|B|=|R|=n$, computes a $(1+\eps)$-approximation of $\EMDuT_1(B,R)$ (or $\EMDuT_\infty(B,R)$) and runs in time $(\frac{n}{\eps})^{o(d)}$.
\end{theorem}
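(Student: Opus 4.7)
I would reduce from Multicolored $k$-Clique, for which ETH rules out $N^{o(k)}$-time algorithms on graphs with color classes $V_1,\ldots,V_k$ of size $N$ each. Given such an instance, I construct in polynomial time point sets $B,R \subseteq \R^d$ with $d = k$ and $n = |B| = |R| = \poly(N,k)$, together with a threshold $L$ and an absolute constant $\gamma > 0$, so that $G$ has a multicolored $k$-clique iff $\EMDuT_1(B,R) \le L$, and otherwise $\EMDuT_1(B,R) \ge (1+\gamma) L$. Feeding this instance to a hypothetical $(n/\eps)^{o(d)}$-time $(1+\eps)$-approximation algorithm with $\eps = \gamma/2$ would decide Multicolored $k$-Clique in time $(n/\eps)^{o(d)} = N^{o(k)}$, contradicting ETH. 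The $L_\infty$ statement follows from the same construction, since every gadget will live in a $1$- or $2$-dimensional axis-parallel subspace, where $L_1$ and $L_\infty$ coincide (in dimension~$1$) or differ by at most a factor of $2$ (in dimension~$2$), a distortion that can be absorbed into the thresholds.

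\textbf{Construction.} Coordinate $i \in [k]$ encodes the choice of a vertex in $V_i$: label the vertices of $V_i$ by integers $\ell(v) \in \{1,\ldots,N\}$ (spaced with enough slack) so that a near-optimal $\tau_i$ represents the selected vertex. The point set has two layers, each living at widely separated base points in $\R^d$ so that no cross-gadget matching is ever beneficial. The \emph{alignment layer} places, for each coordinate $i$, a small collection of blue-red pairs supported on the $i$-axis (with all other coordinates shared between red and blue so they contribute an identical additive term) whose total $L_1$ cost is a comb-shaped function of $\tau_i$ with sharp local minima at the integer labels $\{\ell(v) : v \in V_i\}$ and strictly-increasing linear growth between them. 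The \emph{pair layer} places, for each pair $\{i,j\} \in \binom{[k]}{2}$, a nearest-neighbour gadget in a translate of the $(i,j)$-plane consisting of a single blue ``selector'' and red ``targets'' at offsets $\ell(u)e_i + \ell(v)e_j$ for every edge $(u,v) \in E(G) \cap (V_i \times V_j)$; under translation $\tau$ the selector matches its nearest target, contributing the $L_1$ distance from $(\tau_i,\tau_j)$ to the closest edge-offset. Dummy far-away blue-red pairs are added to enforce $|B| = |R| = n$.

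\textbf{Correctness and main obstacle.} If $(v_1,\ldots,v_k)$ is a multicolored clique, then $\tau^* = (\ell(v_1),\ldots,\ell(v_k))$ hits a local minimum of every alignment gadget and an edge-offset of every pair gadget, giving $\EMDuT_1(B,R) \le L$. Conversely, if no clique exists then for any $\tau$ either some $\tau_i$ is bounded away from all integer labels, which the alignment layer penalises by $\Omega(\gamma L)$, or every $\tau_i \approx \ell(v_i)$ but some $(v_i,v_j)$ is not an edge; since non-edge lattice points have $L_1$-distance $\ge 1$ from every edge-offset, that pair gadget pays an extra $\Omega(\gamma L)$. The main obstacle is the joint calibration of the two layers: the alignment layer must be steep enough to exclude fractional translations yet shallow enough not to swamp the pair-layer gap, and the nearest-neighbour behaviour of each pair gadget (which is intrinsically an asymmetric, $1$-blue-to-many-red matching structure) must be preserved in the global bijective matching between $B$ and $R$. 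Both issues I would handle by placing gadgets at pairwise distances far exceeding the intra-gadget scale, so that any rematching swap across gadgets is strictly suboptimal, and by carefully tuning the multiplicity of dummy pairs so that the overall $|B| = |R|$ balance is maintained without perturbing the optimal matching within any gadget.
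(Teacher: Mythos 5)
Your high-level plan (reduce from $k$-Clique, one coordinate of $\tau$ per vertex choice, pair gadgets that are cheap exactly on edge-offsets, gadgets placed far apart so costs add) matches the paper's strategy. But there are two genuine gaps. First, the symmetrization. Your pair gadget has one blue ``selector'' against $|E(G)\cap(V_i\times V_j)|$ red targets, and you propose to restore $|B|=|R|$ with ``dummy far-away blue--red pairs.'' This does not work: in a perfect matching between equal-sized sets every red point must be matched, so the $|E_{ij}|-1$ excess red targets inside the pair gadget need blue partners, and if those partners are far away the matching cost explodes on YES instances too. The paper resolves this by putting $|E|-1$ copies of a counterweight blue point \emph{inside} each pair gadget, positioned so they absorb the excess red targets at a controlled cost; since that cost is then linear in $\tau$ (with gadget-dependent slope), the paper additionally doubles the dimension to $2k$ and pairs each gadget with a mirrored gadget so that all the $\pm\tau_\ell$ terms cancel and the lower bound holds for \emph{every} $\tau\in\R^d$, not just $\tau$ in the intended box. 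This cancellation mechanism is the technical heart of the symmetric construction and is absent from your proposal; ``tuning the multiplicity of dummy pairs'' does not substitute for it.

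Second, the $L_\infty$ case. You argue that each gadget lives in a $1$- or $2$-dimensional subspace where $L_1$ and $L_\infty$ differ by at most a factor of $2$, ``a distortion that can be absorbed into the thresholds.'' This fails for a multiplicative-gap reduction: the distortion is not uniform across point pairs (it depends on each pair's geometry), so YES-instance costs and NO-instance costs can be distorted by different factors, and with $\gamma$ as small as it must be here (the gap is an additive $+1$ against a total of order $N\cdot\mathrm{poly}(k)\cdot|E|$, so $\gamma=1/\mathrm{poly}(N)$, not an absolute constant as you claim), a factor-$2$ wobble destroys the gap entirely. Moreover, under $L_\infty$ a gadget's cost is a \emph{max} over coordinates, not a sum, so the ``identical additive term from shared coordinates'' bookkeeping breaks down. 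The paper handles $L_\infty$ with a separate construction in dimension $2k+1$, using large ($10N$) offsets so that designated coordinates dominate the max, and adding $2(k-1)$ extra gadget pairs per index $i$ whose $+|\tau_i-\tau_{i+k}|$ contributions cancel the $-2|\tau_i-\tau_{i+k}|$ slack arising in the pair gadgets. A smaller remark: your alignment layer is unnecessary overhead, since rounding $\tau$ to $(\lfloor\tau_1\rceil,\ldots,\lfloor\tau_k\rceil)$ and observing that a non-edge rounded pair forces $(\tau_i,\tau_j)$ to be at distance at least $1/2$ from every edge-offset already yields the NO-instance penalty without forcing $\tau$ near integer points.
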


We prove our lower bounds by a reduction from the $k$-Clique problem: Given a
graph $G = (V,E)$ with $N$ nodes, decide whether there exist distinct nodes
$v_1,\ldots,v_k \in V$ such that $(v_i,v_j) \in E$ for all $1 \le i < j \le k$.
Here, we always assume that $k$ is constant. A naive algorithm solves the $k$-Clique problem in time $\Oh(N^k)$. It is well known that this running time cannot be improved to $N^{o(k)}$ assuming ETH.

\begin{theorem}[\cite{ChenHKX06}] \label{thm:cliqueETH}
Assuming ETH, the $k$-Clique problem cannot be solved in time $N^{o(k)}$.
\end{theorem}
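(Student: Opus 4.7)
The plan is to reduce $k$-Clique to (approximate) $\EMDuT_1$ (and $\EMDuT_\infty$) in dimension $d = \Theta(k)$. Given a graph $G = (V,E)$ with $|V| = N$, I will construct point sets $B,R \subseteq \R^d$ of size $n = \poly(N,1/\eps)$ together with a threshold $\Lambda$ such that: if $G$ contains a $k$-clique then $\EMDuT(B,R) \le \Lambda$, while if $G$ has no $k$-clique then $\EMDuT(B,R) \ge (1+\eps)\Lambda$. Combined with \cref{thm:cliqueETH}, any algorithm that $(1+\eps)$-approximates $\EMDuT$ in time $(n/\eps)^{o(d)}$ would, by choosing $\eps$ to be an inverse polynomial in $N$, decide $k$-Clique in time $N^{o(k)}$, contradicting ETH.

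The construction has two layers. The first layer is a \emph{selection} gadget which, along each of the $k$ coordinate axes, places $N$ equally spaced ``vertex slots''. Concretely, along axis $i \in [k]$ I place matched blue/red patterns (reminiscent of the ruler construction used in \cref{sub:reduction}) that are cheap exactly when the translation coordinate $\tau_i$ lies at one of $N$ designated positions $p_1^{(i)},\ldots,p_N^{(i)}$; any deviation from those positions incurs a cost linear in the deviation, forcing every near-optimal translation to essentially choose, for each axis $i$, a vertex $v_i \in V$. The second layer is an \emph{edge-verification} gadget which, for every pair of axes $(i,j)$ and every \emph{non-edge} $\{u,v\} \notin E$, inserts a small point-pattern in the $(i,j)$-plane that contributes an additional cost of at least some penalty $\delta > 0$ whenever $(\tau_i,\tau_j)$ is close to $(p_u^{(i)}, p_v^{(j)})$, and cost $0$ whenever $\{u,v\} \in E$. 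Because the $L_1$ metric decomposes additively over coordinates and $L_\infty$ reduces to a coordinate maximum, these per-pair gadgets can be designed locally and their costs combine cleanly.

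Fixing the base cost $C$ of a consistent vertex-selection and setting $\eps \coloneqq \delta/C$, the YES and NO cases are separated by a factor of $1+\eps$. Choosing the length scales of the ruler and edge gadgets to be polynomials in $N$ and $1/\eps$ yields $n = \poly(N, 1/\eps)$ points and dimension $d = \Theta(k)$. Thus an algorithm running in time $(n/\eps)^{o(d)} = N^{o(k)} \cdot (1/\eps)^{o(k)} = N^{o(k)}$ would yield an $N^{o(k)}$-time algorithm for $k$-Clique, contradicting \cref{thm:cliqueETH}. The same reduction works verbatim for $L_1$ and $L_\infty$ since the gadgets only exploit axis-aligned alignment, which both norms respect.

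The main obstacles will be: (i) designing the selection gadgets so that \emph{only} the $N$ designated translations per axis are near-optimal, with a sharp and controllable linear separation from nearby non-vertex positions, so that the global approximation gap $\eps$ can be set as an inverse polynomial without blowing up $n$; (ii) arranging the $\binom{k}{2}$ edge-verification gadgets so that the penalty from a single missed edge cannot be absorbed by slack in some other coordinate or gadget, which requires a careful global accounting of matching costs; and (iii) ensuring that the selection and edge-verification gadgets do not interact—i.e., the optimal matching decomposes into independent contributions per axis and per pair—which will likely be enforced by scale-separating the gadgets in a direction (such as an extra auxiliary coordinate) and using padding to make cross-gadget matchings prohibitively expensive.
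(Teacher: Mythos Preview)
You are not proving the stated theorem. \Cref{thm:cliqueETH} is a cited result from \cite{ChenHKX06}; the paper does not prove it but uses it as a black box. Your proposal explicitly invokes \cref{thm:cliqueETH} as an assumption (``Combined with \cref{thm:cliqueETH}, any algorithm\ldots''), so read as a proof of \cref{thm:cliqueETH} it is circular.

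What you have actually sketched is a proof of \cref{thm:lowerbound_eth}, the ETH lower bound for $\EMDuT_1$ and $\EMDuT_\infty$. In that light the high-level plan is close to the paper's: reduce $k$-Clique to $\EMDuT$ in dimension $\Theta(k)$, let each translation coordinate $\tau_i$ select a vertex, and add per-pair gadgets that verify adjacency. Two differences worth noting. First, the paper does not use a separate ``selection'' gadget to pin each $\tau_i$ to one of $N$ designated slots; instead the edge-verification gadgets themselves force integrality (a non-edge at the rounded coordinates already costs an extra $+1$), so your selection layer is unnecessary and your obstacle (i) does not arise. Second, the paper places gadgets for \emph{edges}, not non-edges: the red set $R_{i,j}$ contains one point per edge, and the blue origin is matched at cost zero iff $(\tau_i,\tau_j)\in E$. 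Your non-edge penalty scheme would need $\Theta(N^2)$ gadgets per pair rather than $|E|$, though that still gives $n=\poly(N)$. Finally, your claim that ``the same reduction works verbatim for $L_1$ and $L_\infty$'' is optimistic: in the paper the $L_\infty$ construction requires genuinely different gadgets (an extra coordinate and auxiliary $q_i$ gadgets to cancel $|\tau_i-\tau_{i+k}|$ terms), because $L_\infty$ does not decompose additively over coordinates.
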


In our lower bounds we will use the following lemma that combines gadgets
$(B_1,R_1),\ldots,(B_k,R_k)$ into a single instance $(B,R)$ whose cost is essentially the total cost of all gadgets. To prove this lemma, we simply place the gadgets sufficiently far apart.

\begin{restatable}[Gadget Combination Lemma]{lemma}{gadgetCombinationLemma}\label{lem:gadget-combination}
Let $1 \le p \le \infty$.
Given sets $B_1,R_1,\ldots,B_k,R_k \subset \R^d$ of total size $n$ with $|B_i| \le |R_i|$ for all $i \in [k]$, in time $\Oh(nd)$ we can compute sets $B,R \subset \mathbb{R}^d$ of total size~$n$ such that
\begin{displaymath}
    \EMDuT_p(B,R) = \min_{\tau \in \R^d} \sum_{i=1}^k \EMD_p(B_i + \tau,R_i).
\end{displaymath}
\end{restatable}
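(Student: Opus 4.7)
The plan is to place the $k$ gadget pairs sufficiently far apart along a single coordinate direction so that any translation of bounded cost is forced to match blue points only to red points in the same gadget. After translating each pair $(B_i,R_i)$ jointly (which preserves $\EMD_p(B_i+\tau,R_i)$ as a function of $\tau$), we may assume every point of $B_i\cup R_i$ lies in $[-D,D]^d$ for some $D$ depending only on the input. Choose a separation $M:=100nD$ and define
\[
B:=\bigcup_{i=1}^{k}(B_i+iMe_1),\qquad R:=\bigcup_{i=1}^{k}(R_i+iMe_1),
\]
where $e_1$ is the first standard basis vector. This construction takes $\Oh(nd)$ time and yields sets of total size $n$ with $|B|\le|R|$.

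The upper bound $\EMDuT_p(B,R)\le\min_{\tau}\sum_i\EMD_p(B_i+\tau,R_i)$ is immediate: for every $\tau$, any matching that sends each $B_i+iMe_1+\tau$ to a subset of $R_i+iMe_1$ is a valid matching on $(B+\tau,R)$, so $\EMD_p(B+\tau,R)\le\sum_i\EMD_p(B_i+\tau,R_i)$. In particular, taking $\tau=0$ on the right yields the absolute bound $\EMDuT_p(B,R)\le\sum_i\EMD_p(B_i,R_i)\le nD$.

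For the reverse inequality, let $(\tau^\dagger,\phi^\dagger)$ attain $\EMDuT_p(B,R)$; by the upper bound, its cost is at most $nD$. Write $\tau^\dagger_1=jM+\delta$ with $j\in\mathbb{Z}$ and $|\delta|\le M/2$. An edge of $\phi^\dagger$ that maps a point of $B_i+iMe_1+\tau^\dagger$ to a point of $R_\ell+\ell Me_1$ has $L_p$-length at least $|i+j-\ell|M-2D$; thus a single edge with $\ell\ne i+j$ would already cost more than $M/2>nD$, contradicting the cost bound. Hence $\phi^\dagger$ sends every blue gadget $i$ entirely into red gadget $i+j$. If $j\ne0$, then either $B_k$ (for $j>0$) or $B_1$ (for $j<0$) has no destination gadget within $[1,k]$, which forces a cross-gadget edge of cost $\ge M-2D>nD$; this rules out $j\ne0$ provided $|B_1|,|B_k|\ge1$, which we may assume without loss of generality (empty outer gadgets contribute $0$ to both sides and can be reindexed away). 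Consequently $j=0$, $\phi^\dagger$ is within-gadget, and $\EMD_p(B+\tau^\dagger,R)=\sum_i\EMD_p(B_i+\tau^\dagger,R_i)\ge\min_\tau\sum_i\EMD_p(B_i+\tau,R_i)$, closing the argument.

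The main obstacle is the \emph{periodic shift} degeneracy: however large $M$ is chosen, a translation with $\tau^\dagger_1$ near a non-zero multiple $jM$ can align each blue gadget $i$ with red gadget $i+j$ and keep the matching cost bounded. The resolution leverages the fact that there are only $k$ gadgets, so the outermost blue gadget has no destination under any non-zero shift; this is exactly where the mild non-emptiness assumption on $B_1$ and $B_k$ enters the argument.
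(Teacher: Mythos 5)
Your proof takes essentially the same route as the paper's: place the gadgets far apart along the first coordinate axis, derive an absolute upper bound on the optimal cost from the identity translation, and use it to forbid any cross-gadget matching edge. Your decomposition $\tau^\dagger_1=jM+\delta$ followed by the ``outermost blue gadget has no destination'' argument is just a reorganization of the paper's step that bounds $\|\tau^*\|_p\le (n+2)\Delta$ outright, which kills the periodic-shift degeneracy directly; both resolutions are equivalent. Two quantitative slips to fix: the cross-gadget edge length is at least $|i+j-\ell|M-|\delta|-2D\ge M/2-2D$ (you dropped the $|\delta|$ term), and $\sum_i\EMD_p(B_i,R_i)$ is only bounded by $n D d^{1/p}$ rather than $nD$ since the $L_p$ diameter of $[-D,D]^d$ is $2Dd^{1/p}$, so the separation must scale with the dimension (e.g.\ $M=100ndD$, matching the paper's use of the $L_1$ diameter $\Delta$ of the bounding box); neither affects the validity of the approach. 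Finally, your ``reindex away'' fix for empty outer gadgets is slightly glib---discarding the pair $(B_k,R_k)$ also removes $R_k$'s points from $R$ and changes the total size---but this edge case does not arise in any application and is glossed over by the paper as well.
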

\begin{proof} 
The intuition is as follows. For a sufficiently large number $U$ we construct the sets $ B \coloneqq \bigcup_{i=1}^k B_i + (U \cdot i, 0,\ldots,0)$ and $R \coloneqq \bigcup_{i=1}^k R_i + (U \cdot i, 0,\ldots,0)$, i.e., we place the gadgets sufficiently far apart. Then one can argue that any optimal matching must match points in $B_i$ to points in $R_i$, and thus the $\EMDuT$ cost splits over the gadgets as claimed.

Now we provide the proof details. Let $\cal{B}$ be the bounding box of $\bigcup_{i=1}^k B_i \cup R_i$, and let $\Delta$ be the sum of all side lengths of $\cal{B}$ (i.e., $\Delta$ is the $L_1$ diameter of $\cal{B}$). We set $U \coloneqq (2n+5) \Delta$. We construct the sets $B,R$ as
$$ B \coloneqq \bigcup_{i=1}^k B_i + (U \cdot i, 0,\ldots,0), \qquad R \coloneqq \bigcup_{i=1}^k R_i + (U \cdot i, 0,\ldots,0). $$
Note that we have
$$ \EMDuT_p(B,R) = \min_{\tau \in \R^d} \EMD_p(B+\tau,R) \le \min_{\tau \in \R^d} \sum_{i=1}^k \EMD_p(B_i+\tau,R_i), $$
where the inequality follows from restricting the matching $\phi\colon (B+\tau) \to R$ to map points in $B_i+\tau$ to points in $R_i$ for all $i$.

For the opposite direction, by considering $\tau \coloneqq (0,\ldots,0)$ and considering any matching that maps points in $B_i+\tau$ to points in $R_i$ for all $i$, we observe
$$ \EMDuT_p(B,R) \le |R| \cdot \Delta \le n \Delta. $$

Now consider an optimal translation $\tau^*$, i.e., $\tau^*$ realizes $\EMDuT_p(B,R) = \EMD_p(B+\tau^*,R)$.
We claim that $\|\tau^*\|_p \le (n+2)\Delta$. Indeed, suppose for the sake of contradiction that $\|\tau^*\|_p > (n+2)\Delta$. Then any point in $B_i + \tau^*$ has distance more than $\|\tau^*\|_p - 2\Delta$ to any point in $R_i$.
It follows that if $\tau^*_1 \le 0$, then any point in $B_1 + \tau^*$ has distance more than $\|\tau^*\|_p - 2\Delta$ to any point in $R$, and thus $\EMD_p(B+\tau^*,R) > \|\tau^*\|_p - 2\Delta \ge n \Delta$. This contradicts $\EMD_p(B+\tau^*,R) = \EMDuT_p(B,R) \le n \Delta$, as shown above.
Similarly, if $\tau^*_1 \ge 0$, then any point in $B_k + \tau^*$ has distance more than $\|\tau^*\|_p - 2\Delta$ to any point in $R$, and we again arrive at a contradiction. Hence, we have $\|\tau^*\|_p \le (n+2)\Delta$.

Now consider an optimal matching $\phi \colon (B+\tau^*) \to R$. If $\phi$ matches any point in $B_i+\tau^*$ to any point in $R_j$ for some $i \ne j$, then it incurs a cost of at least $U - 2\Delta - \|\tau^*\|_p \ge U - (n+4)\Delta > n \Delta$, contradicting our upper bound $\EMDuT_p(B,R) \le n \Delta$. Therefore, $\phi$ matches points in $B_i+\tau^*$ to points in $R_i$ for all $i$, and we obtain
$$ \EMDuT_p(B,R) = \EMD_p(B+\tau^*,R) = \sum_{i=1}^k \EMD_p(B_i+\tau^*,R_i) \ge \min_{\tau \in \R^d} \sum_{i=1}^k \EMD_p(B_i+\tau,R_i). $$
Both directions together prove the lemma.
\end{proof}

For the readers' convenience, in Section~\ref{sec:lbhighdim_L1asymm}
as a warmup we prove the lower bound for the $L_1$ norm in the
asymmetric setting, i.e., we allow $|B|$ to be smaller than $|R| =
n$. Then in Section~\ref{sec:lbhighdim_L1symm} we strengthen this
lower bound to hold even in the symmetric setting $|B| = |R| =
n$. Finally, in Section~\ref{sec:lbhighdim_Linftysymm} we prove the
lower bound for the $L_\infty$ norm in the symmetric setting.

\subsection{Lower Bound for \boldmath$L_1$ Asymmetric}
\label{sec:lbhighdim_L1asymm}

\begin{figure}[tb]
  \centering
  \label{fig:3dgrid}
  \begin{subfigure}[b]{0.4\textwidth}
      \includegraphics[width=\textwidth]{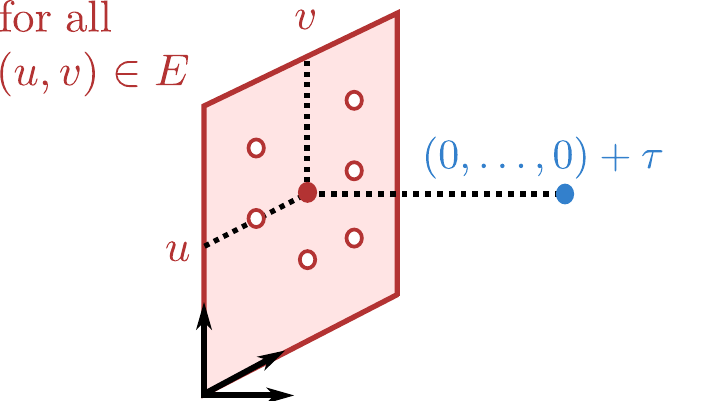}
  \end{subfigure}
  \hspace{2cm}
  \begin{subfigure}[b]{0.4\textwidth}
      \includegraphics[width=\textwidth]{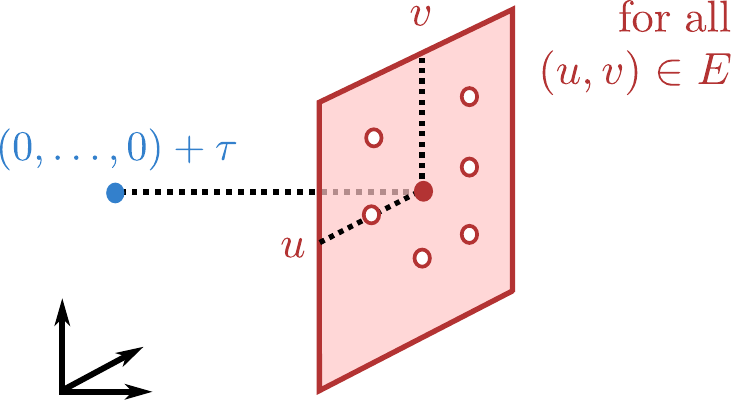}
  \end{subfigure}
  \caption{The left figure illustrates gadget $(B_{i,j},R_{i,j})$. The right figure
      illustrates gadget $(B_{i,j}',R_{i,j}')$.}
  \label{fig:3dgrid}
\end{figure}
In this section we prove Theorem~\ref{thm:lowerbound_eth} for the $L_1$ norm in the asymmetric setting, i.e., we relax the condition $|B|=|R|$ to $|B| \le |R|$.

We are given a $k$-Clique instance $G = ([N],E)$.
We set the dimension to $d \coloneqq k$.
In what follows by $p_{i,u,j,v,b} \in \R^d$ we denote the point with coordinates, for $\ell \in [d]$,
\begin{displaymath}
    (p_{i,u,j,v,b})_\ell = \begin{cases} u & \text{if } \ell = i, \\ v & \text{if } \ell = j, \\ b & \text{otherwise.} \end{cases}
\end{displaymath}
We construct the following $2 {k \choose 2}$ gadgets. For any $1 \le i < j \le k$ we construct
\begin{align*}
B_{i,j} &\coloneqq \{ (0,\ldots,0) \}, & R_{i,j} &\coloneqq \{ p_{i,u,j,v,0} \mid (u,v) \in E \}, \\
B'_{i,j} &\coloneqq \{ (0,\ldots,0) \}, & R'_{i,j} &\coloneqq \{ p_{i,u,j,v,N} \mid (u,v) \in E \}.
\end{align*}
The cost of these gadgets has the following properties.\footnote{Recall that
    $\cint{x}$ denotes the closest integer to $x$, while $[x]$ denotes
$\{1,\ldots,x\}$.}
\begin{lemma}
  Let $1 \le i < j \le k$.
  For any $\tau \in \R^d$ we have
  \begin{displaymath}
   \EMD_1(B_{i,j}+\tau, R_{i,j}) + \EMD_1(B'_{i,j}+\tau, R'_{i,j}) \ge (d-2)N, 
  \end{displaymath}
  and equality holds if $\tau \in [N]^d$ and $(\tau_i,\tau_j) \in E$.
  Moreover, for any $\tau \in \R^d$ with $(\cint{\tau_i},\cint{\tau_j}) \not\in E$ we have
  \begin{displaymath}
   \EMD_1(B_{i,j}+\tau, R_{i,j}) + \EMD_1(B'_{i,j}+\tau, R'_{i,j}) \ge (d-2)N + 1.
  \end{displaymath}
\end{lemma}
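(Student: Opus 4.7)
The plan is to compute the two EMDs exactly and then bound each summand. Since $|B_{i,j}|=|B'_{i,j}|=1$, the EMD from the translated blue point to the red set is just the minimum $L_1$ distance to a red point. For a candidate red point $p_{i,u,j,v,0} \in R_{i,j}$ the distance is $|\tau_i - u| + |\tau_j - v| + \sum_{\ell \neq i,j} |\tau_\ell|$, and crucially the last sum does not depend on $(u,v)$, so I can pull it out of the minimum. This yields
\[
\EMD_1(B_{i,j}+\tau,R_{i,j}) = \min_{(u,v)\in E}\bigl[|\tau_i-u|+|\tau_j-v|\bigr] + \sum_{\ell\neq i,j}|\tau_\ell|,
\]
and an identical identity for $R'_{i,j}$ with $|\tau_\ell - N|$ in place of $|\tau_\ell|$. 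Adding the two identities gives
\[
 \EMD_1(B_{i,j}+\tau,R_{i,j}) + \EMD_1(B'_{i,j}+\tau,R'_{i,j}) = 2\min_{(u,v)\in E}\bigl[|\tau_i-u|+|\tau_j-v|\bigr] + \sum_{\ell\neq i,j}\bigl(|\tau_\ell|+|\tau_\ell-N|\bigr).
\]

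Next I would bound each part of this expression. The first term is nonnegative, and the elementary inequality $|a|+|a-N|\ge N$ (with equality iff $a\in[0,N]$) applied to each of the $d-2$ coordinates outside $\{i,j\}$ gives the main lower bound $(d-2)N$. For the equality case, if $\tau\in [N]^d$ then each $\tau_\ell\in[0,N]$ so the second sum equals exactly $(d-2)N$; and if additionally $(\tau_i,\tau_j)\in E$ then the minimum in the first term is attained at $(u,v)=(\tau_i,\tau_j)$ with value $0$, giving overall equality to $(d-2)N$.

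For the \emph{moreover} statement, assuming $(\cint{\tau_i},\cint{\tau_j})\notin E$, I need to upgrade the first term from $\ge 0$ to $\ge 1/2$ (doubling then produces the extra $+1$). Let $u^\ast=\cint{\tau_i}$, $v^\ast=\cint{\tau_j}$, so by definition $|\tau_i-u^\ast|,|\tau_j-v^\ast|\le 1/2$. For any $(u,v)\in E$, since $(u^\ast,v^\ast)\notin E$ we have $u\neq u^\ast$ or $v\neq v^\ast$. In the first case, $u$ and $u^\ast$ are distinct integers, so $|u-u^\ast|\ge 1$ and by the reverse triangle inequality $|\tau_i-u|\ge |u-u^\ast|-|\tau_i-u^\ast|\ge 1/2$. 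The second case is symmetric. Hence $|\tau_i-u|+|\tau_j-v|\ge 1/2$ for every $(u,v)\in E$, giving the claimed additional $+1$.

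I do not expect any serious obstacle: the only care point is the tie-breaking convention for $\cint{\cdot}$, but since the bound $|\tau-u^\ast|\le 1/2$ follows directly from $\tau - \cint{\tau} \in (-1/2,1/2]$, the reverse-triangle step goes through with the inequality being non-strict, which is exactly what we need.
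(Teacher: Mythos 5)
Your proposal is correct and follows essentially the same route as the paper: both reduce each gadget's EMD to a minimum over $(u,v)\in E$ plus a $(u,v)$-independent sum over the remaining coordinates, bound that sum by $(d-2)N$ (your $|a|+|a-N|\ge N$ is just a repackaging of the paper's $|\tau_\ell|\ge\tau_\ell$, $|N-\tau_\ell|\ge N-\tau_\ell$ telescoping), and derive the extra $+1$ from the observation that $(\tau_i,\tau_j)$ is within $L_\infty$ distance $1/2$ of $(\cint{\tau_i},\cint{\tau_j})$ and hence at distance at least $1/2$ from every edge of $E$. No gaps.
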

\begin{proof}
  Observe that
  \begin{align*}
  \EMD_1(B_{i,j}+\tau, R_{i,j}) = & \min_{(u,v) \in E} \|(0,\ldots,0)+\tau - p_{i,u,j,v,0}\|_1 \\&= \min_{(u,v) \in E} |\tau_i - u| + |\tau_j - v| + \sum_{\ell \ne i,j} |\tau_\ell|  \\
  &\ge \min_{(u,v) \in E} |\tau_i - u| + |\tau_j - v| + \sum_{\ell \ne i,j} \tau_\ell,
  \end{align*}
  where equality holds if $\tau \in [N]^d$.
  We similarly bound
  \begin{displaymath}
   \EMD_1(B'_{i,j}+\tau, R'_{i,j}) \ge \min_{(u,v) \in E} |\tau_i - u| + |\tau_j - v| + \sum_{\ell \ne i,j} N - \tau_\ell, 
  \end{displaymath}
  where equality holds if $\tau \in [N]^d$. Summing up and bounding the absolute values by 0, we obtain
  \begin{displaymath}
   \EMD_1(B_{i,j}+\tau, R_{i,j}) + \EMD_1(B'_{i,j}+\tau, R'_{i,j}) \ge (d-2) N.
  \end{displaymath}
  If $\tau \in [N]^d$ and $(\tau_i,\tau_j) \in E$, then we can pick $u,v$ with $|\tau_i - u| + |\tau_j - v| = 0$, and we obtain equality.

  Moreover, for any $\tau \in \R^d$ with $(\cint{\tau_i},\cint{\tau_j}) \not\in
  E$, note that since $(\tau_i,\tau_j)$ has $L_\infty$ distance at most $1/2$ to
  $(\cint{\tau_i},\cint{\tau_j})$, it has $L_\infty$ distance at least $1/2$ to any other grid point. In particular, $(\tau_i,\tau_j)$ has $L_\infty$ distance at least $1/2$ to any $(u,v) \in E$. Since $L_\infty$ distance lower bounds $L_1$ distance, we obtain $\min_{(u,v) \in E} |\tau_i - u| + |\tau_j - v| \ge 1/2$.
  This yields
  \begin{align*}
   \EMD_1(B_{i,j}+\tau, R_{i,j}) + \EMD_1(B'_{i,j}+\tau, R'_{i,j}) & \ge 2
   \min_{(u,v) \in E} \Big(|\tau_i - u| + |\tau_j - v|\Big) + (d-2) N\\ &\ge (d-2) N + 1. \qedhere
  \end{align*}
  
\end{proof}

We apply the Gadget Combination Lemma to the gadgets
$B_{i,j},R_{i,j},B'_{i,j},R'_{i,j}$ for $1 \le i < j \le d$. The value of the $\EMDuT_1$ of the resulting point sets $B,R$ is the sum of the costs of the gadgets. Hence, the above lemma implies the following.
If $G$ has a $k$-Clique $v_1,\ldots,v_k$, then $\tau \coloneqq (v_1,\ldots,v_k) \in [N]^d$ has a total cost of ${d \choose 2} \cdot (d-2) N \eqcolon \Lambda$.
On the other hand, if $G$ has no $k$-Clique, then for any $\tau \in \R^d$ there
exist $1 \le i < j \le k$ with $(\cint{\tau_i},\cint{\tau_j}) \not\in E$ (as
otherwise $(\cint{\tau_1},\ldots,\cint{\tau_k})$ would form a $k$-Clique). Thus, each pair of gadgets contributes cost at least $(d-2) N$, and  at least one pair of gadgets contributes cost at least $(d-2) N + 1$, so the total cost is at least ${d \choose 2} \cdot (d-2) N + 1 = \Lambda + 1$.

For any $\eps < 1 / \Lambda$, a $(1+\eps)$-approximation algorithm for
$\EMDuT_1$ could distinguish cost at most $\Lambda$ and cost at least $\Lambda +
1$, and thus would solve the $k$-Clique problem. Hence, if we had a $(1+\eps)$-approximation algorithm for $\EMDuT_1$ running in time $(n/\eps)^{o(d)}$, then by setting $\eps \coloneqq 0.9 / \Lambda$ and observing $n = \Oh(N^2)$, $1/\eps = \Oh(\Lambda) = \Oh(N)$, and $d=k$, we would obtain an algorithm for $k$-Clique running in time $(n/\eps)^{o(d)} = \Oh(N^3)^{o(k)} = N^{o(k)}$, which contradicts ETH by Theorem~\ref{thm:cliqueETH}.

\subsection{Lower Bound for \boldmath$L_1$ Symmetric}
\label{sec:lbhighdim_L1symm}

Now we strengthen the construction to work in the symmetric setting, where the number of blue and red points is equal. To this end, we add more blue points, and for technical reasons we also need to double the number of dimensions.

We are given a $k$-Clique instance $G=([N],E)$. We set the dimension to $d
\coloneqq 2k$. In what follows by $\bar p_{i,u,j,v,b} \in \R^d$ and $q_{i,j} \in
\R^d$ we denote the points with coordinates, for $\ell \in [d]$,
$$ (\bar p_{i,u,j,v,b})_\ell = \begin{cases} u & \text{if } \ell \in \{i,i+k\}, \\ v & \text{if } \ell \in \{j,j+k\}, \\ b & \text{otherwise,} \end{cases} \qquad (q_{i,j})_\ell = \begin{cases} N & \text{if } \ell \in \{i,j\}, \\ -N & \text{if } \ell \in \{i+k,j+k\}, \\ 0 & \text{otherwise.} \end{cases} $$
We construct the following $2 {k \choose 2}$ gadgets. For any $1 \le i < j \le
k$ we construct\footnote{Here, we treat $B_{i,j}$ as a multi-set, containing $|E|-1$ times the points $q_{i,j}$. This can be avoided by adding a tiny perturbation to each copy, which makes $B_{i,j}$ a set without significantly changing any distances.}
\begin{align*}
B_{i,j} &\coloneqq \{ (0,\ldots,0) \} \cup \{ |E|-1 \text{ copies of } q_{i,j} \}, & R_{i,j} &\coloneqq \{ \bar p_{i,u,j,v,0} \mid (u,v) \in E \}, \\
B'_{i,j} &\coloneqq \{ (0,\ldots,0) \} \cup \{ |E|-1 \text{ copies of } -q_{i,j} \}, & R'_{i,j} &\coloneqq \{ \bar p_{i,u,j,v,N} \mid (u,v) \in E \}.
\end{align*}
The cost of these gadgets has the following properties.
\begin{lemma}
  Let $1 \le i < j \le k$.
  For any $\tau \in \R^d$ we have
  $$ \EMD_1(B_{i,j}+\tau, R_{i,j}) + \EMD_1(B'_{i,j}+\tau, R'_{i,j}) \ge ((d+4)|E|-8)N, $$
  with equality if $\tau \in [N]^d$ and $(\tau_i,\tau_j) = (\tau_{i+k},\tau_{j+k}) \in E$.
  For any $\tau \in \R^d$ with $(\cint{\tau_i},\cint{\tau_j}) \not\in E$ we have
  $$ \EMD_1(B_{i,j}+\tau, R_{i,j}) + \EMD_1(B'_{i,j}+\tau, R'_{i,j}) \ge ((d+4)|E|-8)N + 1. $$
\end{lemma}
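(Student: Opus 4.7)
The plan is to decompose each of the two EMDs according to which red point in $R_{i,j}$ (resp.\ $R'_{i,j}$) is matched to the single $(0,\ldots,0)+\tau$ blue point. Since all $|E|-1$ remaining blue points are identical copies of $q_{i,j}+\tau$ (resp.\ $-q_{i,j}+\tau$), the rest of the matching is determined and contributes the same cost for every possible choice of the ``special'' edge $\sigma$. Writing $c_0(u,v), c_1(u,v)$ for the $L_1$ costs of matching $(0,\ldots,0)+\tau$ and $q_{i,j}+\tau$ to $\bar p_{i,u,j,v,0}$, and $c'_0(u,v), c'_1(u,v)$ for the analogous primed costs, this gives
\[
\EMD_1(B_{i,j}+\tau, R_{i,j}) + \EMD_1(B'_{i,j}+\tau, R'_{i,j}) = \sum_{(u,v)\in E}\bigl(c_1(u,v)+c'_1(u,v)\bigr) + \min_{\sigma\in E}\bigl(c_0(\sigma)-c_1(\sigma)\bigr) + \min_{\sigma'\in E}\bigl(c'_0(\sigma')-c'_1(\sigma')\bigr).
\]

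Next I establish the $((d+4)|E|-8)N$ lower bound coordinate-by-coordinate via the triangle inequality. For each $(u,v)\in E$, pairing the two terms of $c_1+c'_1$ at coordinate $\alpha$ and using $|N+x|+|-N+x|\ge 2N$ on each of the four ``doubled'' coordinates $i, i+k, j, j+k$, together with $|\tau_\ell|+|\tau_\ell-N|\ge N$ on the remaining $d-4$ coordinates, gives $c_1(u,v)+c'_1(u,v)\ge (d+4)N$. The same triangle inequality applied term-by-term to the four doubled coordinates of $c_0-c_1$ (and similarly $c'_0-c'_1$) gives $c_0(\sigma)-c_1(\sigma)\ge -4N$ and $c'_0(\sigma')-c'_1(\sigma')\ge -4N$ for every $\sigma,\sigma'\in E$. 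Summing these three bounds yields the claim. For $\tau\in[N]^d$ with $(\tau_i,\tau_j)=(\tau_{i+k},\tau_{j+k})\in E$, choosing $\sigma=\sigma'=(\tau_i,\tau_j)$ makes each of the triangle inequalities tight.

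For the strict improvement when $(\cint{\tau_i},\cint{\tau_j})\notin E$, I compute the slacks in the $c_0-c_1$ and $c'_0-c'_1$ bounds exactly. A short case analysis on the sign of $\tau_\alpha-u$ shows these slacks together equal $2\phi_i(u^*)+2\phi_j(v^*)+2\psi_i(u^{**})+2\psi_j(v^{**})$, where
\[
\phi_i(u) := (u-\tau_i)_+ + (\tau_{i+k}-u)_+, \qquad \psi_i(u) := (\tau_i-u)_+ + (u-\tau_{i+k})_+,
\]
and analogously for $j$. Setting $\delta_i := |\tau_i-\tau_{i+k}|/2$ and letting $\mu_i(u)$ denote the distance from $u$ to the interval $I_i := [\min(\tau_i,\tau_{i+k}), \max(\tau_i,\tau_{i+k})]$, a direct identity yields $\phi_i(u)+\psi_i(u')=2\delta_i+\mu_i(u)+\mu_i(u')$. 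Minimizing the two $\min$'s independently over $(u^*,v^*),(u^{**},v^{**})\in E$, the total slack is at least $4\delta_i+4\delta_j+4M$, where $M := \min_{(u,v)\in E}\bigl(\mu_i(u)+\mu_j(v)\bigr)$.

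Finally I case-split on $\delta_i+\delta_j$. If $\delta_i+\delta_j\ge 1/4$, the slack is already $\ge 1$. Otherwise $\delta_i,\delta_j<1/4$, so $I_i$ has width less than $1/2$ and sits close enough to $\cint{\tau_i}$ that every integer $u\ne\cint{\tau_i}$ satisfies $\mu_i(u)\ge 1/2-2\delta_i$ (and analogously for $j$). The hypothesis $(\cint{\tau_i},\cint{\tau_j})\notin E$ forces every $(u,v)\in E$ to differ from $(\cint{\tau_i},\cint{\tau_j})$ in at least one coordinate, so $M\ge 1/2-2\max(\delta_i,\delta_j)$; hence $\delta_i+\delta_j+M\ge 1/2-|\delta_i-\delta_j|>1/4$ and the slack strictly exceeds $1$. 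The technical core of the argument is the slack identity $\phi_i(u)+\psi_i(u')=2\delta_i+\mu_i(u)+\mu_i(u')$ together with the non-triviality bound $\mu_i(u)\ge 1/2-2\delta_i$ for integer $u\ne\cint{\tau_i}$; both are routine case checks on signs but need to be tracked carefully.
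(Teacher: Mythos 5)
Your argument is correct in substance, and the first half (the $((d+4)|E|-8)N$ bound and the equality case) is essentially the paper's argument in different bookkeeping: both reduce to per-coordinate triangle inequalities on the four special coordinates $i,j,i+k,j+k$, with the linear terms $\pm(\tau_i+\tau_j-\tau_{i+k}-\tau_{j+k})$ cancelling between the primed and unprimed gadgets. Where you genuinely diverge is the $+1$ step. The paper never subtracts $c_1$ from $c_0$; it lower-bounds the cost of each copy of $q_{i,j}$ (resp.\ $-q_{i,j}$) by $4N\pm(\tau_i+\tau_j-\tau_{i+k}-\tau_{j+k})$ and keeps the origin point's cost whole, so its leftover term is $2\min_{(u,v)\in E}\bigl(|\tau_i-u|+|\tau_j-v|+|\tau_{i+k}-u|+|\tau_{j+k}-v|\bigr)$, which contains the full two-sided quantity $|\tau_i-u|+|\tau_j-v|\ge 1/2$; the $+1$ is then immediate. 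Because you subtract $c_1$, your residuals are the one-sided functions $\phi_i,\psi_i$, which individually can vanish at integers other than $\cint{\tau_i}$, and you are forced into the $\delta_i$/$\mu_i$ machinery and the case split on $\delta_i+\delta_j$. That machinery checks out (the identity $\phi_i(u)+\psi_i(u')=2\delta_i+\mu_i(u)+\mu_i(u')$ and the bound $\mu_i(u)\ge 1/2-2\delta_i$ for integer $u\ne\cint{\tau_i}$ are both correct), but it is substantially heavier than the paper's two-line observation. One imprecision to fix: your claimed slack ``identity'' $2\phi_i(u^*)+2\phi_j(v^*)+2\psi_i(u^{**})+2\psi_j(v^{**})$ is not exact. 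The per-coordinate slack of $c_0-c_1$ against $-N$ is, e.g., $2\min\{\max(u-\tau_i,0),\,N\}$, i.e.\ it saturates at $2N$ once $u-\tau_i>N$; dropping the cap overestimates the true slack, which is the wrong direction for a lower bound. This is harmless --- whenever a cap binds at the minimizing $\sigma$, that bracket alone already contributes slack $2N\ge 1$ --- but you should state this case explicitly rather than assert an equality that fails for translations far outside $[0,N]^d$.
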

\begin{proof}
  Note that in any dimension $\ell \not\in\{i,j,i+k,j+k\}$ all points in $B_{i,j}$ and $B'_{i,j}$ have coordinate~0, and thus the contribution of these dimensions to the total cost does not depend on the matching. For $(B_{i,j},R_{i,j})$ such a dimension~$\ell$ contributes $|E| \cdot |\tau_\ell| \ge |E| \tau_\ell$ to the total cost, and for $(B'_{i,j},R'_{i,j})$ it contributes $|E| \cdot |N - \tau_\ell| \ge |E| (N - \tau_\ell)$. Summing up over both gadgets and all $\ell \not\in\{i,j,i+k,j+k\}$ yields total cost at least $(d-4) |E| N$, with equality if $\tau \in [N]^d$.

  In the remainder we can focus on the dimensions $i,j,i+k,j+k$. Projected onto these dimensions, the $L_1$ distance from $q_{i,j} + \tau$ to $\bar p_{i,u,j,v,0}$ is
  $$ |N + \tau_i - u| + |N + \tau_j - v| + |N - \tau_{i+k} + u| + |N - \tau_{j+k} + v|
  \ge 4N + \tau_i + \tau_j - \tau_{i+k} - \tau_{j+k}, $$
  where we used $|x| \ge x$. Furthermore, equality holds if $\tau \in [N]^d$.
  Analogously, one can show that projected onto the dimensions $i,j,i+k,j+k$ the $L_1$ distance from $-q_{i,j} + \tau$ to $\bar p_{i,u,j,v,N}$ is
  $$ \ge 4N - \tau_i - \tau_j + \tau_{i+k} + \tau_{j+k}, $$
  with equality if $\tau \in [N]^d$.
  Similarly, projected onto the dimensions $i,j,i+k,j+k$ the $L_1$ distance from $(0,\ldots,0) + \tau$ to $\bar p_{i,u,j,v,0}$ (or to $\bar p_{i,u,j,v,N}$) is
  $$ |\tau_i - u| + |\tau_j - v| + |\tau_{i+k} - u| + |\tau_{j+k} - v|. $$

  By summing up everything, and noting that the terms $\tau_i + \tau_j - \tau_{i+k} - \tau_{j+k}$ and $- \tau_i - \tau_j + \tau_{i+k} + \tau_{j+k}$ cancel, we obtain
  \begin{align*}
    &\EMD_1(B_{i,j}+\tau, R_{i,j}) + \EMD_1(B'_{i,j}+\tau, R'_{i,j})  \\
    &\ge 2\min_{(u,v) \in E} |\tau_i - u| + |\tau_j - v| + |\tau_{i+k} - u| + |\tau_{j+k} - v| + (|E|-1) \cdot 2 \cdot 4N + (d-4) |E| N,
  \end{align*}
  with equality if $\tau \in [N]^d$. Bounding the absolute values by 0 and simplifying, we obtain
  $$ \EMD_1(B_{i,j}+\tau, R_{i,j}) + \EMD_1(B'_{i,j}+\tau, R'_{i,j}) \ge ((d+4)|E|-8)N, $$
  with equality if $\tau \in [N]^d$ and $(\tau_i,\tau_j) = (\tau_{i+k},\tau_{j+k}) \in E$.

  Moreover, for any $\tau \in \R^d$ with $(\cint{\tau_i},\cint{\tau_j}) \not\in E$, the point $(\tau_i,\tau_j)$ has $L_\infty$ distance at least $1/2$ to any $(u,v) \in E$, and thus $\min_{(u,v) \in E} |\tau_i - u| + |\tau_j - v| \ge 1/2$. In this case we obtain
  $$ \EMD_1(B_{i,j}+\tau, R_{i,j}) + \EMD_1(B'_{i,j}+\tau, R'_{i,j}) \ge ((d+4)|E|-8)N + 2 \cdot \tfrac 12. \qedhere$$
\end{proof}

We apply the Gadget Combination Lemma to the gadgets
$B_{i,j},R_{i,j},B'_{i,j},R'_{i,j}$ for $1 \le i < j \le k$. The value of the $\EMDuT_1$ of the resulting point sets $B,R$ is the sum of the costs of the gadgets. Hence, the above lemma implies the following.
If $G$ has a $k$-Clique $v_1,\ldots,v_k$, then $\tau \coloneqq (v_1,\ldots,v_k,v_1,\ldots,v_k) \in [N]^d$ has a total cost of ${k \choose 2} \cdot ((d+4)|E|-8)N =: \Lambda$.
On the other hand, if $G$ has no $k$-Clique, then for any $\tau \in \R^d$ there
exist $1 \le i < j \le k$ with $(\cint{\tau_i},\cint{\tau_j}) \not\in E$ (as
otherwise $(\cint{\tau_1},\ldots,\cint{\tau_k})$ would form a $k$-Clique). Thus, each pair of gadgets contributes cost at least $((d+4)|E|-8)N$, and  at least one pair of gadgets contributes cost at least $((d+4)|E|-8)N + 1$, so the total cost is at least ${k \choose 2} \cdot ((d+4)|E|-8)N + 1 = \Lambda + 1$.

For any $\eps < 1 / \Lambda$, a $(1+\eps)$-approximation algorithm for
$\EMDuT_1$ could distinguish cost at most $\Lambda$ and cost at least $\Lambda +
1$, and thus would solve the $k$-Clique problem. Hence, if we had a $(1+\eps)$-approximation algorithm for $\EMDuT_1$ running in time $(n/\eps)^{o(d)}$, then by setting $\eps \coloneqq 0.9 / \Lambda$ and observing $n = \Oh(N^2), 1/\eps = \Oh(\Lambda) = \Oh(N^3)$, and $d=\Oh(k)$, we would obtain an algorithm for $k$-Clique running in time $(n/\eps)^{o(d)} = \Oh(N^5)^{o(k)} = N^{o(k)}$, which contradicts ETH by Theorem~\ref{thm:cliqueETH}.

\subsection{Lower Bound for \boldmath$L_\infty$ Symmetric}
\label{sec:lbhighdim_Linftysymm}

In this section we prove Theorem~\ref{thm:lowerbound_eth} for the $L_\infty$ norm, thus finishing the proof of this theorem.

We are given a $k$-Clique instance $G=([N],E)$. We set the dimension to $d \coloneqq 2k+1$. In what follows, by $\hat p_{i,u,j,v,b} \in \R^d$, $q_i \in \R^d$, and $b_{i,j} \in \R^d$ we denote the points with coordinates, for $\ell \in [d]$,
\begin{align*}
(\hat p_{i,u,j,v,b})_\ell &= \begin{cases} u & \text{if } \ell \in \{i,i+k\}, \\ v & \text{if } \ell \in \{j,j+k\}, \\ b & \text{otherwise,} \end{cases} \qquad (b_{i,j})_\ell = \begin{cases} 10N & \text{if } \ell \in \{i,j\}, \\ -10N & \text{if } \ell \in \{i+k,j+k\}, \\ 0 & \text{otherwise,} \end{cases}  \\
(q_i)_\ell &= \begin{cases} 10N & \text{if } \ell \in \{i,i+k\}, \\ 0 & \text{otherwise.} \end{cases}
\end{align*}
We construct the following $4(k-1)k + 2{k \choose 2}$ gadgets. For any $i \in [k], s \in [2(k-1)]$ we construct
\begin{align*}
B_{i,s} &\coloneqq \{ (0,\ldots,0) \}, & R_{i,s} &\coloneqq \{ q_i \}, \\
B'_{i,s} &\coloneqq \{ (0,\ldots,0) \}, & R'_{i,s} &\coloneqq \{ -q_i \}.
\end{align*}
Moreover, for any $1 \le i < j \le k$ we construct\begin{align*}
\hat B_{i,j} &\coloneqq \{ b_{i,j} \} \cup \{ |E|-1 \text{ copies of } (0,\ldots,0,10N) \}, & \hat R_{i,j} &\coloneqq \{ \hat p_{i,u,j,v,0} \mid (u,v) \in E \}, \\
\hat B'_{i,j} &\coloneqq \{ b_{i,j} \} \cup \{ |E|-1 \text{ copies of } (0,\ldots,0,-10N) \}, & \hat R'_{i,j} &\coloneqq \hat R_{i,j}.
\end{align*}
The following two lemmas analyze the properties of these gadgets.
\begin{lemma}
  Let $i \in [k]$ and $s \in [2(k-1)]$.
  For any $\tau \in \R^d$ we have
  $$ \EMD_\infty(B_{i,s}+\tau,R_{i,s}) + \EMD_\infty(B'_{i,s}+\tau,R'_{i,s}) \ge 20N + |\tau_i - \tau_{i+k}|, $$
  with equality if $\tau \in [N]^d$.
\end{lemma}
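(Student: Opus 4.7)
The plan is to observe that since each of $B_{i,s}, R_{i,s}, B'_{i,s}, R'_{i,s}$ is a singleton, the matching is forced, so we just need to compute the $L_\infty$ distance between the unique point of $B_{i,s}+\tau$ and the unique point of $R_{i,s}$, and analogously for the primed pair. Writing out the coordinates of $q_i$, we have
\[
  \EMD_\infty(B_{i,s}+\tau,R_{i,s}) = \|\tau - q_i\|_\infty = \max\bigl(|10N-\tau_i|,\, |10N-\tau_{i+k}|,\, \max_{\ell \notin \{i,i+k\}} |\tau_\ell|\bigr),
\]
and similarly
\[
  \EMD_\infty(B'_{i,s}+\tau,R'_{i,s}) = \max\bigl(|10N+\tau_i|,\, |10N+\tau_{i+k}|,\, \max_{\ell \notin \{i,i+k\}} |\tau_\ell|\bigr).
\]

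For the lower bound I would drop the outer $\max$ down to just the two terms in coordinates $i$ and $i+k$, and then use the elementary inequalities $|x| \ge x$ and $|x| \ge -x$. Concretely,
\[
  \max\bigl(|10N-\tau_i|,\,|10N-\tau_{i+k}|\bigr) \;\ge\; \max\bigl(10N-\tau_i,\,10N-\tau_{i+k}\bigr) \;=\; 10N - \min(\tau_i,\tau_{i+k}),
\]
and symmetrically
\[
  \max\bigl(|10N+\tau_i|,\,|10N+\tau_{i+k}|\bigr) \;\ge\; \max\bigl(10N+\tau_i,\,10N+\tau_{i+k}\bigr) \;=\; 10N + \max(\tau_i,\tau_{i+k}).
\]
Adding these two inequalities and using $\max(a,b)-\min(a,b)=|a-b|$ immediately yields the claimed bound of $20N + |\tau_i-\tau_{i+k}|$.

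For the equality when $\tau \in [N]^d$, I would check that each of the outer maxima is realised in coordinate $i$ or $i+k$ rather than in some other coordinate. Since then $|\tau_\ell| = \tau_\ell \le N$ for $\ell \notin \{i,i+k\}$, while $|10N \pm \tau_i|$ and $|10N \pm \tau_{i+k}|$ are all at least $9N$, the outer maxima collapse precisely to the two-term maxima used above, and the inequalities become equalities because $\tau_i, \tau_{i+k}\in[1,N]\subset[-10N,10N]$ lets us drop the absolute values with fixed signs.

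I do not expect any real obstacle here: the whole argument is a careful unpacking of the $L_\infty$ norm, and the only subtle point is remembering that the auxiliary coordinates $\ell \notin \{i,i+k\}$ in the outer $\max$ are harmless both for the lower bound (they only make the norm larger) and for equality (they are dominated by the $\approx 10N$ terms when $\tau\in[N]^d$). The structurally important feature — and the reason this lemma is worth isolating — is that the dependence on $\tau_i - \tau_{i+k}$ comes out with coefficient exactly $1$; this will presumably be used in the subsequent lemma to force the optimal translation to satisfy $\tau_i \approx \tau_{i+k}$ by amortizing over the $2(k-1)$ copies indexed by $s$.
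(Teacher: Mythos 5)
Your proof is correct and follows essentially the same route as the paper: restrict the $L_\infty$ norm to coordinates $i$ and $i+k$, drop the absolute values with the appropriate signs, and add the two bounds, with the auxiliary coordinates dominated by the $10N$ terms when $\tau \in [N]^d$. Your final combination via $\max(a,b)-\min(a,b)=|a-b|$ is a cosmetic variant of the paper's max-of-sums step and is, if anything, slightly cleaner since it gives the sum of the two maxima exactly.
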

\begin{proof}
  We have
  $$ \EMD_\infty(B_{i,s}+\tau,R_{i,s}) = \|(0,\ldots,0) + \tau - q_i\|_\infty \ge \max\{ 10N - \tau_i, 10N - \tau_{i+k} \}, $$
  with equality if $\tau \in [N]^d$, as then the coordinates involving $10N$ dominate.
  Similarly, we have
  $$ \EMD_\infty(B'_{i,s}+\tau,R'_{i,s}) \ge \max\{ 10N + \tau_i, 10N + \tau_{i+k} \}, $$
  with equality if $\tau \in [N]^d$.
  We bound their sum by
  \begin{align*}
    &\EMD_\infty(B_{i,s}+\tau,R_{i,s}) + \EMD_\infty(B'_{i,s}+\tau,R'_{i,s})  \\
    &\ge \max\{(10N - \tau_i) + (10N + \tau_{i+k}), (10N - \tau_{i+k}) + (10N + \tau_i) \} = 20N + |\tau_i - \tau_{i+k}|.
  \end{align*}
  We observe that equality holds if $\tau \in [N]^d$, as then we have
  \begin{align*}
    &\EMD_\infty(B_{i,s}+\tau,R_{i,s}) + \EMD_\infty(B'_{i,s}+\tau,R'_{i,s})  \\
    &= \max\{(10N - \tau_i) + (10N + \tau_{i+k}), (10N - \tau_{i+k}) + (10N + \tau_i), \\
    &\qquad\quad\;\;\; (10N - \tau_i) + (10N + \tau_i), (10N - \tau_{i+k}) + (10N + \tau_{i+k}) \} = 20N + |\tau_i - \tau_{i+k}|.
    \qedhere
  \end{align*}
\end{proof}

\begin{lemma} \label{lem:intermediateLinftysymm}
  Let $1 \le i < j \le k$.
  For any $\tau \in \R^d$ we have
  $$ \EMD_\infty(\hat B_{i,j}+\tau, \hat R_{i,j}) + \EMD_\infty(\hat B'_{i,j}+\tau, \hat R'_{i,j}) \ge 20N |E| - 2|\tau_i - \tau_{i+k}| - 2|\tau_j - \tau_{j+k}|, $$
  with equality if $\tau \in [N]^d$ and $(\tau_i,\tau_j) = (\tau_{i+k},\tau_{j+k}) \in E$.
  For any $\tau \in \R^d$ with $(\cint{\tau_i},\cint{\tau_j}) \not\in E$ we have
  $$ \EMD_\infty(\hat B_{i,j}+\tau, \hat R_{i,j}) + \EMD_\infty(\hat B'_{i,j}+\tau, \hat R'_{i,j}) \ge 20N |E| - 2|\tau_i - \tau_{i+k}| - 2|\tau_j - \tau_{j+k}| + 1. $$
\end{lemma}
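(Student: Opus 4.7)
My plan is to split each of the two EMDs into the cost of matching $b_{i,j}$ and the cost of matching the $|E|-1$ interchangeable copies. Because every copy in $\hat B_{i,j}$ equals $(0,\ldots,0,10N)$ and every red point has $d$-th coordinate $0$, each copy must pay at least $|10N+\tau_d|$ (from the $d$-th coordinate of the difference vector), regardless of which red point it is matched to. Analogously, each copy in $\hat B'_{i,j}$ costs at least $|10N-\tau_d|$. The elementary identity $|10N+\tau_d|+|10N-\tau_d|\ge 20N$ lets me collapse the copy contribution to at least $20N(|E|-1)$ in total, reducing the EMD-sum to at least
\[
  20N(|E|-1)\ +\ 2\min_{(u,v)\in E}\bigl\|b_{i,j}+\tau-\hat p_{i,u,j,v,0}\bigr\|_\infty.
\]

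The core step is to lower bound $\|b_{i,j}+\tau-\hat p_{i,u,j,v,0}\|_\infty$. Writing $\gamma_i\coloneqq\tau_i-\tau_{i+k}$ and $\gamma_j\coloneqq\tau_j-\tau_{j+k}$, my plan is to combine $|X|\ge X$ applied to coordinates $i$ and $i+k$ with the identity $\max(A,B)=(A+B)/2+|A-B|/2$ to obtain
\[
  \bigl\|b_{i,j}+\tau-\hat p_{i,u,j,v,0}\bigr\|_\infty\ \ge\ \max(10N+\tau_i-u,\,10N-\tau_{i+k}+u)\ =\ 10N+\tfrac{\gamma_i}{2}+\tfrac{|2u-\tau_i-\tau_{i+k}|}{2},
\]
and the symmetric bound for $j,v$, so the norm is at least $10N$ plus the $\max$ of these two expressions. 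Since the inner $\max$ is trivially at least $\gamma_i/2\ge-|\gamma_i|/2\ge-|\gamma_i|-|\gamma_j|$, the first inequality of the lemma follows after doubling and combining with the copy bound. Equality under $\tau\in[N]^d$ and $(\tau_i,\tau_j)=(\tau_{i+k},\tau_{j+k})\in E$ is then witnessed by $(u,v)=(\tau_i,\tau_j)$, which makes the ``large'' coordinates $\{i,j,i+k,j+k\}$ of $b_{i,j}+\tau-\hat p$ equal $\pm 10N$ and keeps all other coordinates at $\le N$ in absolute value.

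The subtler step is the $+1$ refinement when $(\cint{\tau_i},\cint{\tau_j})\notin E$. My plan is to invoke the reverse triangle inequality $|2u-\tau_i-\tau_{i+k}|=|2(u-\tau_i)+(\tau_i-\tau_{i+k})|\ge 2|u-\tau_i|-|\gamma_i|$ to upgrade the lower bound to $\tfrac{\gamma_i}{2}+\tfrac{|2u-\tau_i-\tau_{i+k}|}{2}\ge |u-\tau_i|-|\gamma_i|$ (and symmetrically for $j,v$), and then exploit that the nearest-integer definition forces $|u-\tau_i|\ge 1/2$ whenever $u\ne\cint{\tau_i}$. The hypothesis $(\cint{\tau_i},\cint{\tau_j})\notin E$ guarantees that every edge $(u,v)\in E$ has at least one such mismatch, which pushes the inner $\max$ to at least $1/2-|\gamma_i|-|\gamma_j|$ and furnishes the extra $+1$ after doubling. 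I expect the main obstacle to be the fact that the lemma permits \emph{arbitrary} $\tau\in\R^d$ rather than the ``nice'' range $\tau\in[N]^d$; working solely with the universal inequalities $|X|\ge X$ and the reverse triangle inequality avoids any case split on the sign of $\tau_d$ or on the position of $\tau_i,\tau_j$, which is what makes the clean statement of the lemma tractable.
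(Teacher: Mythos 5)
Your proposal is correct and follows essentially the same route as the paper's proof: the same decomposition into the $|E|-1$ interchangeable copies (each pair contributing $\ge 20N$ via the $d$-th coordinate) plus the point $b_{i,j}$, the same use of $|X|\ge X$ to reduce to the coordinates $i,j,i+k,j+k$, the same extraction of a term $\min_{(u,v)\in E}\max\{|\tau_i-u|,|\tau_j-v|\}$ up to an error of $|\tau_i-\tau_{i+k}|+|\tau_j-\tau_{j+k}|$, and the same $1/2$ bound from the nearest-integer rounding for the $+1$ refinement. The only difference is cosmetic: you package the error term via $\max(A,B)=\tfrac{A+B}{2}+\tfrac{|A-B|}{2}$ and the reverse triangle inequality, whereas the paper manipulates the four-term maximum directly.
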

\begin{proof}
  Since the last coordinate of any point in $\hat R_{i,j}$ (or $\hat R'_{i,j}$) is 0, its distance to $(0,\ldots,0,10N)+\tau$ is at least $10N + \tau_d$, with equality if $\tau \in [N]^d$. Similarly, its distance to $(0,\ldots,0,-10N) + \tau$ is at least $10N - \tau_d$, with equality if $\tau \in [N]^d$. Therefore, all copies of $(0,\ldots,0,10N)$ and $(0,\ldots,0,-10N)$ in total contribute a cost of at least $2 (|E|-1) \cdot 10N$, with equality if $\tau \in [N]^d$, no matter what points in $\hat R_{i,j}$ (or $\hat R'_{i,j}$, resp.) they are assigned to.

  The remaining point $b_{i,j} + \tau$ has distance to $\hat R_{i,j}$ (and to $\hat R'_{i,j}$) of
  \begin{align} \label{eq:intermediateeq}
    \ge \min_{(u,v) \in E} \max\{ 10N + \tau_i - u, 10N + \tau_j - v, 10N - \tau_{i+k} + u, 10N - \tau_{j+k} + v\}.
  \end{align}
  Equality holds for $\tau \in [N]^d$, as then then coordinates involving $10N$ dominate. We further bound (\ref{eq:intermediateeq}) from below by
  \begin{align*}
  &= 10N + \min_{(u,v) \in E} \max\{ \tau_i - u, \tau_j - v, u - \tau_{i+k}, v - \tau_{j+k} \}  \\
  &\ge 10N + \min_{(u,v) \in E} \max\{ \tau_i - u, u - \tau_i, \tau_j - v, v - \tau_j\} - |\tau_i - \tau_{i+k}| - |\tau_j - \tau_{j+k}|  \\
  & = 10N + \min_{(u,v) \in E} \max\{ |\tau_i - u|, |\tau_j - v| \} - |\tau_i - \tau_{i+k}| - |\tau_j - \tau_{j+k}|.
  \end{align*}
  Equality holds if $\tau \in [N]^d$ and $(\tau_i, \tau_j) = (\tau_{i+k}, \tau_{j+k})$.

  Summing up these costs, in total we obtain
  \begin{align*}
    &\EMD_\infty(\hat B_{i,j}+\tau, \hat R_{i,j}) + \EMD_\infty(\hat B'_{i,j}+\tau, \hat R'_{i,j})  \\
    &\ge 20 N |E| + 2 \min_{(u,v) \in E} \Big( \max\{ |\tau_i - u|, |\tau_j - v| \} - |\tau_i - \tau_{i+k}| - |\tau_j - \tau_{j+k}| \Big),
  \end{align*}
  with equality if $\tau \in [N]^d$ and $(\tau_i, \tau_j) = (\tau_{i+k}, \tau_{j+k})$.
  By bounding the $\max$ term by 0, we obtain
  $$ \EMD_\infty(\hat B_{i,j}+\tau, \hat R_{i,j}) + \EMD_\infty(\hat B'_{i,j}+\tau, \hat R'_{i,j}) \ge 20N |E| - 2|\tau_i - \tau_{i+k}| - 2|\tau_j - \tau_{j+k}|. $$
  Equality holds if $\tau \in [N]^d$ and $(\tau_i, \tau_j) = (\tau_{i+k}, \tau_{j+k}) \in E$.

  Moreover, for any $\tau \in \R^d$ with $(\cint{\tau_i},\cint{\tau_j}) \not\in E$, the point $(\tau_i,\tau_j)$ has $L_\infty$ distance at least $1/2$ from any $(u,v) \in E$, and thus $\min_{(u,v) \in E} \max\{ |\tau_i - u|, |\tau_j - v| \} \ge 1/2$. In this case, we obtain cost at least
  $$ 20N |E| + 2\cdot \tfrac 12 - 2|\tau_i - \tau_{i+k}| - 2|\tau_j - \tau_{j+k}|. \qedhere$$
\end{proof}

We now apply the Gadget Combination Lemma to the gadgets
$B_{i,s},R_{i,s},B'_{i,s},R'_{i,s}$ for $i \in [k]$ and $s \in [2(k-1)]$ and $\hat
B_{i,j},\hat R_{i,j},\hat B'_{i,j},\hat R'_{i,j}$ for $1 \le i < j \le k$. The
value of $\EMDuT_\infty$ of the resulting point sets $B,R$ is the sum of the costs
of the gadgets. Hence, the cost can be read off from the above lemmas. Note that for any $i \in [k]$ the terms $|\tau_i - \tau_{i+k}|$ cancel, as there are $2(k-1)$ gadget pairs contributing $+|\tau_i - \tau_{i+k}|$ (one gadget pair for each $s \in [2(k-1)]$), and $k-1$ gadget pairs contributing $-2 |\tau_i - \tau_{i+k}|$ (one gadget pair for each $j \in [k], j \ne i$). The cost thus simplifies to
$$ \EMD_\infty(B+\tau,R) \ge 20N \cdot k \cdot 2(k-1) + 20N |E| \cdot {k \choose 2} \eqcolon \Lambda. $$
We obtain equality for any translation $\tau \in [N]^d$ with $(\tau_i,\tau_j) = (\tau_{i+k},\tau_{j+k}) \in E$ for all $1 \le i < j \le k$.
Therefore, if $G$ has a $k$-Clique $v_1,\ldots,v_k$, then $\tau \coloneqq (v_1,\ldots,v_k,v_1,\ldots,v_k,0) \in [N]^d$ has a total cost of $\Lambda$.
On the other hand, if $G$ has no $k$-Clique, then for any $\tau \in \R^d$ there
exist $1 \le i < j \le k$ with $(\cint{\tau_i},\cint{\tau_j}) \not\in E$ (as
otherwise $(\cint{\tau_1},\ldots,\cint{\tau_k})$ would form a $k$-Clique).
Then according to Lemma~\ref{lem:intermediateLinftysymm} at least one summand has an additional $+1$, so the total cost is at least $\Lambda + 1$.

For any $\eps < 1 / \Lambda$, a $(1+\eps)$-approximation algorithm for
$\EMDuT_\infty$ could distinguish cost at most $\Lambda$ and cost at least
$\Lambda + 1$, and thus would solve the $k$-Clique problem. Hence, if we had a $(1+\eps)$-approximation algorithm for $\EMDuT_\infty$ running in time $(n/\eps)^{o(d)}$, then by setting $\eps \coloneqq 0.9 / \Lambda$ and observing $n = \Oh(N^2), 1/\eps = \Oh(\Lambda) = \Oh(N^3)$, and $d=\Oh(k)$, we would obtain an algorithm for $k$-Clique running in time $(n/\eps)^{o(d)} = \Oh(N^5)^{o(k)} = N^{o(k)}$, which contradicts ETH by Theorem~\ref{thm:cliqueETH}.


\section{Algorithms in Higher Dimensions}
\label{sec:L1_and_L_infty_higherd}

Given two sets $R$ and $B$ of $n$ points in the plane, Eppstein
\etal~\cite{eppstein15improv_grid_map_layout_point_set_match} show how
to compute a translation $\tau^*$ minimizing $\EMDuT_1(B,R)$ with
respect to the $L_1$-distance in $\Oh(n^6\log^3 n)$ time. We observe
that their result can be generalized to point sets in arbitrary
dimension $d$, leading to an $\Oh(m^dn^{d+2}\log^{d+2}n)$ time
algorithm.

Furthermore, we show that our approach can be used to obtain an
$\Oh(m^dn^{d+2}\log^{d+2}n)$ time algorithm for finding a translation
that minimizes $\EMDuT_\infty(B,R)$, i.e. the Earth Mover's Distance
with respect to the $L_\infty$ distance. For point sets in $\R^2$,
this immediately follows by ``rotating the plane by $45^\circ$'' and
using the algorithm for $L_1$. For higher dimensions this trick is no
longer immediately applicable. However, we show that our algorithm can
also directly be applied to the $L_\infty$ distance, even for point
sets in $\R^d$ with $d > 2$.

\paragraph{Earth Mover's Distance without Translation.} We first
describe an algorithm to compute $\EMD_p(B,R)$ in $\mathbb{R}^d$.
Note that we assume to work in the Real RAM model, hence we need a
strongly-polynomial algorithm. Naively, one can achieve that in $\Oh(m^2n)$ time by computing the bipartite graph, and solving maximum weight matching in bipartite graph in strongly polynomial time by Edmonds and Karp~\cite{edmonds1972theoretical}. Here, however, we can use the fact that points are in $\mathbb{R}^d$. To the best of our knowledge, the best algorithm
in this setting is due to Vaidya~\cite{vaidya89geomet_helps_match}. However,
he only considers the case when both point sets are in $\R^2$ and have size
$n=m$ in $\R^2$. He shows that one can compute $\EMD_p(B,R)$ (with $p
\in \{1,\infty\}$) in $\Oh(n^2 \log^3
n)$ time in this setting. Furthermore, he states (without proof) that for point
sets in $\R^d$, that the running time increases by at most $\Oh(\log^d
n)$. Next, we briefly sketch the algorithm and fill in
the missing details for the higher-dimensional setting, to obtain the
following result:

\begin{restatable}{theorem}{fixedSetsAlgorithm}
  \label{thm:l1_fixed_sets}
  Given a set $B$ of $m$ points in $\R^d$, and a set of $n \geq m$ red
  points in $\R^d$, there is an $\Oh(n^2\log^{d+2}n)$ time algorithm
  to compute $\EMD_p(B,R)$, for $p \in \{1,\infty\}$.
\end{restatable}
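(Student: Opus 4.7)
The plan is to lift Vaidya's $\Oh(n^2\log^3 n)$ planar algorithm for the symmetric case to (i) arbitrary constant dimension $d$ and (ii) the asymmetric setting $m=|B|\le n=|R|$. The top-level scheme is the classical primal--dual (Hungarian) method for min-cost bipartite matching: maintain dual potentials $\pi\colon B\cup R\to\R$ keeping every reduced cost $\|b-r\|_p-\pi(b)-\pi(r)$ non-negative, and in each of $m$ phases compute a shortest augmenting path in the reduced-cost graph, augment along it, and update potentials. After $m$ phases the matching saturates $B$ and is optimal, so the only question is how to implement each phase efficiently.

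The key observation, due to Vaidya, is that the Dijkstra relaxations during one phase amount to the following geometric primitive: given tentative distances $d(\cdot)$ on already-scanned blue points together with potentials on the red points, determine the unscanned red point $r$ that minimises $\min_{b\text{ scanned}} d(b)+\|b-r\|_p-\pi(b)-\pi(r)$. I will support this with a dynamic weighted nearest-neighbour data structure, thereby avoiding the $\Theta(mn)$ edges of the bipartite graph explicitly.

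For $L_1$ I will decompose each query by the orthant of $r-b$; there are $2^d=\Oh(1)$ such orthants and within one of them $\|b-r\|_1=\sum_i\sigma_i(r_i-b_i)$ is linear, so the query reduces to a weighted dominance range-minimum over the red points, which a standard $d$-dimensional range tree answers in $\Oh(\log^d n)$ per query and per update. For $L_\infty$ the orthant trick does not apply directly, so I will instead guess which coordinate $i$ and which sign attains $\|b-r\|_\infty=\max_i|b_i-r_i|$; this yields $2d$ half-space range-min subqueries, each solvable with the same $d$-dimensional range tree in $\Oh(\log^d n)$ time. This is the higher-dimensional analogue that Vaidya merely asserts for $d\ge 3$.

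Putting the pieces together, each augmenting-path phase performs $\Oh(n)$ priority-queue operations at $\Oh(\log n)$ each and $\Oh(n)$ geometric queries/updates at $\Oh(\log^d n)$ each, totalling $\Oh(n\log^{d+1}n)$ per phase; with the asymmetric handling (adding a virtual super-sink over the $n-m$ still-unmatched red points and rebuilding the range tree between phases to reflect deletions) absorbing one additional $\log n$ factor, $m\le n$ phases yield the stated $\Oh(n^2\log^{d+2}n)$ bound. The main obstacle I anticipate is step three: carefully verifying the $L_\infty$ range-tree analysis in $\R^d$, since the clean orthant decomposition used for $L_1$ is not available and one must argue that the coordinate-and-sign enumeration still interacts correctly with the potential-adjusted weights stored in the tree. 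The rest is a routine, if technical, adaptation of Vaidya's framework.
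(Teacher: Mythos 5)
Your proposal follows essentially the same route as the paper's proof: Vaidya's implementation of the Hungarian method, dummy points\,/\,a sink to absorb the asymmetry $m<n$, and a dynamic weighted nearest-neighbour structure built from $d$-dimensional range trees via an orthant decomposition, arriving at the same $\Oh(n^2\log^{d+2}n)$ bound.

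There are two places where the sketch glosses over something the paper has to work for. First, the inner-loop primitive is not a single-source weighted NN query: you need, over the unscanned red points, the minimum of $\min_{b\ \mathrm{scanned}} d(b)+\|b-r\|_p-\pi(b)-\pi(r)$, where the set of scanned blue points \emph{grows} during the phase. A weighted NN structure over $R$ answers queries from one fixed blue point; turning it into a structure that maintains this bichromatic minimum under insertions of (scanned) blue points and deletions of (scanned) red points is exactly the dynamic bichromatic closest pair problem, and the paper invokes Chan's black-box reduction for it. That reduction is also where the two extra logarithms come from (amortized insertion $\Oh(\log^{d+1}n)$ and deletion $\Oh(\log^{d+2}n)$ on top of the $\Oh(\log^d n)$ NN structure); your accounting ($\log^{d+1}$ per phase plus one more $\log$ ``absorbed by the asymmetric handling'') lands on the right exponent but not for the right reason --- the asymmetric case costs nothing beyond the dummy points. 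Second, for $L_\infty$ the region in which coordinate $i$ with sign $s$ attains the maximum is not a halfspace but a cone cut out by the $2(d-1)$ inequalities $s(r_i-b_i)\ge \pm(r_j-b_j)$ for $j\ne i$; you must restrict the range-minimum query to that cone, since outside it the linear proxy $s(r_i-b_i)$ \emph{under}estimates the true distance and you are taking a minimum. After a change of variables this is an orthogonal range query over $2(d-1)$ derived coordinates, i.e.\ $\Oh(\log^{2d-2}n)$ rather than $\Oh(\log^{d}n)$ per query for $d\ge 3$. The paper is admittedly also terse here (``appropriate orthants''), but your specific claim that these are halfspace queries answerable by the same $d$-level tree is not correct as stated, and it is precisely the obstacle you yourself flagged at the end.
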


\begin{proof}
  Vaidya's algorithm is a particular implementation of the Hungarian
  Method~\cite{kuhn1955hungarian}. Hence, to apply it we introduce
  $n-m$ additional ``dummy'' points in $B$. We define the distance
  from a dummy point to any other point in $R$ to be zero. We stress that the
  dummy points are only present in the graph representation, and are not
  physically in $\R^d$. The main
  algorithm proceeds in phases, in each of which the current matching
  grows by one new pair. Hence, the algorithm completes after $n$
  phases. In each phase, each point $q \in B \cup R$ is assigned a
  weight $w_q$ (the current value of the point in the dual
  LP-formulation of the problem). The algorithm then maintains a
  subset $R'$ of red points (initially $R'=R$), and a subset $B'$ of
  blue points (initially, $B'$ is the set of unmatched blue points),
  and repeatedly computes the bichromatic closest pair $(b^*,r^*)$
  with respect to the weighted distance function
  $d(b,r)=\|b-r\|_p - w_b - w_r$. This pair is either added to the
  matching (if $r^*$ is also unmatched), or the weights of $b^*$ and
  $r^*$ are updated, $r^*$ is removed from $R'$, and $b^*$ is added to
  $B'$. Since there are only $n$ points in one set, a phase consists of at
  most $\Oh(n)$ such steps. Hence, given a dynamic data structure
  storing $R'$ and $B'$ that
  \begin{itemize}[noitemsep]
  \item maintains the bichromatic (weighted) closest pair among $R'
    \cup B'$,
  \item can be built in $P(n)$ time (and thus uses at most $P(n)$ space),
  \item an insertion of a blue point into $B'$ in (amortized) $I(n)$ time, and
  \item a deletion of a red point from $R'$ in (amortized) $D(n)$ time,
  \end{itemize}
  the algorithm runs in $\Oh(n(P(n) + n(I(n)+D(n))))$ time. As we
  argue next, for (weighted) points in $\R^d$, there is such a data
  structure with $P(n)=\Oh(n\log^d n)$, and
  $I(n) \leq D(n)=\Oh(\log^{d+2} n)$. The theorem then follows.

  We describe a dynamic data structure for weighted nearest neighbor
  (NN) queries and then apply a recent result of
  Chan~\cite{chan20dynam_gener_closes_pair} to turn this into a
  dynamic bichromatic closest pair data structure.

  Let $P'$ be a set of weighted points in $\R^d$, let $q$ be a
  weighted point, and consider the positive orthant
  $A(q) = \{ a \mid q_i \leq a_i \text{ for all } i \in \{1,\ldots,d\}\}
  \subset \R^d$ with respect to $q$ (i.e. all points dominating
  $q$). Let $w'_z = -w_{p'} + \sum_{i=1}^d p'_i$ (see
  Figure~\ref{fig:weighted_nn_ds}).  Now observe that the point
  $p^* \in P' \cap A(q)$ minimizing $\|p'-q\| - w_{p'} - w_q$ is the
  point in $A(q)$ with minimum $w'_{p'}$ value.

  \begin{figure}[tb]
    \centering
    \includegraphics{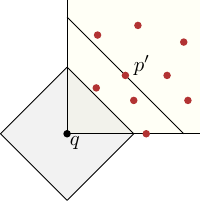}
    \caption{An example of a $L_1$ unit ball centered at $q$ in
      $\R^2$. Observe that all points on the line through $p'$ with
      slope minus one have the same value $\sum_{i=1} p'_i$. For a
      weighted point $p' \in P'$ that lies in $A(q)$ (the yellow
      region), the weighted $L_1$-distance between $q$ and $p'$ is
      given by $w'_p - w'_q$, hence the point from $P' \cap A(q)$ with
      minimum $w'_p$ value is the nearest neighbor of $q$.  }
    \label{fig:weighted_nn_ds}
  \end{figure}

  So, we store the points in $P'$ in a $d$-dimension range
  tree~\cite{preparata2012computational}, in which every subtree is
  annotated with the point with the minimum $w'_p$ value among its
  descendants.  We can query the range tree for the nearest point
  among $P'\cap A(q)$ with respect to the weighted $L_1$ distance in
  $\Oh(\log^d n)$ time. By implementing the trees using generalized
  balanced trees~\cite{andersson99gener_balan_trees} we can support
  insertions and deletions in amortized $\Oh(\log^d n)$ time as
  well. We use an analogous approach for the other $2^d-1$ orthants
  around $q$, thus allowing us to answer weighted NN queries with
  respect to the $L_1$ distance in the same time as above. We can
  handle the (weighted) $L_\infty$ distance analogously by defining
  appropriate ``orthants'' around $q$. If $P'$ contains any dummy
  points, we store them separately in a binary search tree, ordered by
  weight. The minimum weight dummy point is an additional candidate
  nearest neighbor for $q$, and we return the overall closest point to
  $q$. When $q$ itself would be a dummy point (and thus has no real
  location), we simply obtain the minimum weight point among $P'$ as
  the nearest point. Hence, it follows we can build the structure in
  $P_0(n)=O(n\log^d n)$ time, query the nearest neighbor in
  $Q_0(n)=O(\log^d n)$ time, and delete (and insert) points in
  $D_0(n)=O(\log^d n)$ time.

  Chan~\cite{chan20dynam_gener_closes_pair} shows how to turn a
  dynamic nearest neighbor searching data structure with preprocessing
  time $P_0(n)$, query time $Q_0(n)$, and deletion time $D_0(n)$ into
  a fully dynamic bichromatic closest pair data structure. In our
  particular case we obtain
  \begin{itemize}[noitemsep]
  \item construction time $P(n)=\Oh(nQ_0(n)+P_0(n))=\Oh(n\log^d n)$,
  \item amortized insertion time $I(n)=\Oh(Q_0(n)\log n + (P_0(n)/n)\log
    n)=\Oh(\log^{d+1}n)$, and
  \item amortized deletion time $D(n)=\Oh(Q_0(n)\log^2 n +
    (P_0(n)/n)\log^2 n + D_0(n)\log n) = \Oh(\log^{d+2} n)$.
  \end{itemize}
  Plugging this into Vaidia's algorithm we thus obtain an
  $\Oh(n^2\log^{d+2}n)$ time algorithm to compute
  $\EMD(B,R)$.
\end{proof}

\paragraph{Earth Mover's Distance under Translation in $L_1$.} The
sets $B$ and $R$ are aligned in dimension~$i$, or \emph{$i$-aligned}
for short, if there is a pair of points $b \in B, r \in R$ for which
$b_i=r_i$. Eppstein
\etal~\cite{eppstein15improv_grid_map_layout_point_set_match} show
that for two point sets in $\R^2$, there exists an optimal translation
$\tau^*$ that aligns $B$ and $R$ in both dimensions. They explicitly
consider all $\Oh((nm)^2)$ translations that both $1$-align and
$2$-align $B+\tau$ and $R$. For each such a translation $\tau$,
computing an optimal matching can then be done in $\Oh(n^2\log^3 n)$
time~\cite{vaidya89geomet_helps_match}, thus leading to an
$\Oh(n^4 m^2\log^3 n)$ time algorithm. We now argue that we can
generalize the above result to higher dimensions.

\begin{restatable}{theorem}{l1higherDEmdUnderTranslationAlgorithm}
  \label{thm:l1_solution}
  Given $B$ and $R$ we can find an optimal translation $\tau^*$
  realizing $\EMDuT_1(B,R)$ in $\Oh(m^dn^{d+2}\log^{d+2}n)$ time.
\end{restatable}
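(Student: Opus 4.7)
The plan is to generalize the two key ingredients from the Eppstein \etal algorithm: (i) reduce the continuous search over $\tau \in \R^d$ to a finite set of candidate translations that align $B$ and $R$ in every coordinate, and (ii) evaluate $\EMD_1$ at each candidate using the dynamic bichromatic closest pair machinery of \cref{thm:l1_fixed_sets}. The total running time will then be the product of the candidate count and the cost per evaluation.

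First, I will show that there exists an optimal translation $\tau^\ast$ that is $i$-aligned for every $i \in \{1,\ldots,d\}$. Fix any optimal pair $(\phi^\ast,\tau^\ast)$ achieving $\EMDuT_1(B,R)$. Since the $L_1$ norm decomposes coordinatewise,
\begin{displaymath}
\D_{B,R,1}(\phi^\ast,\tau) \;=\; \sum_{i=1}^{d} \sum_{b\in B}\bigl|b_i+\tau_i-\phi^\ast(b)_i\bigr|,
\end{displaymath}
the minimization over $\tau$ for the fixed matching $\phi^\ast$ separates into $d$ independent one-dimensional problems. In the $i$-th problem, the optimum is attained when $\tau_i$ is any median of the multiset $\{\phi^\ast(b)_i - b_i : b \in B\}$. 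Choosing the median to be one of its defining values gives $\tau_i = \phi^\ast(b^{(i)})_i - b^{(i)}_i$ for some $b^{(i)} \in B$, which means $b^{(i)}_i + \tau_i = \phi^\ast(b^{(i)})_i \in \{r_i : r \in R\}$. Hence we may replace $\tau^\ast$ by an optimal translation that is $i$-aligned in every dimension $i$, without increasing the cost.

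Consequently, the set of candidate translations to try is
\begin{displaymath}
T \;=\; \bigl\{(r^{(1)}_1 - b^{(1)}_1,\; r^{(2)}_2 - b^{(2)}_2,\; \ldots,\; r^{(d)}_d - b^{(d)}_d) \;:\; b^{(i)} \in B,\; r^{(i)} \in R \text{ for all } i\bigr\},
\end{displaymath}
which has size at most $(mn)^d = m^d n^d$. For each $\tau \in T$ we compute $\EMD_1(B+\tau, R)$ using \cref{thm:l1_fixed_sets}, which runs in $\Oh(n^2 \log^{d+2} n)$ time, and return the minimum value (together with the matching achieving it). The total running time is
\begin{displaymath}
\Oh\bigl(m^d n^d \cdot n^2 \log^{d+2} n\bigr) \;=\; \Oh\bigl(m^d n^{d+2} \log^{d+2} n\bigr),
\end{displaymath}
as claimed.

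The only delicate point is the alignment argument: one has to be careful that the per-coordinate decomposition is valid and that the median can be chosen to be one of the values in the multiset (which is standard, since the sum of absolute values is minimized on a closed interval whose endpoints are medians). No other obstacle arises, because \cref{thm:l1_fixed_sets} already handles the geometry of $\R^d$ inside the EMD evaluation, so the outer enumeration is a clean product construction.
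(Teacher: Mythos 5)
Your proposal is correct and takes essentially the same approach as the paper: the candidate set $T$ you enumerate is exactly the vertex set of the arrangement of axis-parallel hyperplanes $\tau_i = r_i - b_i$ that the paper constructs, and the per-candidate evaluation via \cref{thm:l1_fixed_sets} is identical, yielding the same $\Oh(m^d n^{d+2}\log^{d+2}n)$ bound. The only (minor) difference is in how the restriction to $T$ is justified --- you use the coordinatewise median characterization of the one-dimensional $L_1$ minimizer for a fixed matching, while the paper argues via piecewise linearity of the cost over the cells of the arrangement; both arguments are valid.
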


\begin{proof}
  Recall the definition of the cost function
  \[
    \D_{B,R,1}(\phi,\tau) =
    \sum_{b \in B} L_1(b+\tau,\phi(b)) = \sum_{b \in B}\sum_{i=1}^d
    |b_i + \tau_i - \phi(b)_i|.
  \]
  For a fixed matching $\phi$, this is a piecewise linear function in
  $\tau$. In particular, $\D_{B,R,1}(\phi,\tau)$ is a sum of piecewise
  linear functions $f_{b,i}(\tau) = |b_i+\tau_i-\phi(b)_i|$. For each
  such a function there is a hyperplane $h_{b,\phi(b),i}$ in $\R^d$
  given by the equation $\tau_i + b_i -\phi(b)_i = 0$, so that for a
  point (translation) $\tau \in \R^d$ on one side of (or on) the
  hyperplane, $f_{b,i}(\tau)$ is linear in $\tau$ (i.e. on
  one side we have $f(\tau) = \tau_i + b_i -\phi(b)_i$, whereas on the
  other side we have $f(\tau) = -\tau_i - b_i +\phi(b)_i$). Let
  $H_\phi = \{h_{b,\phi(b),i} \mid b \in B, i \in \{1,\ldots,d\}\}$ denote the
  set of all such hyperplanes, and consider the arrangement
  $\A(H_\phi)$. It follows that in each cell of $\A(H_\phi)$, the
  function $\D_{B,R,1}(\phi,\tau)$ is a linear function in $\tau$, and that
  $\D_{B,R,1}(\phi,\tau)$ thus has its minimum at a vertex of
  $\A(H_\phi)$.

  We extend the set of hyperplanes $H_\phi$ to include the hyperplane
  $h_{b,r,i}$ for \emph{every} pair $(b,r) \in B\times R$, and every
  $i\in \{1,\ldots,d\}$, rather than just the pairs $(b,\phi(b))$. Let $H$ be
  the resulting set. A minimum of $\D_{B,R,1}(\phi,\tau)$
  still occurs at a vertex of $\A(H)$ (as $\A(H)$ includes
  all vertices of $\A(H_\phi)$). Moreover, observe that $H$ now
  actually contains the hyperplanes $H_\phi$, for \emph{every}
  matching $\phi \in \Phi$, so also those of an optimal matching
  $\phi^*$. It thus follows that such a global minimum
  $\D_1(\phi^*,\tau^*)$ occurs at a vertex $\tau^*$ of $\A(H)$.

  So, to compute an optimal matching $\phi^*$ and its $\tau^*$ (and thus $\EMDuT(B,R)$) we
  can
  \begin{enumerate}[noitemsep]
  \item explicitly compute (all vertices of) $\A(H)$,
  \item for each such a vertex $\tau \in \A(H)$ (which is some
    candidate translation), compute an optimal matching $\phi_\tau$ between the sets
    $B + \tau$ and $R$, and
  \item report the matching (and corresponding translation) that minimizes total
      cost.
  \end{enumerate}

  The set $H$ contains $mnd$ hyperplanes, and thus $\A(H)$ contains
  $\Oh((mnd)^d)=\Oh(m^dn^{d})$ vertices. Computing $\A(H)$ takes
  $\Oh(m^dn^d)$
  time~\cite{bkos-cgaa-08,edelsbrunner86const_arran_lines_hyper_applic}. For
  each such a vertex (translation), we can compute an optimal matching
  in $\Oh(n^2\log^{d+2}n))$ time using the algorithm from
  Theorem~\ref{thm:l1_fixed_sets}. This thus yields an
  $\Oh(m^dn^{d+2}\log^{d+2}n)$ time algorithm in total.
\end{proof}

\paragraph{EMD under translation in $L_\infty$.} 
We present the following analog for $L_\infty$.

\begin{restatable}{theorem}{lInftyHigherDimAlgorithm}
  \label{thm:linfty_solution}
  Given $B$ and $R$ we can find an optimal translation $\tau^*$
  realizing $\EMDuT_\infty(B,R)$ in
  $\Oh(m^dn^{d+2}\log^{d+2}n)$ time.
\end{restatable}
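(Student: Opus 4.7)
The plan is to generalize the approach of \cref{thm:l1_solution} from $L_1$ to $L_\infty$. For any fixed matching $\phi$, the cost
\[
    \D_{B,R,\infty}(\phi,\tau) = \sum_{b \in B} \max_{i \in \{1,\ldots,d\}} |b_i + \tau_i - \phi(b)_i|
\]
is still piecewise linear in $\tau$, but there are now two sources of breakpoints rather than one.

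First I would identify these breakpoints. For each matched pair $(b,\phi(b))$, a breakpoint of the inner $\max$ arises either (i) where some absolute value changes sign, namely on the hyperplane $\tau_i = \phi(b)_i - b_i$ for some $i$, or (ii) where two entries tie for the $\max$, namely where $|b_i + \tau_i - \phi(b)_i| = |b_j + \tau_j - \phi(b)_j|$ for some $i \neq j$. Unfolding the absolute values in case (ii) yields the two hyperplanes $\tau_i - \tau_j = (\phi(b)_i - b_i) - (\phi(b)_j - b_j)$ and $\tau_i + \tau_j = (\phi(b)_i - b_i) + (\phi(b)_j - b_j)$. Thus each pair $(b,\phi(b))$ contributes $d + 2\binom{d}{2} = d^2$ hyperplanes, and on any region of $\R^d$ avoiding all of them the cost is a linear function of $\tau$.

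Next I would set $H$ to be the union of all such hyperplanes over every pair $(b,r) \in B \times R$ (not only matched pairs), so that $|H| = \Oh(mnd^2)$. Crucially, $H$ contains all breakpoint hyperplanes of $\D_{B,R,\infty}(\phi,\cdot)$ simultaneously for \emph{every} matching $\phi$. Hence on each full-dimensional cell of the arrangement $\A(H)$ the function $\D_{B,R,\infty}(\phi,\cdot)$ is linear in $\tau$, so that for any matching $\phi$ its minimum over $\R^d$ is attained at some vertex of $\A(H)$. In particular, writing $\phi^*$ and $\tau^*$ for an optimal matching--translation pair, $\tau^*$ can be taken to be a vertex of $\A(H)$.

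The algorithm then proceeds analogously to \cref{thm:l1_solution}: explicitly construct $\A(H)$, enumerate its $\Oh((mnd^2)^d) = \Oh(m^d n^d)$ vertices, and for each vertex $\tau$ apply \cref{thm:l1_fixed_sets} to compute $\EMD_\infty(B+\tau,R)$ in $\Oh(n^2 \log^{d+2} n)$ time, returning the translation of minimum cost. Constructing the arrangement fits within this budget by standard hyperplane arrangement algorithms. The main conceptual point compared to the $L_1$ case is recognizing and counting the extra tie-type hyperplanes of kind (ii); once these are incorporated into $H$, the remainder of the argument is a direct adaptation of \cref{thm:l1_solution} and yields the claimed $\Oh(m^d n^{d+2} \log^{d+2} n)$ running time.
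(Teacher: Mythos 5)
Your proposal is correct and follows essentially the same route as the paper: the same set $H$ of $\Oh(mnd^2)$ hyperplanes (sign-change hyperplanes plus the tie hyperplanes from $|b_i+\tau_i-r_i|=|b_j+\tau_j-r_j|$), the same argument that an optimal translation lies at a vertex of $\A(H)$, and the same enumeration of $\Oh(m^dn^d)$ vertices with Theorem~\ref{thm:l1_fixed_sets} applied at each. Your explicit unfolding of the tie condition into the two hyperplanes $\tau_i-\tau_j=\cdots$ and $\tau_i+\tau_j=\cdots$ is a slightly more detailed account of the $\Oh(d^2)$ count that the paper states more briefly.
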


\begin{proof}
  We use a similar approach as in Theorem~\ref{thm:l1_solution};
  i.e. we prove that there is a set $H$ of $\Oh(mnd^2)$ hyperplanes in
  $\R^d$, so that for any matching $\phi$, there is a minimum cost
  translation that is a vertex of the arrangement $\A(H)$. We can thus
  again compute such an optimal matching (and the translation) by
  trying all $\Oh(m^dn^d)$ vertices. This yields an
  $\Oh(m^dn^{d+2}\log^{d+2}n)$ time algorithm
  as claimed. What remains is to describe the set of hyperplanes $H$.

  Fix a matching $\phi$. We have
  \[
    \D_{B,R,\infty}(\phi,\tau) = \sum_{b \in B}
    L_\infty(b+\tau,\phi(b)) = \sum_{b \in B} \max_{i=1}^d |b_i +
    \tau_i - \phi(b)_i|,
  \]
  which is again a piecewise linear function in $\tau$, as it is a sum
  of piecewise linear functions. Each individual function is now of
  the form $f_{b,r} = \max_{i=1}^d |b_i + \tau_i - r_i|$, where
  $b \in B$, and $r=\phi(b)$. For each such a function, there are now
  at most $\Oh(d^2)$ hyperplanes that subdivide $\R^d$ into regions in
  which $f_{b,r}$ is given by a single linear function. In particular,
  the $d$ hyperplanes given by $\tau_i-b_i+r_i=0$, for any
  $i \in \{1,\ldots,d\}$, and $\Oh(d^2)$ hyperplanes that we get from solving
  $|b_i+\tau_i-r_i|=|b_j+\tau_j-r_j|$ for each $j \neq i$ (for example
  $\tau_i- \tau_j + b_i - b_j - r_i + r_j=0$). Let $H$ denote the
  resulting set of $\Oh(mnd^2)$ hyperplanes over all points $b \in B$,
  $r \in R$, and all $i \in \{1,\ldots,d\}$. It again follows that for any
  matching $\phi$, there is a vertex of $\A(H)$ that corresponds to a
  translation $\tau_\phi$ that minimizes
  $\D_{B,R,\infty}(\phi,\tau)$. Since this holds for an optimal
  matching $\phi^*$, we can thus compute $\EMDuT(B,R)$ in
  $\Oh(m^dn^{d+2}\log^{d+2}n)$ time.
\end{proof}


\bibliographystyle{plain}
\bibliography{bib}

\end{document}